\newtheorem{thm}{Theorem}[section]
\newtheorem{lem}[thm]{Lemma}
\newtheorem{prop}[thm]{Proposition}
\theoremstyle{definition}
\theoremstyle{remark}
\numberwithin{equation}{section}
\newenvironment{proofofthm}[1]{\noindent {\em Proof of Theorem #1: }\ignorespaces}{}
\newenvironment{proofoflm}[1]{\noindent {\em Proof of Lemma #1: }\ignorespaces}{}
\newcommand{\B}{{\mathbb B}}
\newcommand{\C}{{\mathbb C}}
\newcommand{\E}{{\mathbb E}}
\newcommand{\N}{{\mathbb N}}
\newcommand{\dpc}{\mathsf{C}}
\newcommand{\calH}{\mathcal{H}}
\newcommand{\calF}{\mathcal{F}}
\newcommand{\calD}{\mathcal{D}}
\newcommand{\calI}{\mathcal{I}}
\renewcommand{\P}{{\mathbb P}}
\newcommand{\e}{{\epsilon}}
\newcommand{\topic}[1]{\noindent{\underline{#1} }}
\newcommand{\poly}{\mathrm{poly}}
\newcommand{\Var}{\operatorname{Var}}
\newcommand{\ol}{\overline}
\newcommand{\wt}{\widetilde}
\newcommand{\wh}{\widehat}
\newcommand{\SK}{\textsf{SK}}
\newcommand{\SSP}{\textsf{BOSP-KC}}
\newcommand{\BOSP}{\textsf{BOSP}}
\newcommand{\SKC}{\textsf{SK-Can}}
\newcommand{\USK}{\textsf{SK-U}}
\newcommand{\SKCRC}{\textsf{SK-CC}}
\newcommand{\SBP}{\textsf{SBP}}
\newcommand{\TPM}{\textsf{TPM}}
\newcommand{\EUM}{\textsf{EUM}}
\newcommand{\EUMMono}{\textsf{EUM-Mono}}
\newcommand{\GSK}{\textsf{GenSK}}
\newcommand{\sig}{\mathsf{Sg}}
\newcommand{\conf}{\mathsf{Cf}}
\newcommand{\confl}{\mathsf{Ar}}
\newcommand{\DP}{\mathrm{DP}}
\newcommand{\segment}{\mathsf{seg}}
\newcommand{\size}{\mathrm{s}}
\newcommand{\dsize}{\mathcal{S}}
\newcommand{\Sol}{\mathsf{SOL}}
\newcommand{\CF}{\mathcal{C}}
\renewcommand{\d}{\mathrm{d}}
\newcommand{\A}{\mathfrak{A}}
\newcommand{\Pois}{\mathrm{Pois}}
\newcommand{\opt}{\mathsf{OPT}}
\newcommand{\hsigma}{\widehat{\sigma}}
\newcommand{\tchi}{\widetilde{\chi}}
\newcommand{\multi}{$\mathsf{Multi}$}
\newcommand{\exactA}{$\mathsf{Exact}$-$\mathfrak{A}$}
\newcommand{\multiA}{$\mathsf{Multi}$-$\mathfrak{A}$}
\newcommand{\MC}{$\mathsf{MC}$}
\newcommand{\MB}{$\mathsf{MB}$}
\newcommand{\MI}{$\mathsf{MI}$}
\newcommand{\eat}[1]{}
\title{{ 
\LARGE Stochastic Combinatorial Optimization via Poisson Approximation}
\thanks{Institute for Interdisciplinary Information Sciences,
Tsinghua University, China.} %
}
\author{ 
Jian Li~\footnotemark[2] \quad\quad\quad\quad Wen Yuan
\thanks{
Email: \{lijian83@mail, yuan-w10@mails\}.tsinghua.edu.cn
}
}
\date{}
\begin{document}

\pagenumbering{gobble}
\begin{titlepage}

\maketitle
\vspace{-1.2cm}
{\tiny
\nonumber
\begin{abstract}
We study several stochastic combinatorial problems, including
the expected utility maximization problem, the stochastic knapsack problem
and the stochastic bin packing problem.
A common technical challenge in these problems
is to optimize some function (other than the expectation)
of the sum of a set of random variables.
The difficulty is mainly due to the fact that the probability distribution of the sum
is the convolution of a set of distributions, which is not an easy objective function
to work with.
To tackle this difficulty, we introduce the Poisson approximation technique.
The technique is based on the Poisson approximation theorem
discovered by Le Cam, which enables us to approximate
the distribution of the sum of a set of random variables using a compound Poisson distribution.
Using the technique, we can reduce a variety of stochastic problems
to the corresponding deterministic multiple-objective problems, which either can be solved
by standard dynamic programming or have known solutions
in the literature.
For the problems mentioned above, we obtain the following results:
\begin{enumerate}
\item
We first study the expected utility maximization problem introduced recently [Li and Despande, FOCS11].
For monotone and Lipschitz utility functions, we obtain an additive PTAS if
there is a multidimensional PTAS for the multi-objective version of the problem,
strictly generalizing the previous result.
The result implies the first additive PTAS for maximizing threshold probability
for the stochastic versions of global min-cut, matroid base and matroid intersection.
\item
For the stochastic bin packing problem (introduced in [Kleinberg, Rabani and Tardos, STOC97]),
we show there is a polynomial time algorithm which uses at most the optimal number of bins,
if we relax the bin size and the overflow probability by $\e$ for any constant $\e>0$.
Based on this result, we obtain a 3-approximation if only the bin size can be relaxed by $\e$,
improving the known $O(1/\e)$ factor for constant overflow probability.
\item
For the stochastic knapsack problem, we show a $(1+\e)$-approximation using $\e$ extra capacity
for any $\e>0$, even when the size and reward of each item may be correlated
and cancelations of items are allowed.
This generalizes the previous work [Balghat, Goel and Khanna, SODA11]
for the case without correlation and cancelation.
Our algorithm is also simpler.
We also present a factor $2+\e$ approximation algorithm for stochastic knapsack with cancelations,
for any constant $\e>0$, improving
the current known approximation factor of $8$ [Gupta, Krishnaswamy, Molinaro and Ravi, FOCS11].
\item
We also study an interesting variant of the stochas- tic knapsack problem, where the size and the profit of each item are revealed before the decision is made. The problem falls into the framework of Bayesian on- line selection problems, which has been studied a lot recently.
We obtain in polynomial time a $(1+\e)$-approximate policy using $\e$ extra capacity
for any constant $\e>0$.
\end{enumerate}
Lastly, we remark that the Poisson approximation technique is quite easy to apply and may
find other applications in stochastic combinatorial optimization.

\end{abstract}

}
\end{titlepage}

\newpage

\pagenumbering{arabic}
\setcounter{page}{1}


\section{Introduction}

We study several stochastic combinatorial optimization problems, including
the threshold probability maximization problem~\cite{nikolova2006stochastic,nikolova2010approximation,li2011maximizing},
the expected utility maximization problem~\cite{li2011maximizing},
the stochastic knapsack problem~\cite{dean2008approximating,bhalgat10,gupta2011approximation},
the stochastic bin packing problem~\cite{kleinberg1997allocating, goel1999stochastic}
and some of their variants.
All of these problems are known to be \#P-hard and we are interested in
obtaining approximation algorithms with provable performance guarantees.
We observe a common technical challenge in solving these problems, that
is, roughly speaking, given a set of random variables with possibly different probability distributions,
to find a subset of random variables such that certain functional (other than the expectation
\footnote
{
We can use the linearity of expectation to circumvent the difficulty of convolution.
}
)
of their sum is optimized.
The difficulty is mainly due to the fact that the probability distribution of the sum
is the convolution of the distributions of individual random variables.
To address this issue, a number of techniques have been proposed (briefly reviewed in the related work section).
In this paper, we introduce a new technique, called the {\em Poisson approximation technique},
which can be used to approximate the probability distribution of a sum of several random variables.
The technique is very easy to use and yields better or more general results than the previous techniques
for a variety of stochastic combinatorial optimization problems mentioned above.
In the rest of the section, we formally introduce these problems and state our results.

\topic{{\bf Terminology:}}
We first set up some notations and review some standard terminologies.
Following the literature, the {\em exact version} of a problem $\A$ (denoted as \exactA) asks for
a feasible solution of $\A$ with weight exactly equal to a given number $K$.
An algorithm runs in {\em pseudopolynomial time} for \exactA\
if the running time of the algorithm is bounded by a polynomial of $n$ and $K$.

A {\em polynomial time approximation scheme (PTAS)} is
an algorithm which takes an instance of a maximization problem
and a parameter $\epsilon$ and produces a solution whose cost is at least $(1 - \epsilon)\opt$,
and the running time, for any fixed $\epsilon$, is polynomial in the size of the input.
If $\epsilon$ appears as an additive factor in the above definition, namely
the cost of the solution is at least $\opt-\e$, 
we say the algorithm
is an {\em additive PTAS}. We say a PTAS is a {\em fully polynomial time approximation scheme (FPTAS)} if the running time
is polynomial in the size of the input and $\frac{1}{\epsilon}$.

In a {\em multidimensional minimization} problem,
each element $b$ is associated with a weight vector $(w_{b,1}, w_{b,2},\ldots)$.
We are also given a budget vector $(B_1, B_2,\ldots)$.
The goal is to find a feasible solution $S\in \calF$ such that
$\sum_{b\in S} w_{b,i}\leq B_i\, \forall i$.
We use \multiA\ to denote the problem if the corresponding single dimensional optimization problem is $\A$.
A {\em multidimensional PTAS} for \multi\ is an algorithm which either
returns a feasible solution $S\in \calF$ such that $\sum_{b\in S} w_{b,i}\leq (1+\epsilon)B_i\, \forall i$,
or asserts that there is no feasible solution $S'$ with $\sum_{b\in S'} w_{b,i}\leq B_i\, \forall i$.

\subsection{Expected Utility Maximization}

We first consider the fixed set model of a class of stochastic optimization problems
introduced in \cite{li2011maximizing}.
We are given a ground set of elements (or items) $\B=\{b_{i}\}_{i=1...n}$.
Each feasible solution to the problem
is a subset of the elements satisfying some property.
In the deterministic version of the problem, we want to find a feasible solution $S$
with the minimum total weight.
Many combinatorial problems such as shortest path,
minimum spanning tree, and minimum weight matching
belong to this class.
In the stochastic version, each element $b$ is associated with a random weight $X_{b}$.
We assume all $X_{b}$s are discrete nonnegative random variables and
are independent of each other.
We are also given a utility function $\mu:\mathbb{R}\rightarrow \mathbb{R}^{+}$
to capture different {\em risk-averse} or {\em risk-prone}  behaviors
that are commonly observed in decision-making under uncertainty.
Our goal is to to find a feasible set $S$
such that the expected utility $\E[\mu(X(S))]$ is maximized,
where $X(S)=\sum_{b\in S}X_b$.
We refer to this problem as the {\em expected utility maximization (\EUM)} problem.
An important special case is to find a feasible set $S$
such that $\Pr[X(S)\leq 1]$ is maximized,
which we call the {\em threshold probability maximization (\TPM)} problem.
Note that if
$
\mu(x)= 1 \text{ for }  x\in [0,1]; \quad		\mu(x)=0 \text{ for } x>1
$,
we have that $\Pr[X(S)\leq 1]=\E[\mu(X(S))]$.
In fact, this special case has been studied
extensive in literature for various combinatorial problems
including stochastic versions of shortest path \cite{nikolova2006stochastic},
minimum spanning tree~\cite{ishii1981stochastic,geetha1993stochastic},
knapsack~\cite{goel1999stochastic} as well as some other problems~\cite{agrawal2008stochastic,nikolova2010approximation}.
We use $\A$ to denote the deterministic version of the optimization problem under consideration,
and use accordingly \EUM-$\A$ and \TPM-$\A$ to denote the expected utility maximization problem and
the threshold probability maximization problem for $\A$ respectively.

\vspace{0.3cm}
\topic{{\bf Our Results:}}
Following the previous work~\cite{li2011maximizing}, we assume $\lim_{x\rightarrow \infty}\mu(x) =0$
(if the weight of our solution is too large, it is almost useless).
We also assume $\mu$ is $\alpha$-Lipschitz in $[0,\infty)$, i.e., $|\mu(x)-\mu(y)|\leq \alpha|x-y|$
for any $x,y\in [0,\infty)$, where $\alpha$ is a positive constant.
Our first result is an alternative proof for the main result in \cite{li2011maximizing}.

\begin{thm}
\label{thm:main1}
Assume there is a pseudopolynomial time algorithm  for \textsf{Exact}-$\mathfrak{A}$.
For any $\e>0$, there is a poly-time approximation algorithm for
\EUM-$\A$ that finds a feasible solution $S$
such that
$$
\E[\mu(X(S))]\geq \opt - \e.
$$
\end{thm}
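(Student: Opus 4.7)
The plan is to reduce \EUM-$\A$ to polynomially many instances of \exactA{} via Poisson approximation, using the exact algorithm as a black-box subroutine.

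First, exploit that $\mu$ is $\alpha$-Lipschitz and $\mu(x)\to 0$: pick $M=\Theta(\alpha/\e)$ with $\mu(x)\leq\e$ for $x\geq M$, truncate each $X_b$ at $M$, and round to a grid $V$ of multiples of $\Theta(\e/\alpha)$, so $|V|=K=\poly(1/\e)$. Standard Lipschitz arguments show the expected utility of any solution is perturbed by at most $O(\e)$. For any feasible set $S$, define its \emph{signature} $\vec\lambda(S)=(\lambda_v)_{v\in V\setminus\{0\}}$ by $\lambda_v=\sum_{b\in S}\Pr[X_b=v]$. Le Cam's compound Poisson theorem yields
$$d_{\mathrm{TV}}\!\bigl(X(S),\,Z(\vec\lambda(S))\bigr)\ \leq\ \sum_{b\in S}\Pr[X_b\neq 0]^2,$$
where $Z(\vec\lambda)=\sum_v v\cdot N_v$ and the $N_v\sim\Pois(\lambda_v)$ are independent. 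Hence $\E[\mu(X(S))]$ depends, up to $O(\e)$, only on $\vec\lambda(S)$---provided the right-hand side is itself $O(\e)$.

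To make this bound small, I would split items into \emph{heavy} ($\Pr[X_b\neq 0]\geq\e^2$) and \emph{light}. Since many items could be heavy, I would group heavy items into $\poly(n,1/\e)$ profile classes (items whose rounded probability vectors agree) and enumerate the vector of how many items are chosen from each class; this enumeration has polynomially many configurations for fixed $\e$, and each contributes a deterministic shift to the light part's signature. With the heavy-class counts fixed, round the light-part signature coordinatewise to a polynomial-precision grid to obtain polynomially many candidate signatures. For each candidate $\vec\lambda$, the remaining task is to find a feasible subset of light items whose signature matches---a $K$-dimensional exact problem. I would reduce it to ordinary \exactA{} by encoding each item's $K$-dimensional weight vector (rescaled to integers) as a single integer via a base-$C$ numeral with $C=\poly(n,1/\e)$, chosen large enough to prevent carries. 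Because $K=\poly(1/\e)$ is constant for fixed $\e$, the encoded weights stay polynomial and the pseudopolynomial \exactA{} algorithm runs in polynomial time. Return the best solution over all heavy-class counts and candidate signatures.

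The main obstacle is the error budgeting: truncation, value discretization, the Le Cam bound, the probability-vector rounding used to define profile classes, and the rounding of $\vec\lambda$ each introduce error, and the parameters must be chosen so that every term is $O(\e)$ while all enumerations and the base-$C$ blowup remain polynomial for fixed $\e$. A secondary subtlety is that the signature of the optimum, after rounding, need not be \emph{exactly} realizable by any feasible set; one should therefore search a small neighborhood of each rounded candidate, and argue that the image of the feasible family under $\vec\lambda(\cdot)$ meets this neighborhood whenever the true optimum does, so that the exact subroutine succeeds for at least one candidate close to $\vec\lambda(\opt)$.
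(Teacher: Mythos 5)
Your skeleton matches the paper's (discretize, split heavy/light, Le Cam compound Poisson turns the light part into a constant-dimensional signature, enumerate signatures, and run the pseudopolynomial \exactA{} algorithm on an integer encoding of the signature — the base-$C$ encoding is exactly how the paper makes the exact subroutine polynomial). But several steps, as written, fail. First, the discretization: rounding every size to a grid of spacing $\Theta(\e/\alpha)$ and invoking a ``standard Lipschitz argument'' is wrong, because per-item rounding errors accumulate over up to $n$ items — e.g.\ $n$ items of deterministic size just below the grid spacing all collapse to $0$, moving $X(S)$ by $\Theta(n\e/\alpha)$ and the utility by far more than $O(\e)$. The paper's discretization is designed around this: sizes below $\e^4$ are rounded to a two-point distribution on $\{0,\e^4\}$ preserving the conditional expectation, so the per-item errors are mean-zero, and the cumulative error is controlled by a variance/Chebyshev argument (Lemma~\ref{lm:fixdiscretize}); that argument in turn needs the a priori bound $\E[X(S)]=O(\alpha/\e)$ whenever $\E[\mu(X(S))]\ge\e$ (Lemma~\ref{lm:boundexp}), a bound you never establish although your scheme implicitly needs it in several places (Le Cam error, size of the signature space, and the heavy-item count).

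Second, the heavy/light treatment does not deliver the claimed bounds. With ``light'' meaning $\Pr[X_b\ne 0]<\e^2$ and nonzero discretized sizes only bounded below by $\Theta(\e/\alpha)$, the best available estimate is $\sum_{b\,\mathrm{light}}\Pr[X_b\ne 0]^2\le \e^2\sum_b\Pr[X_b\ne 0]\le \e^2\cdot O(\alpha^2/\e^2)=O(1)$, not $O(\e)$; the paper needs the much larger separation (heavy means $\E[X_b]\ge\e^{10}$ while nonzero discretized sizes are $\ge\e^4$, so each light item has $\Pr[\wt X_b\ne0]\le\e^6$ and the sum is $\le 3/\e^5$) to make Le Cam's error $O(\e)$. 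For the heavy items themselves, enumerating a count vector over $\poly(n,1/\e)$ profile classes is not ``polynomially many configurations'' — it is $n^{\poly(n)}$ — and treating the chosen heavy items as ``a deterministic shift to the light part's signature'' is invalid twice over: Le Cam's theorem gives no control for items whose nonzero probability is $\ge\e^2$ if they are folded into the compound Poisson, and the feasibility family $\calF$ couples the heavy and light choices (which light items can complete a feasible solution depends on exactly which heavy items were taken, not just on their class counts), so you cannot delegate feasibility to an exact subroutine that only selects light items. The paper resolves both issues at once: by Lemma~\ref{lm:boundexp} the optimum contains at most $3/\e^{11}$ heavy items, so the actual heavy subsets $H$ are enumerated ($n^{O(\e^{-11})}$ of them), and for each $H$ the exact algorithm is asked for a light set $L$ with $H\cup L\in\calF$ and signature exactly equal to an enumerated value; because the signature is defined by rounding each item's probability vector (to multiples of $\e^6/n$) rather than rounding the aggregate, the optimum's signature is itself one of the enumerated targets, so exact matching is achievable and your final ``neighborhood search'' patch is unnecessary.
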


For many combinatorial problems, including shortest path, spanning tree, matching and knapsack,
a pseudopolynomial algorithm for the exact version is known.
Therefore, Theorem~\ref{thm:main1} immediately implies an additive PTAS for
the \EUM\ version of each of these problems.
An important corollary of the above theorem is a {\em relaxed} additive PTAS for \TPM:
For any $\e>0$, we can find in polynomial time a feasible solution $S$
such that
$$\Pr[X(S)\leq 1+\e]\geq \opt -\e\ ,$$
provided that there is a pseudopolynomial time algorithm  for \exactA.
In fact, the corollary follows easily by considering the monotone utility function
$
\tchi(x)= 1 \text{ for }  x\in [0,1]; \quad \tchi(x)= -\frac{x}{\e}+\frac{1}{\e}+1  \text{ for } x\in[1, 1+\e];\quad
		 \tchi(x)=0 \text{ for } x>1+\e,
$
which is $\frac{1}{\e}$-Lipschitz. We refer the interested reader to \cite{li2011maximizing} for more implications
of Theorem~\ref{thm:main1}.
However, this is not the end of story.
Our second major result considers \EUM\ with monotone nonincreasing utility functions,
a natural class of utility functions
(we denoted the problem as \EUMMono). We can get the following strictly more general result
for \EUMMono.

\begin{thm}
\label{thm:main2}
Assume there is a multidimensional PTAS for \textsf{Multi}-$\mathfrak{A}$.
For any $\e>0$, there is a poly-time approximation algorithm for \EUMMono-$\A$ that finds a feasible solution $S$
such that
$$
\E[\mu(X(S))]\geq \opt - \e.
$$
\end{thm}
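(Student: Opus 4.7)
My plan is to reduce \EUMMono-$\A$ to polynomially many instances of \multiA\ via a Poisson-approximation encoding, then invoke the given multidimensional PTAS as a black box. After preprocessing each $X_b$---truncating beyond a threshold $M$ at which $\mu(x)\le \e$ (using $\lim_{x\to\infty}\mu(x)=0$) and discretizing its surviving support onto a common grid $\{s_1,\dots,s_k\}$ with $k=\poly(1/\e)$ (using the $\alpha$-Lipschitz property of $\mu$)---the central object is the \emph{signature} $\sig(S):=\bigl(\sum_{b\in S}\Pr[X_b=s_j]\bigr)_{j=1}^k$. By Le Cam's compound Poisson theorem, the distribution of $X(S)$ is close in total variation to the compound Poisson $Z_{\sig(S)}$ with parameters $\sig(S)$; after the standard heavy--light trick of pulling out the handful of items with atom probabilities above some threshold and enumerating over their restriction to $S$, the expected utility $\E[\mu(X(S))]$ becomes, up to additive $O(\e)$, a function only of $\sig(S)$, and in particular is computable to additive $\e$ in polynomial time for any given $\sigma$.

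The algorithm then enumerates every canonical signature $\sigma$ in which each coordinate is rounded to a geometric $(1+\e)$-mesh; since $k$ depends only on $\e$, the number of such targets is polynomial in $n$. For each $\sigma$, the algorithm invokes the multidimensional PTAS on the \multiA\ instance in which item $b$ has weight vector $(\Pr[X_b=s_j])_{j=1}^k$ and the budget vector is $\sigma$. The PTAS either certifies infeasibility or returns $S\in\calF$ with $\sig(S)\le (1+\e)\sigma$ componentwise. We output the set maximizing the estimated expected utility $\E[\mu(Z_{\sig(S)})]$.

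The crucial role of monotonicity is to convert the PTAS's one-sided relaxation into a controlled loss through a stochastic-dominance property of compound Poissons: if $\sigma'\le\sigma''$ componentwise, then coupling by superposing an independent compound Poisson of parameters $\sigma''-\sigma'$ shows $Z_{\sigma'}\preceq_{\mathrm{st}} Z_{\sigma''}$, so $\E[\mu(Z_{\sigma'})]\ge \E[\mu(Z_{\sigma''})]$ because $\mu$ is non-increasing. Let $\sigma^\star$ be the canonical upward rounding of $\sig(S^\star)$: the algorithm trying $\sigma=\sigma^\star$ succeeds and returns some $S$ with $\sig(S)\le (1+\e)\sigma^\star\le (1+\e)^2\sig(S^\star)$, so $\E[\mu(Z_{\sig(S)}))]\ge \E[\mu(Z_{(1+\e)^2\sig(S^\star)})]$. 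The gap to $\opt\approx \E[\mu(Z_{\sig(S^\star)})]$ is then bounded by Lipschitz applied to the coupling: the extra independent Poisson mass has expected size $O(\e)\cdot \E[X(S^\star)]$, and the heavy-item separation together with the truncation keeps this within a constant factor of $\e$.

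The main obstacle will be the error bookkeeping across four sources of slack---Poisson approximation (Le Cam, controlled by separating out large atoms), value truncation and discretization (controlled by $\mu\to 0$ and $\alpha$-Lipschitz), the signature-coordinate rounding that keeps the canonical signature space polynomial, and the $(1+\e)$-relaxation of the multidimensional PTAS (absorbed via monotonicity and stochastic dominance)---each of which must be calibrated so that their combined contribution to $\E[\mu(X(S))]$ is at most $\e$, possibly after rescaling the input accuracy by a constant factor. Once these estimates are in place, the algorithm uses the multidimensional PTAS purely as a black box on polynomially many signature targets, yielding the additive PTAS asserted by the theorem.
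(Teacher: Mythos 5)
Your architecture is the same as the paper's: discretize the size distributions, split off the few heavy items (via Lemma~\ref{lm:boundexp}) and enumerate them, encode each light set by its summed probability vector (the signature), enumerate polynomially many signature targets, call the multidimensional PTAS with the per-item probability vectors as weights and the target as the budget, and finally absorb the PTAS's one-sided relaxation using stochastic dominance of compound Poisson distributions (exactly the paper's Lemma~\ref{lm:cpddomination}) together with monotonicity of $\mu$. So the route is not different; the question is only whether your error bookkeeping closes.

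It does not close at the stated granularity, and this is the one concrete gap. You bound the loss from relaxing $\sig(S^\star)$ to $(1+\e)^2\sig(S^\star)$ by applying the Lipschitz property to the superposed independent compound Poisson, whose expected total size you write as $O(\e)\cdot\E[X(S^\star)]$, asserting that heavy-item separation keeps this within a constant factor of $\e$. But removing heavy items only caps each remaining item's expectation individually; the \emph{total} expected size of the light part of $S^\star$ is bounded only by $O(1/\e)$ (Lemma~\ref{lm:boundexp}, and this is tight), so $O(\e)\cdot\E[X(S^\star)]=O(1)$ and your final utility loss is $O(\alpha)$, not $O(\e)$. The alternative total-variation route fails for the same reason: the total Poisson intensity $\lambda=\|\sig(S^\star)\|_1$ can be $\Theta(\e^{-5})$, so $\Pois((1+\e)\lambda)$ and $\Pois(\lambda)$ have means differing by many standard deviations and their total variation distance is $1-o(1)$. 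The repair is not a constant-factor rescaling of the accuracy but a polynomial one: both the signature mesh and the relaxation parameter handed to the multidimensional PTAS must be taken at granularity $\e^{c}$ for a suitable constant $c$ --- the paper uses $1+\e^{6}$, so that $\e^{6}\lambda=O(\e)$ and $\e^{6}\E[X(S^\star)]=O(\e^{5})$. Since $c$ is an absolute constant, the enumeration stays polynomial and the PTAS call remains legitimate; with that calibration your argument goes through and coincides with the paper's proof.
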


It is worthwhile mentioning that condition of Theorem~\ref{thm:main2} is strictly more general than
the condition of Theorem~\ref{thm:main1}.
It is known that if there is pseudopolynomial time algorithm for \exactA,
there is a multidimensional PTAS for \multiA, by Papadimitriou and Yannakakis~\cite{papadimitriou2000approximability}.
However, the converse is not true.
Consider the minimum cut (\MC) problem. A pseudopolynomial time algorithm for $\mathsf{Exact}$-\MC\
would imply a polynomial time algorithm for the NP-hard MAX-CUT problem, while
a multidimensional PTAS for $\mathsf{Multi}$-\MC\ is known~\cite{ArmonZ06}.
Therefore, Theorem~\ref{thm:main2} implies the first relaxed additive PTAS for \TPM-\MC.
Other problems that can justify the superiority of Theorem~\ref{thm:main2} include
the matroid base (\MB) problem and the matroid intersection (\MI) problem.
Obtaining pseudopolynomial time exact algorithms for $\mathsf{Exact}$-\MB\ and $\mathsf{Exact}$-\MI\
is still open~\cite{camerini1992random}\footnote{
Pseudopolynomial time algorithms are known only for some special cases, such as spanning trees~\cite{barahona1987exact},
matroids with parity conditions \cite{camerini1992random}.
}, while multidimensional PTASes for $\mathsf{Multi}$-\MB\ and $\mathsf{Multi}$-\MI\ are known \cite{ravi1996constrained, chekuri2011multi}.

We would like to remark that obtaining an additive PTAS for \EUM-$\A$ for non-monotone utility functions
under the same condition as Theorem~\ref{thm:main2} is impossible.
Consider again \EUM-\MC.
Suppose the weights are deterministic.
The given utility function is
$
\mu(x)= 100x-89 \text{ for }  x\in [0.89,0.9]; \quad \mu(x)= -100x+91  \text{ for } x\in[0.9,0.91];\quad
		 \mu(x)=0 \text{ otherwise},
$ (which is 100-Lipschitz) and the maximum cut of the given instance has a weight $0.9$.
So, the optimal utility is $1$, but obtaining a utility value better than 0 is equivalent to finding a cut
of weight at least $0.89$, which is impossible given the imapproximability result for MAX-CUT~\cite{khot2004optimal}.

\vspace{0.2cm}
\noindent
{\bf Our techniques:}
Our algorithm consists of two major steps, discretization and enumeration.
Our discretization is similar to, yet much simpler than, the one developed by Bhalgat et al.~\cite{bhalgat10}.
In their work, they developed a technique which can discretize all (size) probability distributions
into $O(\log n)$ equivalent classes. This is a difficult task and their technique applies
several clever tricks and is quite involved.
However, we only need to discretize the distributions so that the size of the support of each distribution is a constant,
which is sufficient for the enumeration step.
In the enumeration step,
we distinguish the items with large expected weights ({\em heavy items})
and those with small expected weights ({\em light items}).
We argue that there are very few heavy items in the optimal solution, so we can afford to enumerate all possibilities.
To deal with light items, we invoke Le Cam's Poisson approximation theorem which (roughly) states that
the distribution of the total size of the set of light items can be approximated by a compound Poisson distribution,
which can be specified by the sum of the (discretized) distribution vectors of the light items
(called the {\em signature} of the set).
Therefore, instead of enumerating $O(2^n)$ combinations of light items,
we only need to enumerate all possible signatures and check whether there is a set of light items with
the sum of their distribution vectors approximately equal to (or at most)  the signature.
To solve the later task, we need the pseudopolynomial time algorithm for \exactA\
(or the multidimensional PTAS for \multiA).

\subsection{Stochastic Bin Packing}

In the stochastic bin packing (\SBP) problem, we are given
a set of items $\B = \{b_1, b_2, \ldots, b_n\}$ and an overflow probability $0<p<1$.
The size of each item $b_i$ is an independent random variable $X_i$ following a known discrete distribution.
The distributions for different items may be different.
Each bin has a capacity of $\C$.
The goal is to pack all the items in $\B$ using as few bins as possible
such that the overflow probability for each bin is at most $p$.
The problem was first studied by Kleinberg, Rabani and Tardos \cite{kleinberg1997allocating}.
They obtained a $O(1/\e)$-approximation, for only Bernoulli distributed items, if
we relax the bin size to $(1+\e)\C$ or the overflow probability to $p^{1-\e}$.
They also obtained a $O\bigl(\sqrt{\frac{\log p^{-1}}{\log\log p^{-1}}}\,\bigr)$-approximation 
without relaxing the bin size and the overflow probability. 
Goel and Indyk~\cite{goel1999stochastic} obtained PTAS for both
Poisson and exponential distributions and QPTAS (i.e., quasi-polynomial time)
for Bernoulli distribution.

\vspace{0.3cm}
\topic{{\bf Our Results:}}
Our main result for \SBP\ is the following theorem.
\begin{thm}
\label{thm:binpacking}
For any fixed constant $\e > 0$, there is a polynomial time algorithm for \SBP\ that uses at most the optimal number
of bins, when the bin size is relaxed to $(1+\e)\C$ and the overflow probability is relaxed to $p+\e$.
\end{thm}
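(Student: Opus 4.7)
The plan is to extend the discretize-and-enumerate paradigm underlying the \EUM\ results to the stochastic bin packing setting. The extra difficulty is that every item must be committed to exactly one bin, so after enumerating possible bin types we still have to solve a packing problem.

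First, I discretize the size distributions by geometric bucketing and truncating exponentially small tails, so that every $X_b$ is supported on a constant-sized set $V = \{v_0, v_1, \ldots, v_K\}$ with $K = O_\e(1)$. This changes the overflow probability of any bin by at most $\e/3$. Then I classify items as \emph{heavy} or \emph{light} using a threshold of the form ``$\Pr[X_b \geq \e\dpc] \geq \e$''. A simple Chernoff argument shows that any feasible bin (overflow probability at most $p$) contains at most $O_\e(1)$ heavy items, since more than $\Theta(1/\e^2)$ of them would produce at least $\lceil 1/\e\rceil + 1$ large realizations with high probability and thereby force overflow.

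The core of the reduction is to describe each bin by a \emph{configuration} $(H,\sig)$, where $H$ is an $O_\e(1)$-sized subset of heavy items and $\sig$ is a \emph{signature} vector for the light items packed into the bin, namely the coordinate-wise sum of their (discretized) distribution vectors rounded onto a grid of precision $\e/n$. By Le Cam's compound Poisson approximation theorem, the total light-item size in a bin lies within total variation $O(\e)$ of a compound Poisson specified by $\sig$, because each individual light-item distribution vector has small $\ell_1$ norm. A configuration is \emph{admissible} if, under relaxed capacity $(1+\e)\dpc$, the exact distribution of $\sum_{b\in H}\widehat{X}_b$ convolved with the compound Poisson of $\sig$ has overflow probability at most $p + 2\e/3$. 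The number of admissible configurations is $n^{O_\e(1)}$.

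I then formulate a configuration LP: one nonnegative variable $x_{(H,\sig)}$ per admissible configuration, covering constraints forcing each heavy item to be chosen exactly once and the light signatures to sum component-wise to the true signature of the light-item set, with objective $\min \sum x_{(H,\sig)}$. The LP value is at most $\opt$ since the (discretized) optimum bin packing yields an LP-feasible integer point. Because the LP has only polynomially many variables and constraints, a basic optimum has polynomially many positive variables; standard Karmarkar--Karp style rounding together with a matching that assigns individual light items to bins produces an integer packing that uses at most $\opt$ bins. The main obstacle I expect is exactly this last step: showing that the integral assignment of actual light items realizes per-bin signatures within $O(\e)$ of their prescribed values, so the compound Poisson approximation remains valid with the remaining $\e/3$ slack in the overflow bound. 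This will require a transportation-style fractional assignment of light items to bin ``slots'' whose extreme points have few fractional entries, ensuring that rounding perturbs each signature coordinate by only a few items (hence $O(1/n)$ mass), which sums to the claimed $O(\e)$ slack across all coordinates.
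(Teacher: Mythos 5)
Your overall architecture (discretize, split heavy/light, describe each bin by a heavy part plus a light-item signature, certify admissibility of a bin description via Le Cam's theorem, then solve a deterministic assignment problem) matches the paper's, and the obstacle you flag at the end --- that the integral assignment of light items only realizes each bin's signature approximately --- is in fact the part that goes through easily: a basic solution of the per-bin signature LP has at most $(|\dsize|-1)m$ fractionally assigned light items, so each bin absorbs at most $|\dsize|-1$ stray light items, each contributing at most $\e^{11}$ to any signature coordinate, and this $O(\e^6)$ perturbation is absorbed by the stochastic-dominance and coupling arguments (Lemmas~\ref{lm:cpddomination} and~\ref{lm:bp_sig}).

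The genuine gap is elsewhere: your configuration LP cannot be rounded to an integral packing using at most $\opt$ bins. Because your configurations record the \emph{identities} of the heavy items in a bin, the LP carries one exact-cover constraint per heavy item --- up to $n$ constraints --- so a basic optimum may have $n+O(1)$ fractional variables. Karmarkar--Karp-style rounding is designed for covering constraints over item \emph{types}, not exact cover of individual items, and even in its native setting it incurs an additive loss in the number of bins; the theorem allows no slack at all in the bin count (the relaxation is only in capacity and overflow probability). The paper's fix is precisely to eliminate the LP over configurations: it further discretizes the heavy items' distributions (rounding each probability value down to a multiple of $\e^{22}$) so that there are only constantly many heavy-item types. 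A bin's heavy content is then a constant-dimensional count vector, the total number of bin configurations $h$ is a constant, and the multiset of configurations across the $m$ bins can be enumerated exactly in ${m+h \choose h}=\poly(m)$ time; the only LP that remains is the light-item vector-scheduling LP, whose rounding perturbs signatures but never increases the number of bins. You would need this typing step (or an equivalent constant-dimensional integer program over configuration multiplicities) before your approach can deliver the ``at most $\opt$ bins'' guarantee.
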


To the best of our knowledge, our result is the first result for \SBP\ for arbitrary discrete distributions.
Based on this result,we can get the following result when the overflow probability is not relaxed.
For Bernoulli distributions, this improves the $O(1/\e)$-approximation in \cite{kleinberg1997allocating} for any constant $p$.

\begin{thm}
\label{thm:binpackingnorelax}
For any constant $p>0$, we can find in polynomial time a packing that uses at most $3\opt$ bins of capacity $(1+\e)\C$ 
such that the overflow probability of each bin is at most $p$.
\end{thm}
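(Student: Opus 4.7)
The plan is to reduce to Theorem~\ref{thm:binpacking} by invoking it with a tightened overflow budget. Choose a constant $\e' \le \min(\e,p)$ small enough (exact smallness specified below) and set $p' := p - \e'$. Running the algorithm of Theorem~\ref{thm:binpacking} on the given items with capacity $\C$, overflow parameter $p'$, and slack $\e'$ yields in polynomial time a packing using at most $\opt(p',\C)$ bins of capacity $(1+\e')\C \le (1+\e)\C$, each with overflow probability at most $p' + \e' = p$, as desired. The remaining task is therefore the structural estimate $\opt(p',\C) \le 3\,\opt(p,\C)$.

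To prove this, I would fix an optimal packing $\sigma^{*}$ for the original instance and transform it into a feasible packing for $(p',\C)$ using at most $3\,\opt$ bins. Call an item $b$ \emph{heavy} if $\Pr[X_b>\C]>\tau$ and \emph{light} otherwise, for a threshold $\tau = \tau(p) \in (0,1)$ chosen so that $2\tau - \tau^{2} > p$ and $\tau \le p'$; for any constant $p\in (0,1)$ and $\e'$ small enough, such a $\tau$ exists (take $\tau$ slightly above $1 - \sqrt{1-p}$). Any two heavy items placed in a common bin would produce overflow at least $1 - (1-\tau)^{2} > p$ irrespective of the other items in the bin, contradicting the feasibility of $\sigma^{*}$. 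Hence the number of heavy items is at most $\opt$, and each can be placed alone in a bin, accounting for at most $\opt$ bins with overflow $\tau \le p'$. For the light items within each bin of $\sigma^{*}$, I would redistribute them into two sub-bins each with overflow $\le p'$, accounting for at most $2\,\opt$ further bins and giving the desired $3\,\opt$ total.

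The main obstacle is the last redistribution step. Le Cam's Poisson approximation (the tool that powers Theorem~\ref{thm:binpacking}) implies that the sum of the light items in any subset of a bin is approximated by a compound Poisson distribution whose parameter is linear in the constituent tail probabilities, each of which is at most $\tau$. Splitting the light items of a single bin of $\sigma^{*}$ into two balanced halves roughly halves this parameter, and a short monotonicity calculation for compound Poisson tails shows that halving the parameter strictly reduces the overflow probability by an amount depending only on $p$; for $\e'$ small enough in terms of $p$, this reduction exceeds $\e'$, bringing the overflow below $p' = p - \e'$. Making this argument uniform over all configurations of light items is the technically delicate part, but the required quantitative ingredients (the explicit error term in Le Cam's inequality and standard compound-Poisson tail bounds) are exactly those already developed in the paper for the proof of Theorem~\ref{thm:binpacking}.
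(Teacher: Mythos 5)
Your reduction goes in the wrong direction at its very first step: you shrink the overflow budget to $p'=p-\e'$ and then need the structural inequality $\opt(p',\C)\leq 3\,\opt(p,\C)$, but this inequality is false in general. If some single item $b$ has $\Pr[X_b> \C]\in(p-\e',p]$ (for instance $X_b=0$ with probability $1-p$ and $X_b=2\C$ with probability $p$), the instance is feasible for budget $p$ (each such item alone in a bin) yet \emph{infeasible} for budget $p'$, even after relaxing the capacity to $(1+\e)\C$, so $\opt(p',\C)=\infty$ while $\opt(p,\C)$ is finite. The same problem surfaces inside your structural argument: a heavy item has tail probability strictly \emph{greater} than $\tau$, possibly as large as $p$, so placing it alone gives overflow up to $p$, not ``$\tau\leq p'$'' as you claim, and there is simply no way to serve it within the tightened budget $p'$. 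The paper avoids ever demanding overflow below $p$ for any bin: it runs the bicriteria algorithm of Theorem~\ref{thm:binpacking} with the \emph{relaxed} budget $p+\e$ to get $m\leq\opt$ bins with $\Pr[X(B_i)\geq \C+\e]\leq p+\e$, and then repacks the contents of each bin $B_i$ into at most $3$ new bins by greedily merging singleton bins whenever a Monte Carlo estimate certifies overflow at most $(1-\e)p$; if four bins survived, two disjoint pairs would each have overflow exceeding $(1-2\e)p$, forcing $\Pr[X(B_i)\geq\C+\e]>1-(1-(1-2\e)p)^2>p+\e$ for $\e<\frac{p-p^2}{1+4p}$, a contradiction. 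Singleton bins are fine there because a single item with overflow exactly $p$ is acceptable (and if a single item exceeded $p$ there would be no feasible solution at all).

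Separately, your light-item redistribution step is both unproven and rests on a misapplication of Le Cam's theorem. In the paper, the compound Poisson approximation is applied to items whose probability of a \emph{nonzero (discretized) size} is tiny (so that $\sum_i\pi_i^2$ is small); your ``light'' items are only required to have small tail above $\C$, and may, e.g., be deterministic of size $\C/3$, for which $\Pr[X_b\neq 0]=1$ and the Le Cam error bound gives nothing. Consequently the claim that splitting a bin's light items into two halves ``halves the compound Poisson parameter'' and uniformly lowers the overflow by a quantity depending only on $p$ has no support, and even if some two-part splitting lemma were true it would not rescue the reduction, because of the infeasibility issue above. The paper's merging argument sidesteps any splitting lemma entirely.
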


\vspace{0.3cm}
\topic{{\bf Our technique:}}
Our algorithm for \SBP\ is similar to that for \EUM.
We distinguish the heavy items and the light items and use the Poisson approximation technique
to deal with the light items.
One key difference from \EUM\ is that we have a linear number of bins, each of them
may hold a constant number of heavy items.
Therefore, we can not simply enumerate all configurations of the heavy items since there
are exponential many of them.
To reduce the number of the configurations to a polynomial,
we classify the heavy items into a constant number of types (again, by discretization).
For a fixed configuration of the heavy items,
using the Poisson approximation, we reduce \SBP\ to a multidimensional
version of the multi-processors scheduling problem, called the vector scheduling problem,
for which a PTAS is known \cite{chekuri1999multi}.

\subsection{Stochastic Knapsack}

The deterministic knapsack problem is a classical and fundamental problem in combinatorial optimization.
In this problem, we are given as input a set of items each associated with a size and a profit,
and our objective is to find a maximum profit subset of items whose total size is at most the capacity $\C$ of the knapsack.
In many applications, the size and/or the profit of an item may not be fixed values
and only their probability distributions are known to us in advance.
The actual size and profit of an item are revealed to us as soon as it is inserted into the knapsack.
For example, suppose we want to schedule a subset of $n$ jobs on a single machine by a fixed deadline
and the precise processing time and profit of a job are only revealed until it is completed.
In the following, we use terms {\em items} and {\em jobs} interchangeably.
If the insertion of an item causes the knapsack to overflow, we terminate and do not gain
the profit of that item.
The problem is broadly referred to as the {\em stochastic knapsack (\SK)} 
problem~\cite{dean2008approximating, bhalgat10, gupta2011approximation}.
Unlike the deterministic knapsack problem for which a solution is a subset of items,
a solution to \SK\
is specified by an {\em adaptive policy} which determines which item to insert next
based on the remaining capacity and
and the set of available items.
In contrast, a {\em non-adaptive} policy specifies a fixed permutation of items.

A significant generalization of the problem, introduced in \cite{gupta2011approximation}, considers the scenarios
where the profit of a job can be correlated with its size and we can cancel a job during its execution in the policy.
No profit is gathered from a canceled job.
This generalization is referred as {\em Stochastic Knapsack with Correlated Rewards and Cancelations (\SKCRC)}.
Stochastic knapsack and several of its variants have been studied extensively by the operation research community
(see e.g., \cite{derman1978renewal,derman1979renewal, assaf1982optimal,assaf1982renewal}).
In recent years, the problem has also attracted a lot of attention from the theoretical computer science community
where researchers study the problems from the perspective of  approximation
algorithms \cite{dean2008approximating, bhalgat10, gupta2011approximation}.

\vspace{0.3cm}
\topic{{\bf Our Results:}}
For \SK, Bhalgat, Goel and Khanna~\cite{bhalgat10} obtained a $(1+\e)$-approximation using $\e$ extra capacity.
We obtain an alternative proof of this result, using the Poisson approximation technique.
The running time of our algorithm
is $n^{O(f(\e))}$ where $f(\e)=2^{\poly(1/\e)}$, improving upon the $n^{O(f(\e))^{O(f(\e))}}$ running time in ~\cite{bhalgat10}.
Our algorithm is also considerably simpler. 

\begin{thm}
\label{thm:sk}
For any $\e>0$, there is a polynomial time algorithm that finds a $(1+\e)$-approximate
adaptive policy for \SK\ when the capacity is relaxed to $(1+\e)\C$.
\end{thm}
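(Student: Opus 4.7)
The plan is to extend the Poisson approximation framework that the paper develops for \EUM\ to the adaptive setting of \SK. The algorithm will have two main phases: a discretization and classification step that reduces the optimal adaptive policy to a structured object with a constant-size ``skeleton,'' and an enumeration step that uses the Poisson approximation to handle the bulk of the items.

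First, I would discretize every item's size distribution so that its support has cardinality $\poly(1/\e)$, at the cost of at most $\e\C$ additional capacity. Then I would partition items into heavy items, those with $\E[\min(X_b,\C)] \geq \delta\C$ for some $\delta = \poly(\e)$, and light items. Because any realized trajectory of an adaptive policy packs items of total truncated size at most $(1+\e)\C$, every realization fits only $O(1/\delta)$ heavy items. Consequently the heavy-item decisions made by the optimal policy form a decision tree of depth $O(\poly(1/\e))$. After further grouping heavy items into $O(\poly(1/\e))$ types according to their discretized distributions, the set of candidate heavy ``skeletons,'' meaning decision trees whose nodes are labelled by types, has polynomial size and can be fully enumerated.

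For each such skeleton, I would use Poisson approximation to describe the light items inserted at each leaf. Le Cam's compound Poisson theorem guarantees that the aggregate size distribution of any set of light items is well approximated, in total variation, by the compound Poisson distribution specified by the sum of the items' discretized distribution vectors, which we call the signature. Thus, instead of enumerating exponentially many subsets of light items per leaf, I enumerate the polynomially many achievable signatures and then, using a standard multi-dimensional dynamic program over the items, check whether the light items can be partitioned among the leaves to realize the chosen signatures. For each combination of skeleton and signature assignment the expected profit is approximated by propagating the compound Poisson distribution of the running size down the tree, and at each node computing the probability that the next scheduled heavy item fits within $(1+\e)\C$.

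The hard part will be bounding the error that accumulates through this surrogate computation. The total-variation error of a single application of the Poisson approximation scales with $\sum_b p_b^2$ over the underlying Bernoullis, and since our classification guarantees each light item contributes only $O(\delta)$ mass, a single Poisson surrogate is accurate up to $O(\delta)$. A structural lemma, arguing that one may modify the optimal adaptive policy to insert light items in contiguous blocks between heavy-item decisions without sacrificing more than $\e$ profit, will ensure that the Poisson surrogate is invoked only $O(\poly(1/\e))$ times along any root-to-leaf path. Choosing $\delta$ sufficiently small as a function of $\e$ then makes the cumulative additive loss at most $\e$, yielding the claimed $(1+\e)$-approximation under $\e$ extra capacity.
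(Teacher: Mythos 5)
Your overall strategy mirrors the paper's: discretize, split items into heavy and light, reduce the optimal adaptive policy to a constant-size decision tree (the paper's \emph{block-adaptive policies}), and replace exhaustive enumeration of light-item subsets by enumeration of signatures justified via Le Cam's theorem, with a dynamic program for realizability. But two genuine gaps remain. The first is that essentially all of the difficulty sits inside your deferred ``structural lemma'' that the optimal policy can be rearranged so light items appear in $O(\poly(1/\e))$ contiguous blocks per realization path at an $O(\e)\opt$ profit loss. This is Lemma~\ref{lm:SKCRC_blockpolicy}, and it is the hardest step: the paper's proof partitions each leftmost path into \emph{segments} subject to two conditions simultaneously --- bounded total expected size \emph{and} the condition that for every size $\size$ the continuation profits $\P(T_{u_\size})$ of all nodes $u$ in a segment agree to within $\e^5\opt$ --- and it is this second condition, together with an auxiliary-knapsack argument to absorb the items deferred past a nonzero realization, that bounds both the number of blocks and the profit lost when the policy must jump from a mid-segment node to the segment's last node. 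Your sketch supplies neither mechanism; ``choosing $\delta$ sufficiently small'' controls the total-variation error of the Poisson surrogate, but not the re-routing loss, which is governed by how much the subtree profits vary along a path rather than by the items' expected sizes.

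The second gap is profit accounting. Your signatures record only size distributions, and your profit evaluation only computes ``the probability that the next scheduled heavy item fits.'' Two light-item sets with identical size signatures can carry very different profits, and the profit actually collected from a light block depends on its position in the tree; a feasibility-only DP over size signatures therefore cannot distinguish a profitable assignment of light items from a worthless one. The paper resolves this by making the effective-profit vector $\bigl(\ol p_b(\size)\bigr)_{\size\in\dsize}$ part of the signature, so that Lemma~\ref{lm:blockreplacement} certifies that swapping a block for another with the same signature perturbs both its size distribution and its expected profit by $O(\e^{18})\opt+O(\e^{9})\P(M_1)$. Relatedly, the paper's block-adaptive trees branch on the realized discretized total size of \emph{every} block (including light blocks, with branching factor $|\dsize|$), whereas your skeleton branches only at heavy items; restricting to policies oblivious to the light blocks' outcomes is an additional loss you would need to justify. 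These defects are repairable within your framework --- the repair is essentially what the paper does --- but as written the proposal is not a proof.
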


Our next main result is a generalization of Theorem~\ref{thm:sk} to \SKCRC\
where the size and profit of an item may be correlated, and cancelation of item in the middle is allowed.
The current best known result for \SKCRC\ is a factor 8 approximation algorithm
by Gupta, Krishnaswamy, Molinaro and Ravi~\cite{gupta2011approximation},
base on a new time-indexed LP relaxation.
We remark that it is not clear how to extend the enumeration technique developed in \cite{bhalgat10} to handle
cancelations (see a detailed discussion in Section~\ref{sec:SKCRC}).

\begin{thm}
\label{thm:skcc}
For any $\e>0$, there is a polynomial time algorithm that finds a $(1+\e)$-approximate
adaptive policy for \SKCRC\ when the capacity is relaxed to $(1+\e)\C$.
\end{thm}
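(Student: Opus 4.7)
The plan is to extend the Poisson-approximation framework used to prove Theorem~\ref{thm:sk} to the correlated and cancellable setting via a standard \emph{virtual-item} reduction. For each original item $i$ with joint size-reward distribution $(X_i,R_i)$, and for each candidate cancellation threshold $t$ drawn from the (constantly many) discretized size levels, introduce a virtual item $(i,t)$ whose effective size is $\min(X_i,t)$ and whose deterministic reward is $\E[R_i\cdot\mathbf{1}(X_i<t)]$; the no-cancellation copy corresponds to $t=\C$. A near-optimal adaptive policy for \SKCRC\ may be assumed (up to a $(1+O(\e))$ factor in reward and $O(\e)\C$ in capacity) to cancel only at these grid points, and so corresponds to an adaptive policy over the enlarged ground set of virtual items that uses at most one virtual copy per original item.

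I would then invoke the framework of Theorem~\ref{thm:sk}: discretize every size distribution so that each virtual item has constant support; classify virtual items as \emph{heavy} (large expected size) or \emph{light} (small expected size); and note that along any root-to-leaf path of a near-optimal adaptive decision tree at most $O(1/\e)$ heavy items are ever inserted. This lets us enumerate all candidate heavy ``backbones'' (partial decision trees on heavy virtual items, consistent with the at-most-one-per-group rule) in time $n^{2^{\poly(1/\e)}}$. At each branch of such a backbone the remaining capacity is known, and we fill it with light virtual items. By Le~Cam's theorem the sum of the realized sizes of any subset $L$ of light items is within total variation $O(\e)$ of a compound Poisson distribution determined solely by the \emph{signature} $\sigma(L)$, the coordinate-wise sum of the constant-length distribution vectors of items in $L$. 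Hence the completion probability and expected reward contributed by the light items along any branch depend, up to additive $\e$, only on the signature used. I would therefore enumerate the polynomially many signatures per branch, and for each (branch, signature) pair run a standard multidimensional dynamic program that returns a maximum-reward subset of light virtual items whose signature matches the target and which uses at most one virtual copy per original item; the group constraint is a partition-matroid constraint that the DP accommodates directly.

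The hard part will be verifying that an optimal-up-to-$(1+\e)$ policy for \SKCRC\ actually admits a reformulation of this rigid type. Three things must be argued simultaneously and compatibly: cancellation times can be rounded to the discretized grid without appreciable loss; decisions along the decision tree can be made based only on a constant-precision rounding of the remaining capacity; and, crucially, the concrete collection of light virtual items inserted on a given branch can be swapped for any other collection with (approximately) the same signature without hurting either feasibility or reward by more than $O(\e)$. This ``adaptive signature exchange'' step is the same one driving the proof of Theorem~\ref{thm:sk}; the new ingredient here is that the at-most-one-per-group constraint must survive the swap. Because each group contains only a constant ($2^{\poly(1/\e)}$) number of virtual items, any over-count the swap introduces is bounded by a constant and can be absorbed into the signature discretization, so the exchange argument carries over essentially verbatim, yielding Theorem~\ref{thm:skcc}.
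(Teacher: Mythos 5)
Your opening move — modeling cancellation at time $t$ as a virtual item $b^t$ with size $\min(X_b,t)$, grouping the virtual copies of each original item, and requiring at most one copy per group on any realization path — is exactly the paper's reduction to the generalized stochastic knapsack problem \GSK. But two of your subsequent steps contain genuine gaps. First, you replace the reward of the virtual item $(i,t)$ by the \emph{deterministic} quantity $\E[R_i\cdot\mathbf{1}(X_i<t)]$. This discards the size--profit correlation, and the correlation matters: profit is collected only on those size realizations that actually fit in the remaining capacity, so an item whose large rewards occur only on large (non-fitting) sizes is not interchangeable with an item carrying the same unconditional expected reward on every realization. The paper instead carries the full effective-profit vector $p_b(w)=\E[P_b\mid X_b=w]\Pr[X_b=w]$ into the signature, i.e., the signature records a profit coordinate \emph{per discretized size}, which is what makes the exchange argument sound under correlation.

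Second, and more centrally, you identify the hard part correctly — showing that a near-optimal policy admits the rigid restructured form — but then assert that ``the exchange argument carries over essentially verbatim.'' It does not, and this is precisely where the paper does new work. The structure theorem needed here is the existence of a near-optimal \emph{block-adaptive} policy (Lemma~\ref{lm:SKCRC_blockpolicy}) for \GSK; the construction of Bhalgat et al.\ reassigns items across the decision tree greedily by profit density, and such reassignment can place two virtual copies of the same original item on one realization path, violating feasibility. The paper gives an essentially different proof: it partitions each leftmost (all-zero-size) path of the decision tree into segments with small total expected size and nearly equal subtree profits, simulates the original policy segment-by-segment using a main knapsack plus an $\e\C$ auxiliary knapsack, and observes that every realization path of the resulting block-adaptive policy is a realization path of the original policy — which is what preserves the one-per-group constraint for free. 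Your proposed fix for the group constraint (``any over-count \ldots can be absorbed into the signature discretization'') is not valid: using two virtual copies of the same item on one path is a hard infeasibility (e.g., inserting a job and also its cancelled version), not a quantity that can be rounded away. The paper enforces the constraint exactly, in the dynamic program, by enumerating for each group $\B_i$ the placements of its members into blocks such that no realization path receives more than one.
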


We use \SKC\ to denote the stochastic knapsack problem where cancelations are allowed
(the size and profit of an item are not correlated).
Based on Theorem~\ref{thm:skcc} and the algorithm in~\cite{bhalgat20112},
we obtain a generalization of the result in \cite{bhalgat20112} as follows.

\begin{thm}
\label{thm:norelax1}
For any $\e>0$, there is a polynomial time algorithm that finds a $(2+\e)$-approximate
adaptive policy for \SKC.
\end{thm}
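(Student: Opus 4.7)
The plan is to derive Theorem~\ref{thm:norelax1} by composing Theorem~\ref{thm:skcc} with a capacity-reduction step of the type developed in~\cite{bhalgat20112}. The invocation of Theorem~\ref{thm:skcc} is immediate: since \SKC\ is the restriction of \SKCRC\ in which rewards are deterministic and independent of sizes, Theorem~\ref{thm:skcc} produces, for any constant $\e' > 0$, a polynomial-time adaptive policy $\wt\pi$ for the given \SKC\ instance that uses at most $(1+\e')\C$ capacity and earns expected profit at least $(1-\e')\opt$, where $\opt$ is the optimum profit of \SKC\ with capacity $\C$. Choosing $\e'$ to be an appropriate constant fraction of $\e$, the first step of the proof is this invocation.

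The second step converts $\wt\pi$ into a policy that respects the original capacity $\C$ while losing at most a factor of roughly $2$ in expected profit. Following~\cite{bhalgat20112}, I would construct two candidate capacity-$\C$ policies whose expected profits sum to at least $\E[\profit(\wt\pi)]$, so that returning the better of the two yields a $(2+\e)$-approximation. The natural choice of the candidates is: (a) the \emph{early truncation} $\wt\pi_{\mathrm{early}}$, obtained by running $\wt\pi$ but refusing to start the first (random) item whose completion would push the running size above $\C$, which automatically respects capacity $\C$; and (b) a policy capturing the tail of $\wt\pi$'s execution from the overflow item onward---either the best single-item insertion, or a separate invocation of Theorem~\ref{thm:skcc} applied to a capacity-$\e'\C$ subinstance. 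Because the items past the overflow point consume only $\e'\C$ additional capacity under $\wt\pi$, the tail contribution is matched by the second candidate up to a $(1+O(\e'))$ factor, and the maximum of the two candidates' profits is at least $\E[\profit(\wt\pi)]/(2+O(\e')) \geq \opt/(2+\e)$.

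The main obstacle will be to verify that the reduction of~\cite{bhalgat20112}, which is formulated there for the plain \SK\ problem without cancellations, carries over cleanly to \SKC. I expect this to be a modest technical matter rather than a conceptual barrier: cancellations only enlarge the policy's action space, and since canceled items earn no profit, the truncation $\wt\pi_{\mathrm{early}}$ inherits all of the structural properties the reduction relies on. A careful write-up will nonetheless need to handle a few corner cases---for example, when the overflow item itself is subsequently canceled by $\wt\pi$, or when one must reconcile the \SKC\ cancellation model with the accounting of remaining capacity immediately after the overflow---but no new technical machinery beyond Theorem~\ref{thm:skcc} and the reduction scheme of~\cite{bhalgat20112} should be required.
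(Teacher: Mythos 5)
Your top-level plan---invoke Theorem~\ref{thm:skcc} to obtain a $(1+\e')$-approximate policy with capacity $(1+\e')\C$ and then apply the capacity-reduction machinery of~\cite{bhalgat20112}---is exactly the route the paper takes. The gap is that the second step, which you defer as ``a modest technical matter,'' is where essentially all of the proof content lies, and the particular reduction you sketch does not work. First, the ``early truncation'' that refuses to start the first item whose completion would push the running size above $\C$ is not a legal policy in this model: an item's size is revealed only after it is inserted, so the decision to start it cannot be conditioned on its realized size. Second, even with a valid truncation, the decomposition of $\E[\profit(\wt\pi)]$ has three parts---the profit of items completed within $[0,\C]$, the profit of the single item that straddles $\C$, and the profit of items inserted after it---and the straddling item alone may carry profit as large as $\opt$, so the better of your two candidates only certifies a factor of $3$, not $2+\e$. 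The scheme of~\cite{bhalgat20112} avoids this precisely by splitting items into large- and small-profit classes, handling the large-profit ones by enumerating $O(1)$-item policies (Lemma~8.1 of~\cite{dean2008approximating}) and showing that for small-profit items the straddling loss in the two-knapsack experiment is only $O(\e)\opt$.

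Moreover, that scheme does not extend to \SKC\ as a black box; the required modifications are the substance of the paper's argument. With cancelations each item must be replaced by the family $\{b_i^t\}$ of its cancelation thresholds, ``small profit'' must be redefined as $P_i\cdot\Pr[X_i<t]<\e\opt$ (the profit actually collected, not the nominal $P_i$), and the straddling loss becomes $\Delta_P=\sum_{b_i^t,C}\Pr\bigl[\mathcal{E}(b_i^t,C)\bigr]\cdot\Pr\bigl[C<X_i<t\bigr]\cdot P_i$, which the paper bounds by $\e\opt$ using $\Pr[C<X_i<t]\le\Pr[X_i>C]\cdot\Pr[X_i<t]$ together with the fact that the total overflow probability is at most $1$. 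One must also verify that the $O(1)$-item enumeration of~\cite{dean2008approximating} extends to cancelations. None of these points appears in your write-up, so the proposal as it stands is incomplete at the decisive step.
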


\vspace{0.2cm}
\topic{{\bf Bayesian Online Selection:}}
The technique developed for \SKCRC\ can be used to obtain
the following result for an interesting variant of \SK,
where the size and the profit of an item are revealed before the decision whether to select the item is made.
We call this problem the {\em  Bayesian online selection problem subject to a knapsack constraint}  (\SSP).
The problem falls into the framework of Bayesian online selection problems (\BOSP) formulated in \cite{kleinberg2012matroid}.
\BOSP\ problems subject to various constraints have attracted a lot of attention due to their applications to mechanism design 
\cite{hajiaghayi2007automated, chawla2010multi, alaei2011bayesian, kleinberg2012matroid}.
\SSP\ also has a close relation with  
the {\em knapsack secretary} problem~\cite{babaioff2007knapsack}
(See Section~\ref{sec:ssp} for a discussion).

\begin{thm}
\label{thm:ssp}
For any $\e>0$, there is a polynomial time algorithm that finds a $(1+\e)$-approximate
adaptive policy for \SSP\ when the capacity is relaxed to $1+\e$.
\end{thm}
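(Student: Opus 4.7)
The plan is to adapt the Poisson-approximation framework developed for \SKCRC\ in Theorem~\ref{thm:skcc} to the online arrival model of \SSP, in which the realized $(s_i,p_i)$ of item $i$ is revealed immediately before the accept/reject decision. The starting observation is that any adaptive policy is characterized, up to the induced distribution of accepted items, by a collection of per-item, per-realization acceptance probabilities $x_{i,\omega}$; once these are fixed, the accepted/not-accepted indicators are mutually independent across items, so the total accepted size is a sum of independent random variables, exactly the setting in which Le Cam's theorem applies. This reduces the problem to finding a good collection $\{x_{i,\omega}\}$ that (i) attains near-optimal expected profit, (ii) induces a near-optimal compound-Poisson signature of used capacity, and (iii) can be executed online within capacity $(1+\e)\C$.

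First, I would discretize the joint $(s_i,p_i)$ distributions so that each item has a support of constant size $O(\poly(1/\e))$, losing only $O(\e)$ additively in both profit and capacity, using the same rounding scheme as in Theorem~\ref{thm:skcc}. Next, I would split items into heavy ones (contributing expected used capacity at least some $\theta=\poly(\e)$ under OPT) and light ones; only $O(1/\theta)$ heavy items can be accepted with non-negligible probability, so the set of heavy items used by OPT can be guessed in polynomial time. For each such guess, the compound Poisson approximation implies that the distribution of the total size contributed by accepted light items is determined up to total-variation error $\e$ by the signature $\sig = \sum_{i\text{ light}}\sum_{\omega} \Pr[\omega]\, x_{i,\omega}\, \mathbf{e}_{s_i(\omega)}$, which takes only polynomially many relevant values since it has $O(\poly(1/\e))$ coordinates each bounded by a polynomial. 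I would then enumerate all viable signatures and, for each, compute light-item acceptance rules maximizing expected profit subject to inducing that signature, via LP rounding for the resulting constant-dimensional packing problem.

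The main obstacle will be converting the resulting $\{x_{i,\omega}\}$ into an actual online policy whose \emph{realized} total size never exceeds $(1+\e)\C$, not merely in expectation. Since the accepted light items are independent, bounded Bernoulli contributions, I would apply a Bernstein or Chernoff bound to show that the realized total size concentrates around the signature's target to within $\e\C$ with probability $1-o(1)$; on the rare failure event, the policy simply rejects the remaining arrivals, and the missed profit is absorbed into the $O(\e)$ loss. Heavy items need slight additional care --- the enumerated configuration reserves capacity for them, and any heavy item whose realization would overflow is rejected --- but since there are only $O(1/\e)$ of them, a union bound handles their failure events. Combining these pieces yields a policy whose expected profit is within $(1+\e)$ of OPT while respecting the relaxed capacity $(1+\e)\C$.
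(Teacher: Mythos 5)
There is a genuine gap, and it sits at the very first step of your plan. You assert that an adaptive policy is characterized, up to the induced distribution of accepted items, by per\mbox{-}item, per\mbox{-}realization acceptance probabilities $x_{i,\omega}$, and that once these are fixed the acceptance indicators are mutually independent across items, so that Le Cam's theorem applies to the total accepted size. This is false for adaptive policies: the decision to accept item $i$ at realization $\omega$ depends on the history (which items were accepted earlier and at what sizes, i.e., the remaining capacity), and in \SSP\ even the \emph{order} in which items are examined is chosen adaptively. The marginals $x_{i,\omega}$ therefore do not determine the joint distribution of the accepted set, and that joint distribution is emphatically not a product distribution --- the whole value of adaptivity lies in the correlations (e.g., ``accept a large realization of $b_j$ only if $b_i$ came up small''). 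Restricting to policies with independent per-realization acceptance rules is a strict weakening whose loss is a constant factor, not $1+O(\e)$: capacity augmentation by $\e$ neutralizes the benefit of adaptivity only for items occupying a $o(1)$ fraction of the knapsack (this is exactly what the Poisson approximation captures), but for items occupying a constant fraction of $\C$ the adaptive interleaving of decisions cannot be simulated by independent coin flips, and your heavy/light split plus concentration for the light part does not repair this. Your LP over marginals is essentially the standard expected-capacity relaxation, whose gap against the adaptive optimum is a constant bounded away from $1$.

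The paper avoids this by never collapsing the policy to marginals. It observes that at the moment an optimal policy examines $b_i$, its decision is described by an \emph{acceptance set} $D\subseteq\calD_i$ (accept iff $(X_i,P_i)\in D$), so $b_i$ with acceptance set $D$ behaves like a single item $(X_i^D,P_i^D)$ that realizes to $(0,0)$ outside $D$. This turns \SSP\ into an instance of \GSK: one group $\B_i$ per original item, containing one surrogate item per candidate acceptance set, with the constraint that at most one member of $\B_i$ is used on any realization path. The number of acceptance sets is a priori exponential, and the paper cuts it to a polynomial via geometric discretization of sizes and profits into $O(\log n)$ classes each, the monotonicity observation that an optimal $D$ is downward/upward closed (a ``staircase'' in the grid), a count of monotone lattice paths, and a separate treatment of huge-profit realizations. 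Crucially, \GSK\ is then solved by the block-adaptive machinery of Lemma~\ref{lm:SKCRC_blockpolicy}: adaptivity is preserved \emph{between} blocks, and Le Cam's theorem is applied only \emph{within} a block, where the inserted items' sizes genuinely are independent. That is the ingredient your proposal is missing; without it, the reduction to ``a sum of independent random variables'' is not available.
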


As a byproduct of our discretization procedure, we also give a linear time FPTAS for the stochastic knapsack problem
where each item has unlimited number of copies (denoted as \USK), if we relax the knapsack capacity by $\e$.
The problem has been studied extensively
under different names and
optimal adaptive policies
are known for several special distributions~\cite{derman1978renewal,derman1979renewal, assaf1982optimal,assaf1982renewal}.
However, no algorithmic result about general (discrete and continuous) distributions is known before.
The details can be found in Appendix~\ref{app:usk}.

\noindent
\topic{\bf Our techniques:}
\SKCRC\ and \SSP\ are more technically interesting
since their solutions are adaptive
policies, which do not necessarily have polynomial size representations.
So it is not even clear at first sight where to use the Poisson approximation technique.
As before, we first discretize the distributions.
In the second step, we attempt to enumerate all possible {\em block-adaptive policies},
a notion which is introduced in \cite{bhalgat10}.
In a block-adaptive policy, instead of inserting them items one by one,
we insert the items block by block. In terms of the decision tree of a policy,
each node in the tree corresponding to the insertion of a block of items.
A remarkable discovery in \cite{bhalgat10} is that there exists a block-adaptive policy
that approximates the optimal policy and has only $O(1)$ blocks in
the decision tree (the constant depends on $\e$) for \SK.
However, their proof does not easily generalize to \SKCRC.
We extend their result to \SKCRC\ with a essentially different proof,  
which might be of independent interest.
Fixing the topology of the decision tree of the block-adaptive policy,
we can enumerate the signatures of all blocks in polynomial time,
and check for each signature whether there exists a block-adaptive policy with the signature
using dynamic programming.
Again, in the analysis, we use the Poisson approximation theorem to
argue that two block adaptive policies with the same tree topology and signatures
behave similarly.


\subsection{Other Related Work}
Recently, stochastic combinatorial optimization problems
have drawn much attention from the theoretical computer science community.
In particular, the two-stage stochastic optimization models for many classical combinatorial problem
have been studied extensively.
We refer interested reader to \cite{swamy2006approximation} for a comprehensive survey.

There is a large body of literature on \EUM\ and \TPM,
especially for specific combinatorial problems and/or special utility functions.
Loui~\cite{loui1983optimal} showed that the \EUM\ version of the shortest path problem
reduces to the ordinary shortest path (and sometimes longest path) problem
if the utility function is linear or exponential.
For the same problem,
Nikolova, Brand and Karger~\cite{nikolova2006optimal} identified more specific utility and distribution combinations
that can be solved optimally in polynomial time.
Nikolova, Kelner, Brand and Mitzenmacher~\cite{nikolova2006stochastic} studied the \TPM\ version
of shortest path when the distributions of the edge lengths are normal, Poisson or exponential.
Nikolova~\cite{nikolova2010approximation} extended this result to an FPTAS
for any problem for normal distributions, if the deterministic version of the problem
has a polynomial time algorithm.
Many heuristics for the stochastic shortest path problems
have been proposed to deal with more general utility functions 
(see e.g., \cite{murthy1997exact,murthy1998stochastic,bard1991arc}).
However, either their running times are exponential in worst cases or there is no provable performance
guarantee for the produced solution.
The \TPM\ version of the minimum spanning tree problem has been studied in~\cite{ishii1981stochastic,geetha1993stochastic},
where polynomial time algorithms have been developed for Gaussian distributed edges.

The bin packing problem is a classical NP-hard problem.
It is well known that it is hard to approximate within a factor of $3/2 - \e$ from a reduction from the subset sum problem.
Alternatively, the problem admits an asymptotic PTAS, i.e.,
it is possible to find in polynomial time a packing using at most $(1 + \e)\opt + 1$ bins for any $\e>0$ \cite{fernandez1981bin}.
The stochastic model where all items follow the same size distribution have been studied extensively
in the literature (see, e.g., \cite{coffman1980stochastic, rhee1993line}). However, these works
require that the actual items sizes are revealed before put in the bins and their focus is to
design simple rules that achieve nearly optimal packings.

Kleinberg, Rabani and Tardos \cite{kleinberg1997allocating} first considered the fixed set version of
the stochastic knapsack problem with Bernoulli-type distributions.
Their goal is to find a set of items with maximum total profit subject to the constraint that
the overflow probability is at most a given parameter $\gamma$.
They provided a polynomial-time $O(\log 1/\gamma)$-approximation.
For exponentially distributed items, Goel and Indyk \cite{goel1999stochastic} presented a bi-criterion PTAS.
Chekuri and Khanna \cite{chekuri2000ptas} pointed out that a PTAS for Bernoulli distributed items can be obtained
using their techniques for the multiple knapsack problem.
For Gaussian distributions, Goyal and Ravi \cite{goyal2009chance} obtained a PTAS.

The adaptive stochastic knapsack problem and several of it variants
have been shown to be PSPACE-hard \cite{dean2008approximating}, which implies that it is impossible in polynomial time to
construct an optimal adaptive policy, which may be exponentially large and arbitrarily complicated.
Dean, Goemans, and Vondrak~\cite{dean2008approximating} first studied \SK\
from the perspective of approximation algorithms and gave an algorithm with an approximation factor of $3+\e$.
In fact, their algorithm produces a non-adaptive policy (a permutation of items) which implies
the {\em adaptivity gap} of the problem, the maximum ratio between the expected values achieved by
the best adaptive and non-adaptive strategies, is a constant.
Using the technique developed for the $(1+\e)$-approximation using $\e$ extra capacity,
Bhalgat, Goel and Khanna~\cite{bhalgat10} also gave
an improved $(\frac{8}{3}+\e)$-approximation without extra capacity.
Stochastic multidimensional knapsack (also called stochastic packing)
has been also studied~\cite{dean2005adaptivity,bhalgat10,bansallp}.
The stochastic knapsack problem can be formulated as an exponential-size Markov decision process (MDP).
Recently, there is a growing literature on approximating the optimal policies for exponential-size MDPs in
theoretical computer science literature (see e.g.,~\cite{dean2008approximating, guha2009multi, gupta2011approximation, kleinberg2012matroid}).

\eat{
The secretary problem is a classical online selection problem introduced by Dynkin~\cite{dynkin63}.
Recently, secretary problems enjoys a revival and many generalizations have been studied extensively
(see e.g., \cite{babaioff2007matroids, babaioff2007knapsack, im2011secretary, chakraborty2012improved, jaillet2012advances})
The prophet inequalities were proposed in the seminal work of Krengel and Sucheston~\cite{krengel1977semiamarts}
and have been studied extensively since then.
Computer science researchers have identified interesting relationships between prophet inequalities and mechanism design~\cite{hajiaghayi2007automated, alaei2011bayesian}. Motivated by the relation, nearly optimal solutions are obtained recently for
the multiple-choice prophet inequalities~\cite{alaei2011bayesian} and the matroid prophet inequality~\cite{kleinberg2012matroid}.
We note that performances in these work are measured by
comparing the solutions of the online policies with the offline optimum. 
Complementarily, our work compares our policies with the optimal online policies.
}

\BOSP\ problems are often associated with the name {\em prophet inequalities}
since the solutions of the online algorithms are often compared with 
``the prophet's solutions" (i.e., the offline optimum).
The prophet inequalities were proposed in the seminal work of Krengel and Sucheston~\cite{krengel1977semiamarts}
and have been studied extensively since then.
The secretary problem is a also classical online selection problem 
introduced by Dynkin~\cite{dynkin63}.
Recently, both problems enjoy a revival due to their connections to mechanism design
and many generalizations have been studied extensively
\cite{babaioff2007matroids, babaioff2007knapsack, im2011secretary, chakraborty2012improved, jaillet2012advances,
hajiaghayi2007automated, chawla2010multi, alaei2011bayesian,  kleinberg2012matroid}.
We note that performances in all the work mentioned above are measured by
comparing the solutions of the online policies with the offline optimum. 
Complementarily, our work compares our policies with the optimal online policies.

Finally, we would like to point out that Daskalakis and Papadimitrious~\cite{daskalakis2007computing}
recently used Poisson approximation
in approximating mixed Nash Equilibria in anonymous games. However, the problem and the technique developed there
are very different from this paper. 

\vspace{0.25cm}
\topic{\bf Prior Techniques:}
As mentioned in the introduction, it is a common challenge to deal with
the convolution of a set of random variables (directly or indirectly).
To address this issue, a number of techniques have been developed in the literature.
Most of them only work for
special distributions~\cite{loui1983optimal,nikolova2006optimal,nikolova2006stochastic,goel1999stochastic, goyal2010ptas, nikolova2010approximation},
such as Gaussian, exponential, Poisson and so on.
There are much fewer techniques that work for general distributions.
Among those, the effective bandwidth technique \cite{kleinberg1997allocating} and
the linear programming technique~\cite{dean2008approximating,bansallp,gupta2011approximation}
have proven to be quite powerful for many problems,
but the approximation factors obtained are constants at best (no exception is known so far).
In order to obtain (multiplicative/additive) PTAS, two techniques are developed very recently:
one is the discretization technique~\cite{bhalgat10} for the stochastic knapsack problem
and the other is the Fourier decomposition technique~\cite{li2011maximizing} for the utility maximization problem.
However, both of them have certain limitations.
The discretization technique~\cite{bhalgat10} typically reduces a stochastic optimization problem to
a complicated enumeration problem (in some sense, it is an $O(\log n)$-dimensional optimization problem
since the distributions are discretized into $O(\log n)$ equivalent classes).
If the structure of the problem is different from or has more constraints than the knapsack problem,
the enumeration problem can become overly complicated or even intractable 
(for example, the \SKCRC\ problem or the \TPM\ version of the shortest path problem). 
In the Fourier decomposition technique~\cite{li2011maximizing},
due to the presence of complex numbers, we lose certain monotonicity property in the reduction from the stochastic optimization problem to an deterministic optimization problem,
thus it is impossible to obtain something like Theorem~\ref{thm:main2} using that technique.

\section{Expected Utility Maximization}
\label{sec:fixset}

\eat{
In this section, we do not consider correlations between item size and profit. Each item $b$ has a fixed profit $P_b$. The objective is to find a set of items $S$ with minimum overflow probability subject to the constraint that $\sum_{b \in S}P_b > P_{obj}$.
Suppose the optimal solution is $S_{OPT}$, then for any constant $0<\lambda<1$, the approximation algorithm can find a set $S$ with in time $poly(|\B|)$ such that
$$\Pr[X(S) \leq 1+\lambda]\geq\Pr[X(S_{OPT})\leq1]-\lambda$$
}

We prove Theorem~\ref{thm:main1} and Theorem~\ref{thm:main2} in this section.
For each item $b\in \B$, we use $\pi_{b}(.)$ to denote probability distribution of the weight  $X_b$ of $b$.
We use $\calF\subset 2^\B$ to denote the set of feasible solutions.
For example, in the minimum spanning tree problem,
$\B$ is the set of edges and $\calF$ is the set of all spanning trees.
Since we are satisfied with an $\epsilon$ additive approximation,
we can assume w.l.o.g. the utility function $\mu(x)=0$ if $x\geq \C$ for some constant $\C$
(e.g., we can choose $\C$ to be a constant such that $\mu(x)<\e$ if $x\geq \C$).
The support of $\pi_b(.)$ is assumed to be a subset of $[0, \C]$.
By scaling, we can assume $\C=1$ and  $0\leq \mu(x)\leq 1$ for $x\geq 0$.
We also assume $\mu$ is $\alpha$-Lipschitz where $\alpha$ is a constant that does not depend on $\e$.
It is straightforward to extend our analysis to the case where $\alpha$ depends on $\e$.
We first consider the general \EUM\ problem and then focus on the \EUMMono\ problem
where the utility function is monotone nonincreasing.

We start by bounding the total expected size of solution $S$
if $\mu(S)$ is not negligible.
This directly translates to an upper bound of the number of items with large expected weight in $S$,
which we handle separately.
The proof is fairly standard and can be found in the appendix.

\begin{lem}
\label{lm:boundexp}
Suppose each item $b \in \B$ has a non-negative random weight $X_b$
taking values from $[0,\alpha\C)]$ for some $\alpha\geq 1$.
Then,
$\forall S \subseteq U$, $\forall \frac{1}{2} > \e > 0$,
if $\E[\mu(X(S))] \geq \e$, then
$\E\left[X(S)\right] \leq 3\alpha/\e$.
\end{lem}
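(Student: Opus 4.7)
The plan is to prove the contrapositive by a short second-moment (Chebyshev) argument: assume $E := \E[X(S)] > 3\alpha/\e$ and show $\E[\mu(X(S))] < \e$, contradicting the hypothesis.

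The first step is to reduce to a lower-tail estimate on $X(S)$. Because the rescaling $\C = 1$ gives $\mu(x) = 0$ for $x \geq 1$ and $0 \leq \mu \leq 1$ elsewhere, the trivial bound $\E[\mu(X(S))] \leq \Pr[X(S) < 1]$ is immediate, so it suffices to prove $\Pr[X(S) < 1] < \e$ under the hypothesis $E > 3\alpha/\e$.

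Next I would bound the variance of $X(S)$ using the pointwise bound $X_b \in [0,\alpha\C]=[0,\alpha]$: this gives $X_b^2 \leq \alpha X_b$ pointwise and hence $\Var(X_b) \leq \E[X_b^2] \leq \alpha \E[X_b]$; by mutual independence,
$$
\Var(X(S)) \;=\; \sum_{b \in S} \Var(X_b) \;\leq\; \alpha\, E.
$$
Chebyshev's inequality then yields
$$
\Pr[X(S) < 1] \;\leq\; \Pr\bigl[|X(S) - E|\geq E-1\bigr] \;\leq\; \frac{\alpha E}{(E-1)^2},
$$
and a routine calculation (using $\e < 1/2$ and $\alpha \geq 1$, which force $E > 6$ and so $E-1 > (5/6)E$) shows the right-hand side is strictly less than $\e$.

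The only (mild) obstacle is recognizing the right tail: Markov's inequality applied directly to $X(S)$ bounds $\Pr[X(S) \geq 1]$, which is useless here (indeed usually trivial, since $E \gg 1$), because what is genuinely needed is a \emph{lower}-tail bound saying that $X(S)$ rarely falls below $1$ when its mean is much larger than the per-summand bound $\alpha$. The variance-to-mean inequality $\Var(X(S)) \leq \alpha\,\E[X(S)]$, which comes for free from boundedness, is exactly what lets Chebyshev convert a large mean into the required concentration away from $0$.
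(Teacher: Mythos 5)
Your proposal is correct and is essentially the same argument as the paper's: both rest on the variance-to-mean bound $\Var[X(S)]\leq \alpha\,\E[X(S)]$ from boundedness and independence, combined with the observation $\E[\mu(X(S))]\leq \Pr[X(S)<\C]$ and a second-moment estimate on the event $\{X(S)<\C\}$. The only difference is presentational: the paper runs a contradiction by lower-bounding $\Var[X(S)]$ by $\Pr[X(S)<\C]\,(\E[X(S)]-\C)^2$, which is exactly your Chebyshev step rearranged, so no new idea is involved.
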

\eat{
\begin{proof}
Since $X_b$s are independent, $\Var[X(S)] = \sum_{b\in S}\Var[X_b]$.
As $X_b \in [0,\C]$, we have $\Var[X_b]\leq\E[X_b^2]\leq\C\cdot\E[X_b]$.
So $\Var[X_S] \leq \sum_{b\in S}\C\cdot\E[X_b] = \C\cdot\E[X(S)]$.
Suppose for contradiction that $\E[\mu(X(S))]  \geq \e$ and $\E[X(S)] > 3/\e$.
Then, we can see that
\begin{eqnarray*}
  \Var[X(S)] &>& \Pr[X(S) < \C] \cdot \left(\E[X(S)]-\C\right)^2 \\
  &\geq& \E[\mu(X(S))] \cdot \left(\E[X(S)]-\C\right)^2 \\
   &\geq& \e\cdot\left(\E[X(S)]-\C\right)^2 \\
   &=& \e\left(\E[X(S)] - 2\C + \C^2/\E[X(S)]\right)\E[X(S)] \\
   &>& \e(3/\e - 2\C + \e\C^2/3)\E[X(S)] \\
   &>& (3-2\e\C)\E[X(S)]
   > \C\cdot\E[X(S)],
\end{eqnarray*}
which contradicts the fact that $\Var[X(S)]\leq\C\cdot\E[X(S)]$.
\end{proof}
}

Let $S^*$ denote the optimal feasible set and $\opt$ the optimal value.
If $\opt=\E[\mu(X(S^*))]\leq\e$, then any feasible solution achieves
the desired approximation guarantee since $\opt-\e\leq 0$.
Hence, we focus on the other case where $\opt>\e$.
We call an item $b$ \textit{heavy item} if $\E[X_b]>\e^{10}$.
Otherwise we call it \textit{light}.
By Lemma~\ref{lm:boundexp}, we can see that
the number of heavy items in $S^*$ is at most $\frac{3}{\e^{11}}$.

\vspace{0.3cm}
\topic{Enumerating Heavy Elements}
We  enumerate all possible set of heavy items with size at most $3/\e^{11}$.
There are at most $n^{3/\e^{11}}$ such possibilities.
Suppose we successfully guess the set of heavy items in $S^*$.
In the following parts,
we mainly consider the question that given a set $H$ of heavy items,
how to choose a set $L$ of light items such that their union
$S$ is a feasible solution, and $\E[\mu(S)]$ is close to optimal.

\vspace{0.3cm}
\topic{Dealing with Light Elements}
Unlike heavy items, there may be many light items in $S^*$, which makes
the enumeration computationally prohibitive.
Our algorithm consists of the following steps.
First, we discretize the weight distributions of all items.
After the discretization, there are only a constant number of discretized weight values
in $[0,\C]$. The discretized distribution can be thought as a vector with constant dimensions.
Then, we argue that for a set $L$ of light items (with certain conditions), the
distribution of the sum of their discretized weights behaves similarly to a single item
whose weight follows a compound Poisson distribution.
The compound Poisson distribution is
completely determined by a constant dimensional vector (which we call the signature of $L$)
which is the sum of the distribution vectors in $L$.
The argument is carried out by using the Poisson approximation theorem developed by Le Cam \cite{le1960approximation}.
Then, our task amounts to enumerating all possible signatures, and checking whether
there is a set $L$ of light items with the signature and $H\cup L$ is a feasible set in $\calF$.
Since the number of possible signatures is polynomial,
our algorithm runs in polynomial time.
Now, we present the details of our discretization method.

\subsection{Discretization}
\label{sec:discretization}
In this section, we discuss how to discretize the size distributions
for items, using parameter $\e$.
W.l.o.g., we assume the range of $X_b$ is $[0, \C]$ for all $b$.
Our discretization is similar in many parts to the one in  \cite{bhalgat10},
however, ours is much simpler.

For item $b$, we say $b$ realizes to a ``large'' size if $X_b>\e^4$.
Otherwise we say $b$ realizes to a ``small'' size.
The discretization consists of two steps.
We discretize the small size region in step 1 and
the large size region in step 2.
We use $\wt X_b$ to denote the size after discretization
and $\wt{\pi}_b$ its distribution.


\vspace{0.3cm}
\topic{\bf Step 1. Small size region}
In the small size region, $\wt X_{b}$ follows a Bernoulli distribution, taking
only values $0$ and $\e^4$. The probability values $\Pr[\wt X_{b}=0]$
and $\Pr[\wt X_{b}=\e^4]$ are set such that
$$
\E[\wt X_b \mid X_b \leq \e^4] = \E[X_b \mid X_b \leq \e^4].
$$

More formally, suppose w.l.o.g. that
there is a value $0\leq d\leq \e^4$ such that
$
\Pr[X_b \geq d \mid X_b \leq \e^4] \cdot \e^4 = \E[X_b \mid X_b \leq \e^4]
$.
We create a mapping between $X_b$ and $\wt X_b$ as follows:
$$
\wt X_b =
\left\{
  \begin{array}{ll}
    0, & \hbox{$0 \leq X_b < d$;} \\
    \e^4, & \hbox{$d \leq X_b \leq \e^4$;} \\
    X_b, & \hbox{$X_b > \e^4$.}
  \end{array}
\right.
$$
In the appendix, we discuss the case where such value $d$ does not exist.

\vspace{0.3cm}
\topic{\bf Step 2. Large size region}
If $X_b$ realizes to a large size, we simply discretize it as follows:
Let $\wt X_{b} = \lfloor\frac{X_b}{\e^5}\rfloor\e^5$ (i.e., we round a large size down to a multiple of $\e^5$).

The above two discretization steps are used throughout this paper.
We denote the set of the discretized sizes by $\dsize=\{\size_0, \size_1,\ldots, \size_{z-1}\}$ where
$\size_0 = 0, \size_1 = \e^5, \size_2 = 2\e^5, \size_3 = 3\e^5, \ldots, \size_{z-1}$.
Note that $\size_1=\e^5, \ldots, \size_{1/\e-1}=\e^4-\e^5$ are also included in $\dsize$,
even though their probability is $0$.
It is straightforward to see that $ |\dsize|= z= O(\C/\e^5)$.
This finishes the description of the discretization.

\eat{
\vspace{0.3cm}
\topic{\bf Step 3. Discretizing probability values}
Suppose $\pi_b$ is the size distribution after step 1 and 2.
Note that $\pi_b(w)$ takes nonzero values only for $w$ being a multiple of $\e^5$.
The new probability distribution for $\wt X_{b}$ is defined to be
$$
\wt{\pi}_{b}(w)=\left\lceil\pi_b(w) \cdot \frac{n^2}{\e^6} \cdot (1 - \frac{2\e}{n^2}) \right\rceil \frac{\e^6}{n^2}.
$$
The sum of the new probability values is $\sum_w \wt{\pi}_{b}(w) \leq 1 - \frac{2\e}{n^2} + \frac{\C}{\e^5} \cdot \frac{\e^6}{n^2} < 1$,
and we let $\wt X_{b}$ realize to $\C+\e$ with the rest of probability.
This finishes the description of the discretization for item $b$.

For any item $b$, $\E[\wt X_b] \leq \E[X_b] \leq (1+\e)\E[\wt X_b]$ after step 1 and step 2 of discretization.
For step 3, the total variation distance of $\wt X_b$ and $X_b$ can be bounded as follows:
\begin{eqnarray*}
  \Delta(\wt X_b, X_b) &=& \sum_w \left| \wt\pi_b(w) - \pi_b(w) \right|
   \leq \sum_w \max \left\{ \frac{2\e \pi_b(w)}{n^2}, \frac{\e^6}{n^2} \right\} \\
   &\leq& \frac{2\e}{n^2} \sum_w \pi_b(w) + \frac{\e^6}{n^2} \sum_w 1
   \leq \frac{2\e}{n^2} + \frac{\e^6}{n^2} \cdot \frac{\C}{\e^5}
   < 4\e/n^2
\end{eqnarray*}
Therefore
$
\left|\E[\wt X_b] - \E[X_b]\right| < 4\e/n^2 \cdot (\C+\e) < 5\e/n^2
$
for this step.
Hence $\E[\wt X_b] - 5\e/n^2 \leq \E[X_b] \leq (1+\e)\E[\wt X_b] + 5\e/n^2$ after all three steps of the discretization.
}

The following lemma states that for a set of items, the behavior of the sum of their discretized distributions is very close to that of their original distributions.
\begin{lem}
\label{lm:fixdiscretize}
Let $S$ be a set of items such that $\E\left[X(S)\right] \leq 3/\e$.
For any $0 \leq \beta \leq \C$, we have that
\begin{enumerate}
\item $\Pr[X(S) \leq \beta] \leq \Pr[\wt X(S) \leq \beta + \e] + O(\e)$;
\item $\Pr[\wt X(S) \leq \beta] \leq \Pr[X(S) \leq \beta + \e] + O(\e)$.
\end{enumerate}
\end{lem}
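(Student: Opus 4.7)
The plan is to couple $X_b$ and $\wt X_b$ through the pointwise discretization rule and write $D(S) := X(S) - \wt X(S) = D_{\mathrm{s}}(S) + D_{\mathrm{l}}(S)$, where $D_{\mathrm{s}}(S) = \sum_{b\in S}(X_b - \wt X_b)\mathbf{1}[X_b \leq \e^4]$ collects the discretization errors from items realizing to a small size (Step~1) and $D_{\mathrm{l}}(S) = \sum_{b\in S}(X_b - \wt X_b)\mathbf{1}[X_b > \e^4]$ those from items realizing to a large size (Step~2). A standard two-case split reduces the two inequalities of the lemma to the one-sided tail estimates $\Pr[D(S) < -\e] = O(\e)$ and $\Pr[D(S) > O(\e)] = O(\e)$, where any extra constant in the second shift can be absorbed into the $O(\e)$ slack (or removed by rescaling~$\e$).

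For the small-size region, Step~1 is engineered so that $\E[\wt X_b\,\mathbf{1}[X_b \leq \e^4]] = \E[X_b\,\mathbf{1}[X_b \leq \e^4]]$; hence $\E[D_{\mathrm{s}}(S)] = 0$ and every summand is bounded in absolute value by $\e^4$. Combining $|(X_b - \wt X_b)\mathbf{1}[X_b \leq \e^4]| \leq \e^4$ with the bound $\E[|X_b - \wt X_b|\mathbf{1}[X_b \leq \e^4]] \leq 2\E[X_b]$ (triangle inequality plus the Step~1 invariant) yields $\E[((X_b - \wt X_b)\mathbf{1}[X_b \leq \e^4])^2] \leq 2\e^4\E[X_b]$, so independence together with $\E[X(S)] \leq 3/\e$ gives $\Var[D_{\mathrm{s}}(S)] \leq 2\e^4\E[X(S)] = O(\e^3)$. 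Chebyshev's inequality then produces the two-sided bound $\Pr[|D_{\mathrm{s}}(S)| > \e] = O(\e)$. Since Step~2 rounds down, $D_{\mathrm{l}}(S) \geq 0$, so $D(S) \geq D_{\mathrm{s}}(S)$ and $\Pr[D(S) < -\e] \leq \Pr[D_{\mathrm{s}}(S) < -\e] = O(\e)$, which immediately delivers inequality~(1).

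The hard part is inequality~(2): because Step~2 rounds every large-size realization strictly downward, $D_{\mathrm{l}}(S)$ carries a systematic positive bias whose unconditional mean can be as large as $\Theta(1)$, so an unconditional tail bound on $D(S)$ in the positive direction is unattainable. The idea is to bound $D_{\mathrm{l}}(S)$ only on the event $\{\wt X(S) \leq \beta\}$ of interest. Every item contributing to $D_{\mathrm{l}}(S)$ satisfies $X_b > \e^4$, and Step~2 rounds $X_b$ down to a multiple of $\e^5$, so $\wt X_b \geq \e^4$ for each such item (recall $\e^4 \in \dsize$). On $\{\wt X(S) \leq \beta\}$ the number of such items is therefore at most $\wt X(S)/\e^4 \leq \beta/\e^4 \leq 1/\e^4$ (using $\beta \leq \C = 1$), and since each contributes strictly less than $\e^5$ we obtain $D_{\mathrm{l}}(S) < \e$ on this event. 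Combining with the estimate $\Pr[D_{\mathrm{s}}(S) > \e] = O(\e)$ gives $\Pr[\wt X(S) \leq \beta,\ D(S) > 2\e] \leq \Pr[D_{\mathrm{s}}(S) > \e] = O(\e)$, which is inequality~(2) after absorbing the factor of $2$ into the slack or rescaling~$\e$.
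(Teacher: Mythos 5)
Your proof is correct and follows essentially the same route as the paper's: a zero-mean Chebyshev argument (with per-item variance bound $2\e^4\E[X_b]$) for the Step-1 errors, plus the observation that Step-2 rounding errors are at most an $\e$-fraction of the large realizations, each of which is at least $\e^4$. The only difference is organizational --- the paper applies the lemma to each discretization step in sequence while you decompose the total error additively and bound the large-size part on the event $\{\wt X(S)\leq\beta\}$ --- and your explicit handling of the resulting $\beta+2\e$ shift (absorb into the slack or rescale $\e$) is fine.
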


\eat{
\begin{proof}
We prove the lemma for each step of discretization.
To avoid using a lot of notations,
when the context is clear,
we always use $X_b$ to denote the size of $b$ before a particular discretization step
and $\wt X_b$ to denote the size after that step.

\vspace{0.3cm}
\noindent
{\bf Step 1:}
Let $\delta_b = \wt X_b - X_b$. By our discretization,
$\E[\delta_b] =\E[\wt X_b]-\E[X_b]= 0$. Moreover,
\begin{eqnarray*}
  \Var[\delta_b] &=& \E[\delta_b^2] - \E^2[\delta_b] = \E[\delta_b^2]\\
   &=& \Pr[X_b \leq \e^4] \cdot \E[(\wt X_b - X_b)^2 \mid X_b \leq \e^4] \\
   &\leq& \E[(\wt X_b)^2 \mid X_b \leq \e^4] + \E[(X_b)^2 \mid X_b \leq \e^4]\\
   &\leq& \e^4(\E[\wt X_b] + \E[X_b])
   \leq 2\e^4\E[X_b]
\end{eqnarray*}

Let $\delta(S) = \sum_{b \in S}\delta_b$. By linearity of expectation, $\E[\delta(S)] = 0$.
As $X_b$s are independent,
$
\Var[\delta(S)] = \sum_{b \in S} \Var[\delta_b] \leq 2\e^4\E[X(S)] \leq 6\e^3.
$
Therefore, the first inequality can be seen as follows:
\begin{eqnarray*}
  \Pr[X(S) \leq \beta] &=& \Pr[X(S) \leq \beta \wedge \delta(S) \leq \e ]
   + \Pr[X(S) \leq \beta \wedge \delta(S) > \e] \\
   &\leq& \Pr[\wt X(S) \leq \beta + \e] + \Pr[\delta(S) > \e] \\
   &\leq& \Pr[\wt X(S) \leq \beta + \e] + \Var[\delta(S)] / \e^2 \\
   &\leq& \Pr[\wt X(S) \leq \beta + \e] + 6\e
\end{eqnarray*}
The proof for the second inequality is essentially the same and omitted here.%

\vspace{0.3cm}
\noindent
{\bf Step 2:}
Noting that for step 2 we have $\wt X_b \leq X_b \leq (1+\e) \wt X_b$, the lemma is obviously true.

\eat{
\vspace{0.3cm}
\noindent
{\bf Step 3:}
The total variation distance between $X_b$ and $\wt X_b$ is
$\Delta(\wt X_b, X_b) < 4\e/n^2$.
As the number of items in $S$ is at most $n$,  we have that
$$
\Delta(\wt X(S), X(S))\leq \sum_{b\in S} \Delta(\wt X_b, X_b) < 4\e/n < \e.
$$
Therefore, we have
$\Pr[X(S) \leq \alpha] \leq \Pr[\wt X(S) \leq \alpha] + \e$ and
$\Pr[\wt X(S) \leq \alpha] \leq \Pr[X(S) \leq \alpha] + \e$
since $\Delta(X(S), \wt X(S))$ is an upper bound of
$|\Pr[X(S) \in S] -\Pr[\wt X(S) \in S]|$ for any $S\subseteq \mathbb{R}$.
}

This completes the proof of the lemma.
\end{proof}
}

\begin{lem}
\label{lm:utility}
For any set $S$ of items such that $\E\left[X(S)\right] \leq 3/\e$,
\begin{align*}
\left|\,\E[\mu(X(S))] -\E[\mu(\wt X(S))]\, \right| & = O(\e).
\end{align*}
\end{lem}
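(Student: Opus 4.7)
The plan is to reduce the claim to a pointwise Lipschitz bound. Since $\mu$ is $\alpha$-Lipschitz on $[0,\infty)$, one has $|\mu(X(S))-\mu(\wt X(S))| \leq \alpha\,|X(S)-\wt X(S)|$, so it suffices to control $\E[|X(S)-\wt X(S)|]$. I would write $X(S)-\wt X(S) = \delta^{s}(S) + \delta^{\ell}(S)$, where $\delta^{s}(S)$ collects the errors from items realizing in the small region ($X_b \leq \e^{4}$, handled by Step~1) and $\delta^{\ell}(S)$ the errors from the large region (Step~2). The two pieces require quite different analyses: Step~2 is a deterministic rounding, whereas Step~1 is a mean-preserving randomization.

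For $\delta^{\ell}$, rounding down to multiples of $\e^{5}$ loses at most $\e^{5}$ per item, and since every large realization is at least $\e^{4}$, each summand satisfies $|\delta^{\ell}_b| \leq \e\,X_b$; hence $|\delta^{\ell}(S)| \leq \e\,X(S)$ holds deterministically. For $\delta^{s}$, the Bernoulli replacement is mean-preserving, so $\E[\delta^{s}(S)] = 0$, and using $|\delta^{s}_b| \leq \e^{4}$ together with independence a routine variance computation (essentially the one sketched in the proof of Lemma~\ref{lm:fixdiscretize}) gives $\Var[\delta^{s}(S)] \leq 2\e^{4}\,\E[X(S)] \leq 6\e^{3}$, whence $\E[|\delta^{s}(S)|] \leq \sqrt{6\e^{3}} = O(\e^{3/2})$ by Cauchy--Schwarz.

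The main obstacle is that naively combining these bounds yields $\alpha\,\E[|X(S)-\wt X(S)|] \leq O(\e^{3/2}) + \alpha\e\,\E[X(S)]$, which is only $O(1)$ under the hypothesis $\E[X(S)] \leq 3/\e$. I would overcome this by exploiting the assumption $\mu(x) = 0$ for $x \geq \C = 1$, splitting the expectation according to whether $X(S) \leq 2$ or $X(S) > 2$. On the event $\{X(S)\leq 2\}$ the deterministic bound improves to $|\delta^{\ell}(S)| \leq 2\e$, so the Lipschitz estimate contributes at most $\alpha\bigl(\E[|\delta^{s}(S)|] + 2\e\bigr) = O(\e)$. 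On $\{X(S)>2\}$ one has $X(S)+\delta^{\ell}(S) \geq (1-\e)X(S) \geq 3/2$ for small $\e$, hence $\wt X(S) \geq 3/2 - |\delta^{s}(S)|$; on the sub-event $|\delta^{s}(S)| \leq 1/2$ both $X(S)$ and $\wt X(S)$ are at least $\C$, making $\mu(X(S))-\mu(\wt X(S)) = 0$, while the complementary sub-event has probability $O(\e^{3})$ by Chebyshev and, using $|\mu|\leq 1$, contributes at most $O(\e^{3})$. Summing the two cases gives the claimed $O(\e)$ bound.
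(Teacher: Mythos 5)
Your proof is correct, but it takes a genuinely different route from the paper's. The paper passes to the CDFs $P_S$ and $\wt P_S$, integrates by parts so that the Lipschitz constant multiplies $\left|\int_0^\C (P_S(x)-\wt P_S(x))\,\d x\right|$, and then invokes the two-sided Kolmogorov-type comparison of Lemma~\ref{lm:fixdiscretize} to bound that integral by $O(\e)$. You instead exploit the fact that the discretization is a pointwise coupling of $X(S)$ and $\wt X(S)$ and run an $L^1$ (Wasserstein-style) argument: the Lipschitz bound reduces everything to $\E[|X(S)-\wt X(S)|]$, the Step-2 error is deterministically at most $\e X(S)$, the Step-1 error has mean zero and variance $O(\e^3)$, and the troublesome $\e\,\E[X(S)]=O(1)$ term is killed by observing that $\mu$ vanishes beyond $\C$, so realizations with $X(S)>2$ contribute nothing except on a Chebyshev-small event. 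You correctly identified that this truncation is the crux --- without it the naive Lipschitz bound is only $O(1)$ --- and your handling of it is sound. What each approach buys: the paper's route recycles Lemma~\ref{lm:fixdiscretize} (needed elsewhere anyway) and, because it is phrased in terms of CDFs, transfers directly to the one-sided comparison used for \EUMMono\ in the proof of Theorem~\ref{thm:main2}; your route is self-contained and more elementary, avoids the integration by parts, and makes explicit exactly where the hypothesis $\mu(x)=0$ for $x\geq\C$ enters. One cosmetic slip: having declared $\delta=X-\wt X$, the inequality in the last case should read $X(S)-\delta^{\ell}(S)\geq(1-\e)X(S)$ rather than $X(S)+\delta^{\ell}(S)\geq(1-\e)X(S)$; this does not affect the argument.
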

\begin{proof}
For a set $S$, we use $P_S(x)$ and $\wt{P}_S(x)$
to denote the CDFs of $X(S)$ and $\wt X(S)$ respectively.
We first observe that
\begin{align*}
\left|\,\E[\mu(X(S))] -\E[\mu(\wt X(S))]\, \right| & = \left|\,\int_{0}^{\C} \mu(x)  \d P_S(x)- \int_{0}^{\C} \mu(x) \d \wt P_S(x)\, \right| \\
&= \left|\,\int_{0}^{\C} P_S(x)\d\mu(x) - \int_{0}^{\C} \wt{P}_S(x)\d\mu(x)  \, \right| \\
& \leq \alpha \,\left|\,\int_{0}^{\C} \bigl( P_S(x)-\wt{P}_S(x) \bigr) \d x \, \right|
\end{align*}
The second equation follows from applying integration by parts and the last
is because $\mu$ is $\alpha$-Lipschitz.
From Lemma~\ref{lm:fixdiscretize}, we can see that
$$
\left|\,\int_{0}^{\C} \bigl( P_S(x)-\wt{P}_S(x) \bigr) \d x \, \right| \leq O(\e).
$$
In fact, the above can be seen as follows:
$$
\int_{0}^{\C} \bigl( P_S(x)-\wt{P}_S(x) \bigr) \d x = \int_{0}^{\C} \bigl( P_S(x)-\wt{P}_S(x+\e) \bigr) \d x
-\int_{0}^{\e}\wt{P}_S(x) \d x +  \int_{\C}^{\C+\e}\wt{P}_S(x) \d x \leq O(\e).
$$
The proof for the other direction is similar and we omit it here.
\end{proof}

\subsection{Poisson Approximation}

For an item $b$, we define its {\em signature} to be
the vector
$$
\sig(b)=\bigl(\ol\pi_b(\size_1), \ol\pi_b(\size_2), \ol\pi_b(\size_3), \ldots, \ol\pi_b(\size_{z-1})\bigr),
$$
where $\ol\pi_b(\size) = \left\lfloor\wt\pi_b(\size)\cdot\frac{n}{\e^6}\right\rfloor\cdot\frac{\e^6}{n}$ for all
nonzero discretized size $\size \in \dsize\setminus \{0\} =\{ \size_1, \size_2, \ldots, \size_{z-1} \}$.
For a set $S$ of items, its {\em signature} is defined to be the sum of the signatures of all items in $S$, i.e.,
$$
\sig(S)=\sum_{b\in S}\sig(b).
$$
We use $\sig(S)_k$ to denote the $k$th coordinate of $\sig(S)$.
By Lemma ~\ref{lm:boundexp}, $\sum_{k=1}^{z-1}\sig(S)_k \cdot \size_k=\sum_{k=1/\e}^{z-1}\sig(S)_k \cdot \size_k \leq 3/\e$.
Thus $\sig(S)_k\leq 3/\e^5$ for all $k$.
Therefore, the number of possible signatures is bounded by $\left(3n/\e^{11}\right)^{|\dsize|-1}$, which is polynomial in $n$.

For an item $b$, we let $\ol X_b$ be the random variable that $\Pr\left[\ol X_b = \size\right] = \ol\pi_b(\size)$
for $\size = \size_1, \size_2, \ldots, \size_{z-1}$, and $\ol X_b = 0$ with the rest of the probability mass.
Similarly, we use $\ol X(S)$ to denote $\sum_{b\in S}\ol X_b$ for a set $S$ of items.

The following lemma shows that it is sufficient to enumerate all possible signatures
for the set of light items.
\begin{lem}
\label{lm:fixsetsignature}
Let $S_1, S_2$ be two sets of light items such that
$\sig(S_1)=\sig(S_2)$ and
$\E\left[\wt X(S_1)\right] \leq 3/\e, \E\left[\wt X(S_2)\right] \leq 3/\e$.
Then, the total variation distance between $X(S_1)$ and $X(S_2)$ satisfies
$$
\Delta\left(\wt X(S_1), \wt X(S_2)\right) \triangleq
\sum_\size\, \left|\,\Pr\left[\wt X(S_1) = \size\right]-\Pr\left[\wt X(S_2) = \size\right]\, \right| = O(\e).
$$
\end{lem}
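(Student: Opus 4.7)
The plan is to approximate both $\wt X(S_1)$ and $\wt X(S_2)$ by a single compound Poisson random variable $Y$ whose law depends only on the common signature $\sig(S_1)=\sig(S_2)$, and then conclude by the triangle inequality for total variation. Two layers of approximation are needed: first I will pass from $\wt X_b$ to the further-rounded $\ol X_b$ (since the compound Poisson's rates match $\ol\pi_b$, not $\wt\pi_b$), and then I will apply Le Cam's compound Poisson approximation theorem to $\sum_{b\in S_i}\ol X_b$.

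For the first layer, since $|\wt\pi_b(\size_k)-\ol\pi_b(\size_k)|\le \e^6/n$ for each of the $|\dsize|-1=O(1/\e^5)$ nonzero sizes by the definition of $\ol\pi_b$, the per-item total variation distance $d_{TV}(\wt X_b,\ol X_b)$ is $O(\e/n)$; sub-additivity of total variation under independent sums then gives $d_{TV}(\wt X(S_i),\ol X(S_i))=O(\e)$. For the second layer, I let $Y=\sum_{k\ge 1}\size_k N_k$ with independent $N_k\sim\Pois(\sig(S_1)_k)=\Pois(\sig(S_2)_k)$, noting that $Y$ depends only on the shared signature. Le Cam's theorem then yields
$$
d_{TV}\bigl(\ol X(S_i),Y\bigr)\;\le\;2\sum_{b\in S_i}q_b^2,\qquad q_b:=\Pr[\ol X_b\ne 0].
$$

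To bring this quadratic error down to $O(\e)$ I will exploit the two hypotheses together. The lightness assumption $\E[X_b]\le\e^{10}$, combined with the fact that every nonzero value of $\wt X_b$ is at least $\e^4$ (from the discretization), gives by Markov's inequality
$$
q_b\;\le\;\Pr[\wt X_b\ne 0]\;\le\;\E[\wt X_b]/\e^4\;\le\;\E[X_b]/\e^4\;\le\;\e^6,
$$
using that step 1 of the discretization preserves the conditional expectation on $\{X_b\le\e^4\}$ and step 2 only decreases the size, so $\E[\wt X_b]\le\E[X_b]$. The global bound $\E[\wt X(S_i)]\le 3/\e$ then controls $\sum_b q_b\le(3/\e)/\e^4=3/\e^5$, whence
$$
\sum_{b\in S_i}q_b^2\;\le\;\bigl(\max_b q_b\bigr)\cdot\sum_b q_b\;=\;O(\e^6\cdot 1/\e^5)\;=\;O(\e).
$$
Assembling these four bounds via the triangle inequality yields $d_{TV}(\wt X(S_1),\wt X(S_2))=O(\e)$.

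The main obstacle is precisely this delicate balance inside Le Cam's quadratic error: reaching $\sum q_b^2=O(\e)$ requires both $q_b$ individually tiny (using lightness together with the $\e^4$ floor on nonzero discretized values) and $\sum q_b$ globally controlled (using the total expected-size bound $\E[\wt X(S_i)]\le 3/\e$ inherited from Lemma~\ref{lm:boundexp}). Weakening either ingredient — for instance, raising the heavy/light threshold above $\e^{10}$ or dropping the expected-size bound — would inflate the Le Cam error and destroy the $O(\e)$ conclusion.
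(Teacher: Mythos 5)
Your proof is correct and follows essentially the same route as the paper's: round $\wt X_b$ to $\ol X_b$ with total variation cost $O(\e)$ over at most $n$ items, apply Le Cam's theorem to both sums to reach a common compound Poisson law determined by the shared signature, and bound the quadratic error $\sum_b q_b^2$ by combining the per-item bound $q_b\le\e^6$ (lightness plus the $\e^4$ floor on nonzero discretized sizes) with the aggregate bound $\sum_b q_b\le 3/\e^5$ from $\E[\wt X(S_i)]\le 3/\e$. The only cosmetic difference is that you write the compound Poisson as a superposition $\sum_k \size_k N_k$ of independent Poissons rather than the random-sum representation used in Lemma~\ref{thm:poisson}; these are the same distribution, so the argument is unchanged.
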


The following Poisson approximation theorem
by Le Cam~\cite{le1960approximation}, rephrased in our language, is essential
for proving Lemma~\ref{lm:fixsetsignature}.
Suppose we are given a $K$-dimensional vector $V=(V_1,\ldots, V_K)$.
Let $\lambda=\sum_{i=1}^K V_i$.
we say a random variable $Y$ follows the {\em compound Poisson distribution corresponding to $V$}
if it is distributed as $Y = \sum_{j=1}^{N}Y_j$
where $N$ follows Poisson distribution with expected value $\lambda$
(denoted as $N\sim \Pois(\lambda)$\,)
and $Y_1,\ldots, Y_N$ are i.i.d. random variables with $\Pr[Y_j=0]=0$ and
$\Pr[Y_j = k] = V_k / \lambda$ for $k\in \{1,\ldots, K\}$ and $j\in \{1, \ldots, N\}$.

\begin{lem}
\label{thm:poisson}
{\em \cite{le1960approximation}}
Let $X_1, X_2, \ldots$ be independent random variables taking integer values in $\{0,1,...,K\}$, let $X = \sum X_i$.
Let $\pi_i=\Pr[X_i\ne 0]$ and $V=(V_1,\ldots, V_K)$ where $V_k=\sum_{i} \Pr[X_i=k]$.
Suppose $\lambda=\sum_i \pi_i=\sum_k V_k<\infty$.
Let $Y$ be the compound Poisson distribution corresponding to vector $V$.
Then, the total variation distance between $X$ and $Y$ can be bounded as follows:
$$\Delta \Bigl(X,  Y\Bigr)
=\sum_{k\geq 0}\Bigl|\Pr[X=k]-\Pr[Y=k]\Bigr|\leq  2\sum_i \pi_i^2.$$
\end{lem}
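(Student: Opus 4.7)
The plan is to prove this Le Cam-type inequality by an independent per-$i$ coupling argument. The starting observation is that because the sum of independent compound Poissons is again compound Poisson (superposition of independent Poisson point processes is Poisson, with the mark laws pooled), the target $Y$ admits a decomposition $Y = \sum_i Y_i$ in which each $Y_i$ is an independent compound Poisson with intensity $\pi_i$ and jump law $\Pr[Z_{i,j}=k] = \Pr[X_i=k]/\pi_i$ for $k\geq 1$. It therefore suffices to construct, for every $i$, an independent coupling of $(X_i, Y_i)$ with a good bound on $\Pr[X_i\neq Y_i]$, and then transfer the union-bound to $X=\sum_i X_i$ and $Y=\sum_i Y_i$.

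For a single $i$, I would write $X_i = B_i \cdot Z_i$ with $B_i \sim \mathrm{Bernoulli}(\pi_i)$ and $Z_i$ drawn from the conditional law of $X_i$ given $X_i \neq 0$, and $Y_i = \sum_{j=1}^{N_i} Z_{i,j}$ with $N_i \sim \Pois(\pi_i)$ and the $Z_{i,j}$ i.i.d.\ copies of $Z_i$. The key step is to couple $B_i$ and $N_i$ so as to maximize the event $\{B_i=0, N_i=0\}$, setting
$\Pr[B_i=0, N_i=0] = 1-\pi_i$, $\Pr[B_i=1, N_i=0] = e^{-\pi_i} - (1-\pi_i)$, and assigning the remaining mass $1-e^{-\pi_i}$ to the event $\{B_i=1, N_i\geq 1\}$; checking the marginals is immediate from $e^{-\pi_i} \geq 1-\pi_i$. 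On the event $\{B_i=1, N_i=1\}$ I would further force the single jump $Z_{i,1}$ of $Y_i$ to equal $Z_i$, so that $X_i = Y_i$ holds deterministically there.

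Under this coupling, $X_i \neq Y_i$ can occur only on $\{B_i=1, N_i=0\}$ or $\{N_i\geq 2\}$, giving
$$
\Pr[X_i \neq Y_i] \leq \bigl(e^{-\pi_i} - 1 + \pi_i\bigr) + \bigl(1 - e^{-\pi_i} - \pi_i e^{-\pi_i}\bigr) = \pi_i\bigl(1 - e^{-\pi_i}\bigr) \leq \pi_i^2,
$$
where the final step is the elementary inequality $1 - e^{-x} \leq x$ for $x \geq 0$. Taking the per-$i$ couplings independent across $i$ and observing that $X_i = Y_i$ for every $i$ forces $X = Y$, a union bound yields $\Pr[X \neq Y] \leq \sum_i \pi_i^2$.

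To finish, I would invoke the standard bridge from coupling to total variation: for any coupling,
$\sum_k |\Pr[X=k] - \Pr[Y=k]| \leq 2 \Pr[X \neq Y]$, which holds because $\Pr[X=k] - \Pr[Y=k] = \Pr[X=k, X\neq Y] - \Pr[Y=k, X\neq Y]$. This immediately yields $\Delta(X, Y) \leq 2\sum_i \pi_i^2$, as required. The only substantive inequality is $\pi_i(1-e^{-\pi_i}) \leq \pi_i^2$; the rest of the proof is routine bookkeeping around the Bernoulli-to-Poisson coupling and the superposition property of compound Poissons. I expect the main pitfall to be getting the joint law of $(B_i, N_i)$ exactly right so that the ``agreement'' event $\{B_i=1, N_i=1\}$ genuinely matches the single jump, rather than accidentally coupling mass that still leaves $X_i$ and $Y_i$ discrepant.
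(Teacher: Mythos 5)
Your proof is correct: the coupling is well defined (the marginals of $B_i$ and $N_i$ check out since $e^{-\pi_i}\geq 1-\pi_i$), the superposition step correctly recovers the jump law $V_k/\lambda$, and the computation $\Pr[X_i\neq Y_i]\leq \pi_i(1-e^{-\pi_i})\leq\pi_i^2$ together with the factor-$2$ bridge from $\Pr[X\neq Y]$ to the $\ell_1$ distance gives exactly the stated bound. The paper does not prove this lemma at all --- it is cited from Le Cam (1960) --- and your argument is precisely the classical coupling proof of that result, so there is nothing in the paper to diverge from.
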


\begin{proofoflm}{\ref{lm:fixsetsignature}}
By definition of $\ol X_b$, we have that $\Delta\left(\ol X_b, \wt X_b\right) \leq \e/n$ for any item $b$.
Since $S_1$ and $S_2$ contains at most $n$ items, by the standard coupling argument, we have that
$$
\Delta\left(\ol X(S_1), \wt X(S_1)\right) \leq \e
\quad\quad\text{ and }\quad\quad
\Delta\left(\ol X(S_2), \wt X(S_2)\right) \leq \e.
$$

If we apply Lemma~\ref{thm:poisson} to both $\ol X(S_1)$ and $\ol X(S_2)$,
we can see they both correspond to the same compound Poisson distribution, say $Y$,
since their signatures are the same.
Moreover, since the total variation distance is a metric,
we have that
\begin{eqnarray*}
  \Delta\Bigl(\wt X(S_1), \wt X(S_2)\Bigr)
   &\leq& \Delta\Bigl(\wt X(S_1), \ol X(S_1)\Bigr)
    + \Delta\Bigl(\ol X(S_1), Y\Bigr) \\
   & & + \Delta\Bigl(Y, \ol X(S_2)\Bigr)
    + \Delta\Bigl(\ol X(S_2), \wt X(S_2)\Bigr) \\
   &\leq& \e + 2\sum_{b \in S_1}\Bigl(\Pr\Bigl[\ol X_b \neq 0\Bigr]\Bigr)^2
    + 2\sum_{b \in S_2}\Bigl(\Pr\Bigl[\ol X_b \neq 0\Bigr]\Bigr)^2 + \e \\
   &=& O(\e).
\end{eqnarray*}
The last equality holds since for any light item $b$,
$$
\Pr\Bigl[\ol X_b \neq 0\Bigr] \leq \Pr\Bigl[\wt X_b \neq 0\Bigr]
= \Pr\Bigl[\wt X_b \geq \e^4\Bigr] \leq \E\Bigl[\wt X_b\Bigr] / \e^4 \leq \e^6,
$$
and
$$
\sum_{b \in S}\Pr\Bigl[\wt X_b \neq 0\Bigr] \leq \E\Bigl[\wt X(S)\Bigr] / \e^4 \leq 3/\e^5
\quad\text{ for }\quad
S = S_1, S_2.
$$
\qed
\end{proofoflm}

\subsection{Approximation Algorithm of \EUM}
Now, everything is in place to present our approximation algorithm
and the analysis.

\IncMargin{1em}
\begin{algorithm}[H]

    Discretize the size distributions of for all light items\;
    Enumerate all possible heavy item sets $H$ with $\E\bigl[\wt X(H)\bigr] < 3/\e$\;
    \For{each such $H$}{
        Enumerate all possible signatures $\sig$\;
        \For{each such $\sig$}{
            \nlset{(a)} Try to find a set $L$ of light items such that
            $H\cup L\in \calF$ ($H\cup L$ is feasible) and $\sig(L)=\sig$\;
        }
    }
    Pick the feasible $H \cup L$ with the largest $\E\bigl[\mu\bigl(\wt X(H\cup L)\bigr)\bigr]$\;

\caption{Approximation Algorithm of \EUM}\label{algo:eum}

\end{algorithm}
\DecMargin{1em}
\vspace{0.2cm}
\eat{
\begin{enumerate}
   \item Discretize the size distributions of for all light items.
   \item Enumerate all possible heavy item set $H$ with $\E[\wt X(H)] < 3/\e$.
        For each such $H$, we do the following:
        \begin{enumerate}
        \item We enumerate all possible signatures $\sig$. For each $\sig$,
         try to find a set $L$ of light items such that $H\cup L\in \calF$ ($H\cup L$ is feasible)
         and $\sig(L)=\sig$.
        \end{enumerate}
    \item Pick the feasible $H \cup L$ with the largest $\E[\mu(\wt X(H\cup L))]$.
\end{enumerate}
}

In step (a), we can use the pseudopolynomial time algorithm for the exact version of the problem
to find a set $L$ with the signature exact equal to $\sig$.
Since $\sig$ is a vector with $O(\e^{-5})$ coordinates and the value of
each coordinate is bounded by O(n), it can be encoded by an integer which is at most $n^{O(\e^{-5})}$.
Thus the pseudopolynomial time algorithm actually runs in $\poly(n, n^{O(\e^{-5})})=n^{O(\e^{-5})}$ time, which is a polynomial.
Since there are at most $N^{O(\e^{-11})}$ different heavy item sets and
$n^{O(\e^{-5})}$ different signatures,  the algorithm runs in
$n^{O(\e^{-15})}=n^{\poly(1/\e)}$ time overall.
Finally, we present the analysis of the performance guarantee of the algorithm.

\eat{
\begin{thm}
\label{thm:main1}
Assume there is a pseudopolynomial time algorithm  for the exact version of problem $A$.
For any $\e>0$, there is a polynomial time approximation algorithm for \EUM($A$) that finds a feasible solution $S\in \calF$
such that
$$
\E[\mu(X(S))]\geq \opt - \e.
$$
\end{thm}
}

\begin{proofofthm}{\ref{thm:main1}}
Assume the optimal feasible set is $S^*=H^*\cup L^*$
where items in $H^*$ are heavy and items in $L^*$ are light.
Assume our algorithm has guessed $H^*$ correctly.
Since there is a pseudopolynomial algorithm,
we can find a set $L$ of light items such that $\sig(L)=\sig(L^*)$.
By Lemma~\ref{lm:fixsetsignature}, we know that
$
\Delta\Bigl(\wt X(L), \wt X(L^*)\Bigr) =  O(\e).
$
Therefore, we can get that
$$
\Delta\Bigl(\wt X(L\cup H^*), \wt X(L^*\cup H^*)\Bigr) =  O(\e).
$$
Moreover, we have that
\begin{align*}
\Bigl|\,\E\Bigl[\mu\bigl(\wt X(L\cup H^*)\bigr)\Bigr]  -\E\Bigl[\mu\bigl(\wt X(L^*\cup H^*)\bigr)\Bigr]\, \Bigr|
&\leq  \sum_{\size\in \dsize} \mu(\size) \, \Bigl|\wt{\pi}_{L\cup H^*}(\size)-\wt{\pi}_{L^*\cup H^*}(\size)\Bigr| \,  \\
& \leq  \max_{0\leq x\leq \C}\mu(x)\cdot \Delta\Bigl(\wt X(L\cup H^*), \wt X(L^*\cup H^*)\Bigr) =O(\e),
\end{align*}
where $\wt\pi_S$ is the PDF for $\wt X(S)$.
It is time to derive our final result:
\begin{align*}
\left|\,\E[\mu(X(L^*\cup H^*))]- \E[\mu(X(L\cup H^*))] \, \right| & \leq
 \left|\,\E[\mu(X(L^*\cup H^*))] -\E[\mu(\wt X(L^*\cup H^*))]\, \right| \\
&+ \left|\,\E[\mu(\wt X(L^*\cup H^*))-\E[\mu(\wt X(L\cup H^*))] ]\, \right| \\
&+\left|\,\E[\mu(\wt X(L\cup H^*))-\E[\mu(X(L\cup H^*))] ]\, \right| \\
&= O(\e)
\end{align*}
The inequality follows from Lemma~\ref{lm:utility} and~\ref{lm:fixsetsignature}.\qed
\end{proofofthm}

\subsection{Approximation Algorithm for \EUMMono}
We prove Theorem~\ref{thm:main2} in this subsection.
Recall that \EUMMono\ is a special case of \EUM\ where the utility function $\mu$ is monotone nonincreasing.
The algorithm is the same as that in \EUM\ except we adopt the new step (a), as follows.

\IncMargin{1em}
\begin{algorithm}[H]

    \nlset{(a)} Try to find a set $L$ of light items such that
    $H\cup L\in \calF$ ($H\cup L$ is feasible)
    and $\sig(L)\leq (1+\e^6)\sig$ (coordinatewise)\;

\end{algorithm}
\DecMargin{1em}

\begin{lem}
\label{lm:cpddomination}
We are given two vectors $V_1\leq V_2$ (coordinatewise).
$Y_1$ and $Y_2$ are random variables
following CPD corresponding to $V_1$ and $V_2$, respectively.
Then, $Y_2$ stochastically dominates $Y_1$.
\end{lem}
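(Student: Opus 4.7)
The plan is to use the well-known alternative description of a compound Poisson distribution as a weighted sum of independent Poisson counts, one per ``color''. Given any vector $V=(V_1,\ldots,V_K)$ with $\lambda=\sum_k V_k$, by the superposition/splitting property of Poisson processes, if $N\sim\Pois(\lambda)$ is split into $K$ independent classes via $\Pr[Y_j=k]=V_k/\lambda$, then the class counts $N_k$ are independent with $N_k\sim\Pois(V_k)$, and the compound Poisson variable equals $Y=\sum_{k=1}^K k\cdot N_k$. First I would verify this equivalence directly from the definition of CPD given just before Lemma~\ref{thm:poisson}.

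Next I would exploit the additivity of Poisson distributions to build an explicit coupling. Write $V_2=V_1+\Delta$ with $\Delta\geq 0$ coordinatewise, and introduce independent Poisson random variables $N_k^{(1)}\sim\Pois(V_{1,k})$ and $M_k\sim\Pois(\Delta_k)$ for each $k\in\{1,\ldots,K\}$. Setting $N_k^{(2)}:=N_k^{(1)}+M_k$ gives $N_k^{(2)}\sim\Pois(V_{2,k})$ independently over $k$, matching in distribution the color counts for $Y_2$. Then define
\[
\wh Y_1:=\sum_{k=1}^K k\cdot N_k^{(1)},\qquad \wh Y_2:=\sum_{k=1}^K k\cdot N_k^{(2)}=\wh Y_1+\sum_{k=1}^K k\cdot M_k.
\]
By the equivalence above, $\wh Y_1$ and $\wh Y_2$ have the same distributions as $Y_1$ and $Y_2$, respectively.

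Finally, since every $k\geq 1$ and every $M_k\geq 0$ almost surely, the coupling yields $\wh Y_2\geq \wh Y_1$ pointwise, which immediately gives $\Pr[Y_2>t]=\Pr[\wh Y_2>t]\geq\Pr[\wh Y_1>t]=\Pr[Y_1>t]$ for all $t$, i.e., stochastic domination. I do not expect any significant obstacle: the only non-trivial ingredient is the color-wise Poisson representation of the CPD, and once that is in hand the coupling is essentially immediate from the fact that $\Pois(a)+\Pois(b)=\Pois(a+b)$ for independent summands.
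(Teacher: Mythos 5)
Your proof is correct, and it takes a genuinely different route from the paper's. The paper proves Lemma~\ref{lm:cpddomination} by a limiting argument: it constructs a large family of nonnegative random variables with small expectations whose signatures sum to the larger vector, invokes Le Cam's theorem (Lemma~\ref{thm:poisson}) to say that the total sum and a suitable partial sum converge in total variation to the two CPDs, and then uses the trivial fact that a partial sum of nonnegative variables is dominated by the full sum. That argument is short but leaves several steps implicit (the construction of the approximating family, the choice of the subset, and the preservation of stochastic dominance under convergence in distribution); as written it even appears to swap the roles of $V_1$ and $V_2$. You instead give the ``direct from the definition'' proof that the paper dismisses as tedious, but you make it clean by first passing to the colorwise representation $Y=\sum_{k} k\cdot N_k$ with independent $N_k\sim\Pois(V_k)$ (Poisson splitting/superposition), and then coupling via $N_k^{(2)}=N_k^{(1)}+M_k$ with $M_k\sim\Pois(V_{2,k}-V_{1,k})$, so that domination holds pointwise on a common probability space. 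Your argument is fully rigorous and self-contained modulo two textbook facts (splitting and additivity of Poisson), whereas the paper's buys brevity at the cost of an unstated limit-interchange; both establish exactly the statement needed downstream in Lemmas~\ref{lm:fixsetsignature1} and~\ref{lm:bp_sig}. The only cosmetic remark is that in the paper's setting the atoms of the CPD are the discretized sizes $\size_k$ rather than the integers $k$, but since these are nonnegative your coupling inequality $\wh Y_2-\wh Y_1=\sum_k \size_k M_k\geq 0$ goes through unchanged.
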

\begin{proof}
We are not aware of an existing proof of this intuitive fact,
so we present one here for completeness.
The lemma can be proved directly from the definition of CPD, but the proof is tedious.
Instead,  we use Lemma~\ref{thm:poisson} to give an easy proof as follows:
Consider the sum $X$ of a large number $N$ of nonnegative random variables $\{X_i\}_{i=1,\ldots, N}$, each $X_i$
having a very small expectation. Suppose $\sum_i \sig(X_i)=V_1$.
As $N$ goes to infinity and each $\E[X_i]$ goes to 0, the distribution of $X$
approaches to that of $Y_1$ since their total variation distance approaches to 0.
We can select a subset $S$ of $\{X_i\}_{i=1,\ldots, N}$ so that $\sum_{i\in S}\sig(X_i)= V_2$.
So, the sum of the subset, which approaches to $Y_2$ in the limit, is clearly stochastically dominated by total sum $X$.
\end{proof}

\begin{lem}
\label{lm:fixsetsignature1}
Let $S_1, S_2$ be two sets of light items with
$\E\bigl[\wt X(S_1)\bigr] \leq 3/\e$ and $\E\bigl[\wt X(S_2)\bigr] \leq 3/\e$.
If $\sig(S_1)\leq (1+\e^6)\sig(S_2)$,
then we have that for any $\beta > 0$
$$
\Pr[\wt X(S_1)\leq \beta] \geq \Pr[\wt X(S_2)\leq \beta]-O(\e).
$$
\end{lem}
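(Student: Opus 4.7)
The plan is to reduce the lemma to a comparison of two compound Poisson distributions. Let $Y_1$ and $Y_2$ denote the CPDs corresponding to $\sig(S_1)$ and $\sig(S_2)$ respectively. Exactly as in the proof of Lemma~\ref{lm:fixsetsignature}, combining the coupling bound $\Delta(\ol X(S_i), \wt X(S_i)) \leq \e$ with Le Cam's theorem (Lemma~\ref{thm:poisson}) yields $\Delta(\wt X(S_i), Y_i) = O(\e)$ for $i = 1,2$, since every item in $S_i$ is light. This reduces the task to establishing $\Pr[Y_1 \leq \beta] \geq \Pr[Y_2 \leq \beta] - O(\e)$.

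To compare $Y_1$ and $Y_2$ I would introduce a bridging CPD $Y'$ corresponding to the inflated vector $(1+\e^6)\sig(S_2)$. The hypothesis $\sig(S_1) \leq (1+\e^6)\sig(S_2)$ together with Lemma~\ref{lm:cpddomination} gives that $Y'$ stochastically dominates $Y_1$, hence $\Pr[Y_1 \leq \beta] \geq \Pr[Y' \leq \beta]$. It therefore suffices to show that $\Delta(Y_2, Y') = O(\e)$, after which chaining the four inequalities $\Pr[\wt X(S_1) \leq \beta] \geq \Pr[Y_1 \leq \beta] - O(\e) \geq \Pr[Y' \leq \beta] - O(\e) \geq \Pr[Y_2 \leq \beta] - O(\e) \geq \Pr[\wt X(S_2) \leq \beta] - O(\e)$ completes the argument.

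For this final comparison I would use a Poisson-splitting coupling. Writing $\lambda = \sum_k \sig(S_2)_k$, the per-jump distribution $\{\sig(S_2)_k/\lambda\}_k$ is the same for $Y_2$ and $Y'$, and $\Pois((1+\e^6)\lambda)$ equals $\Pois(\lambda) + \Pois(\e^6\lambda)$ in distribution, so one can couple $Y' = Y_2 + Z$ with $Z$ independent of $Y_2$ and distributed as the CPD for $\e^6 \sig(S_2)$. Thus $\Delta(Y_2, Y') \leq \Pr[Z \neq 0] = 1 - e^{-\e^6 \lambda} \leq \e^6 \lambda$. The main obstacle, and the only place where lightness is used, is showing $\e^6 \lambda = O(\e)$: by Markov's inequality, $\Pr[\wt X_b \neq 0] = \Pr[\wt X_b \geq \e^4] \leq \E[\wt X_b]/\e^4$ for each $b \in S_2$, so
\[
\lambda \;\leq\; \sum_{b \in S_2} \Pr[\wt X_b \neq 0] \;\leq\; \E[\wt X(S_2)]/\e^4 \;\leq\; 3/\e^5,
\]
which gives $\e^6 \lambda \leq 3\e = O(\e)$ and finishes the proof.
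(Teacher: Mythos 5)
Your proposal is correct and follows essentially the same route as the paper: both introduce the bridging CPD for $(1+\e^6)\sig(S_2)$, invoke Lemma~\ref{lm:cpddomination} for the stochastic-dominance half of the comparison, and bound the total variation distance between the two compound Poisson laws using $\lambda=\|\sig(S_2)\|_1=O(\e^{-5})$. The only (immaterial) difference is that you bound $\Delta(Y_2,Y')$ via the additive Poisson-splitting coupling $\Pois((1+\e^6)\lambda)=\Pois(\lambda)+\Pois(\e^6\lambda)$, whereas the paper couples the two Poisson counts directly through the pointwise ratio $\Pr[N=k]=(1+\e^6)^k e^{-\e^6\lambda}\Pr[N'=k]$; both yield the same $O(\e)$ bound.
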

\begin{proof}
Let $Z_1$ and $Z_2$ be the compound Poisson distribution (CPD) corresponding to
$\sig(S_1)$
and $\sig(S_2)$, respectively.
Denote $\lambda=\|\sig(S_2)\|_1$.
Let $Y$ be the CPD defined as $Y=\sum_{i=1}^N Y_i$
where $N\sim \Pois((1+\e^6)\lambda)$ and $Y_i$s are i.i.d. random variables with
$\Pr[Y_i=\size_k]=\frac{\sig(S_2)_k}{\lambda}$ for each $\size_k\in \dsize\setminus\{\size_0\}$.
By Lemma~\ref{thm:poisson},
$Z_2$ is distributed as $\sum_{i=1}^{N'} Y_i$
where $N'\sim \Pois(\lambda)$.
By the standard coupling argument, we can see that
\begin{align*}
\Delta(Y, Z_2)\leq \Delta(N,N')=\Delta(\Pois((1+\e^6)\lambda),\Pois(\lambda)) = O(\e).
\end{align*}

This is because $\Pr[N = k] = (1+\e^6)^k e^{-\e^6\lambda} \Pr[N' = k]$ for all $k\in\N$.
Since $\lambda=\|\sig(S_2)\|_1 = O(\e^{-5})$, $(1+\e^6)^k e^{-\e^6\lambda} \geq e^{-\e^6\lambda} = 1-O(\e)$.
Therefore the total variation distance of $N$ and $N'$ can be bounded by $O(\e)$.

Since $\sig(S_1)\leq (1+\e^6)\sig(S_2)$, $Z_1$ (the CPD corresponding to $\sig(S_1)$) is stochastically dominated by
$Y$ (the CPD corresponding to $(1+\e^6)\sig(S_2)$) by Lemma~\ref{lm:cpddomination}.
Therefore,
\begin{align*}
\Pr[\ol X(S_1)\leq \beta] & \geq \Pr[Z_1\leq \beta] - \Delta(\ol X(S_1), Z_1) \\
& \geq \Pr[Y\leq \beta] - O(\e) \\
& \geq \Pr[Z_2\leq \beta]-\Delta(Y, Z_2) -O(\e) \\
& \geq \Pr[\ol X(S_2)\leq \beta]-\Delta(\ol X(S_2), Z_2) -O(\e) \\
&\geq \Pr[\ol X(S_2)\leq \beta]-O(\e)
\end{align*}
We also have $\Delta\Bigl(\wt X(S), \ol X(S)\Bigr)\leq \e$ for $S = S_1, S_2$. Thus
$$
\Pr[\wt X(S_1)\leq \beta] \geq \Pr[\wt X(S_2)\leq \beta]-O(\e).
$$
This completes the proof of the lemma.
\end{proof}

\eat{
\begin{thm}
\label{thm:main2}
Assume there is a multidimensional PTAS for the multidimensional minimization version of the problem $A$.
For any $\e>0$, there is a polynomial time approximation algorithm for \EUMMono($A$) that finds a feasible solution $S\in \calF$
such that
$$
\E[\mu(X(S))]\geq \opt - \e.
$$
\end{thm}
}

\begin{proofofthm}{\ref{thm:main2}}
The proof is similar to the proof of Theorem~\ref{thm:main1}.
Assume the optimal feasible set is $S^*=H^*\cup L^*$
where items in $H^*$ are heavy and items in $L^*$ are light.
Assume our algorithm has guessed $H^*$ correctly.
Since there is a multidimensional PTAS,
we can find a set $L$ of light items such that
$\sig(L)\leq (1+\e^6)\sig(L^*)$.
By Lemma ~\ref{lm:fixsetsignature1}, we know that for any $\beta > 0$,
$\Pr[\wt X(L)\leq\beta]\geq\Pr[\wt X(L^*)\leq\beta]-O(\e)$.
Therefore, we can get that for any $\beta > 0$,
$
\Pr[\wt X(L \cup H^*)\leq \beta] \geq \Pr[\wt X(L^* \cup H^*)\leq \beta]-O(\e).
$
Now, we can bound the expected utility loss for discretized distributions:
\begin{align*}
\E[\mu(\wt X(L^*\cup H^*))]-\E[\mu(\wt X(L\cup H^*))]\,
&= \int_0^\C \Bigl(\wt{P}_{L\cup H^*}(x)-\wt{P}_{L^*\cup H^*}(x)\Bigr)  \d \mu(x) \\
&\leq \alpha \cdot \left| \int_0^\C \Bigl(\wt{P}_{L\cup H^*}(x)-\wt{P}_{L^*\cup H^*}(x)\Bigr)  \d x \right| \\
&=O(\e).
\end{align*}
Finally, we can show the performance guarantee of our algorithm:
\begin{align*}
\E[\mu(X(L^*\cup H^*))]-\E[\mu(X(L\cup H^*))]  & \leq
\left|\,\E[\mu(X(L^*\cup H^*))] -\E[\mu(\wt X(L^*\cup H^*))]\, \right| \\
&+ \left|\,\E[\mu(\wt X(L^*\cup H^*))] -\E[\mu(\wt X(L\cup H^*))]\, \right| \\
&+ \left|\,\E[\mu(\wt X(L\cup H^*))] - \E[\mu(X(L\cup H^*))]\, \right| \\
&= O(\e)
\end{align*}
The last inequality follows from Lemma~\ref{lm:utility}.\qed
\end{proofofthm}


\section{Stochastic Bin Packing}
\label{sec:binpacking}

Recall that in the stochastic bin packing (\SBP) problem, we are given
a set of items $\B = \{b_1, b_2, \ldots, b_n\}$ and an overflow probability $0<p<1$.
The size of each item $b_i$ is an independent random variable $X_i$.
The goal is to pack all the items in $\B$ into bins with capacity $\C$, using as few bins as possible,
such that the overflow probability for each bin is at most $p$.
The main goal of this section is to prove Theorem~\ref{thm:binpacking}.

W.l.o.g., we can assume that $p \leq 1-\e$ where $\e$ is the error parameter.
Otherwise, the overflow probability is relaxed to $p+\e \geq 1$,
and we can pack all items in a single bin.
Let the number of bins used in the optimal solution be $m$.
In our algorithms, we relax the bin size to $\C + O(\e)$, which is less than $2\C$. W.l.o.g., we assume the support of$ X_i$ is $[0,2\C]$.
From now on, assume that our algorithm has guessed $m$ correctly.
We use $B_1, B_2, \ldots, B_m$ to denote the bins.

\subsection{Discretization}
\label{sec:bp_conf}
We first discretize the size distributions for all items in $\B$, using parameter $\e$,
as described in Section~\ref{sec:discretization}.
Denote the discretized size of $b_i$ by $\wt X_i$.

We call item $b_i$ a {\em heavy} item if $\E[X_i] \geq \e^{15}$. Otherwise, $b_i$ is {\em light}.
We need to further discretize the size distributions of the heavy items.
We round down the probabilities of $\wt X_i$ taking each nonzero value to multiples of $\e^{22}$.
Denote the resulting random size by $\wh X_i$.
More formally, $\Pr[\wh X_i=\size] = \lfloor \Pr[\wt X_i=\size]\cdot \e^{-22} \rfloor \cdot \e^{22}$ for any
$\size \in \dsize\setminus\{0\}$.
Use $\calH$ to denote the set of all discretized distributions for heavy items.
We can see that $|\calH| = (\e^{-22})^{|\dsize|}= (\e^{-22})^{O(\e^{-5})}$.
Denote them by $\Pi_1, \Pi_2, \ldots, \Pi_{|\calH|}$ (in an arbitrary order).

For a set $S$ of items, we use $H(S)$ to denote the set of heavy items in $S$, and use $L(S)$ to denote the set of light items in $S$.
We define the {\em arrangement} for heavy items in $S$ to be the $|\calH|$-dimensional vector:
$$\confl(S) = (N_1, N_2, \ldots, N_{|\calH|})$$
where $N_k\in \N$ is the number of heavy items in $S$ following the discretized size distribution $\Pi_k$, $k=1,2,\ldots,|\calH|$.
Suppose we pack all items in $S$ into one bin.
By Lemma ~\ref{lm:boundexp} and the assumption that $p \leq 1-\e$, $\E\bigl[\wt X(S)\bigr]\leq  3/\e$.
So, we can pack at most $O(\e^{-16})$ heavy items into a bin.
Therefore, the number of possible arrangements for a bin is bounded by ${|\calH|+O(\e^{-16})\choose |\calH|}$, which is a constant.

Let the {\em signature} of a light item $b$  be
$
\sig(b)=\bigl(\wt\pi_b(\size_i)\bigr)_{1\leq i\leq |\dsize|-1}
$
(Note that the definition is slightly different from the previous one).
The {\em signature} of the a set $S$ of light items is defined to be
$
\sig(S) = \sum_{b \in S}\sig(b).
$
If $S$ consists of both heavy and light items,
we use $\sig(S)$ as a short for $\sig(L(S))$.
Moreover, for set $S$, we define the {\em rounded signature} to be
$$
\wh\sig(S)\,:\,\wh\sig(S)_k = \left\lceil\sig(S)_k\cdot\e^{-6}\right\rceil\cdot\e^6\quad\text{ for each }k=1,\ldots, |\dsize|-1.
$$
Suppose we pack all items in $S$ into one bin.
Since $\E\bigl[\wt X(S)\bigr] \leq 3/\e$, $\wh\sig(S)_k \leq O(1/\e^5)$ for any $k$.
Therefore, the number of possible rounded signatures is bounded by $(3/\e^{11})^{|\dsize|-1}$.

The {\em configuration} of a set $S$ of items is defined to be $\conf(S) = \bigl(\confl(S); \wh\sig(S)\bigr)$.
It is straightforward to see the number of all configurations is bounded by
$$
h=O\biggl(\frac{1}{\e^{11}}\biggr)^{|\dsize|-1}{|\calH|+O(\e^{-16})\choose |\calH|},
$$
which is still a constant.

We also define the {\em s-configuration} of a solution $\Sol=\{S_1,\ldots, S_m\}$ ($S_m$ is the set of items packed in bin $B_i$)
to be
$$
\conf(\Sol) = \bigl\{\conf(S_1), \conf(S_2), \ldots, \conf(S_m)\bigr\}.
$$
We note that $\conf(\Sol)$ is a multi-set (instead of a vector),
i.e., the indices of the bins do not matter.
Hence, the number of all possible s-configurations is bounded by ${{m + h} \choose m} = \poly(m)$.
Let $\CF$ be the set of all possible s-configurations.

\subsection{Our Algorithm}
\label{sec:bp_alg}

Before describing our algorithm,
we need a procedure to solve the following multi-dimensional optimization problem:
We are given an s-configuration
$$\conf = \bigl((\confl_1, \wh\sig_1), (\confl_2, \wh\sig_2), \ldots, (\confl_m, \wh\sig_m)\bigr),$$
Our goal is to find a packing $\Sol=(S_1,\ldots, S_m)$
such that
$\confl(S_i) = \confl_i$ and $\sig(S_i)\leq \wh\sig_i+\e^6\bf1$ (where ${\bf1}=(1,\ldots,1)$) for $1\leq i\leq m$
or to claim that
there is no packing $\Sol=(S_1,\ldots, S_m)$
such that
$\confl(S_i) = \confl_i$ and $\sig(S_i)\leq \wh\sig_i$ for $1\leq i\leq m$.
If we succeed in finding such a solution,
we say $\conf$ {\em passes the test of feasibility},
otherwise we say $\conf$ fails the test.

Finding a solution such that $\confl(S_i) = \conf_i$ for all $i$ is trivial.
Now, we concentrate on the set of the light items, $L(\B)$.
In fact, the problem for light items becomes a variant of the multidimensional bin packing problem, called the {\em vector scheduling} problem,
which has been studied in~\cite{chekuri1999multi}.
For completeness, we sketch their approach, using our notations.
We write a linear integer program, solve its LP relaxation and then round the solution to a feasible packing.
We use the Boolean variables $x_{ij}$ to denote whether the light item $b_i$ is packed into bin $B_j$.
We have the integrality constraints $x_{ij} \in \{0,1\}$ and they are relaxed to $x_{ij} \geq 0$ in the following LP relaxation:

\begin{enumerate}
  \item $\sum_{j=1}^m x_{i,j} = 1,
            \quad\quad i \in L(\B),$
  \item $\sum_{i\in L(\B)} \wt\pi_i(\size_k) \cdot x_{ij} \leq (\wh\sig_j)_k,
            \quad\quad j\in\{1,\ldots, m\}, \,k\in \{1,\ldots, |\dsize|-1\},$
  \item $x_{ij}\geq 0,  \quad\quad i \in L(\B), \,j\in\{1,\ldots, m\}$.
\end{enumerate}
The following proposition states a well-known property for any basic solution of the LP.
\begin{prop}
\label{prop:bp_LP}
Any {\em basic feasible solution} to the LP has at most $(|\dsize|-1)\cdot m$ light items that are packed fractionally into more than one bins.
\end{prop}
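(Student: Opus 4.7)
The plan is to invoke the standard rank-counting argument for basic feasible solutions of assignment-style LPs. Let $n_L = |L(\B)|$ denote the number of light items and let $d = |\dsize| - 1$. The LP has $n_L \cdot m$ variables, $n_L$ equality constraints of type (1), $m \cdot d$ inequality constraints of type (2), and $n_L \cdot m$ non-negativity constraints of type (3). A basic feasible solution is determined by $n_L \cdot m$ linearly independent tight constraints, and every non-negativity constraint that is \emph{not} tight corresponds to a strictly positive variable.

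First I would bound the number of strictly positive variables $x_{ij}$ in any basic feasible solution. Since all equality constraints (1) are automatically tight, and at most $m \cdot d$ of the constraints in (2) can be tight, the number of tight non-negativity constraints is at least $n_L \cdot m - n_L - m \cdot d$. Equivalently, the number of strictly positive variables is at most $n_L + m \cdot d$.

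Next I would translate this into a bound on fractional items. Call a light item $i$ \emph{integral} if exactly one of $\{x_{ij}\}_{j=1}^m$ is nonzero (necessarily equal to $1$ by constraint (1)), and \emph{fractional} otherwise; in the fractional case, constraint (1) forces at least two nonzero coordinates. Let $I$ and $F$ be the numbers of integral and fractional items, so $I + F = n_L$. Then the total number of nonzero variables is at least $I + 2F = n_L + F$. Combining with the upper bound yields $n_L + F \leq n_L + m \cdot d$, hence $F \leq m \cdot d = m(|\dsize| - 1)$, which is the claim.

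There is no substantive obstacle here; this is a textbook rank argument for LPs with an assignment-type equality block and a small number of side constraints, and the only care needed is to correctly account for which constraints are forced tight versus merely tight in a particular basis. I would therefore keep the writeup brief and simply cite the linear-algebraic fact that a basic feasible solution of an LP in $N$ variables has at most $N$ nonzero coordinates less the number of tight non-negativity constraints, and apply the arithmetic above.
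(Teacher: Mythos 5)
Your argument is correct and is exactly the standard rank-counting proof that the paper implicitly relies on (it states the proposition as "well-known" without proof, deferring to the vector scheduling literature): at most $n_L + m(|\dsize|-1)$ variables can be nonzero at a vertex, each item contributes at least one nonzero variable and each fractional item at least two, so at most $m(|\dsize|-1)$ items are fractional. The accounting of tight constraints is done correctly, so there is nothing to fix.
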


If the above LP has no feasible solutions, we say $\conf$ fails the test.
Otherwise $\conf$ passes the test, and we find a solution $\Sol$ as follows.
First, we solve the LP and obtain a basic feasible solution.
Let $F$ be the set of light items that are fractionally packed into more than bins.
By proposition ~\ref{prop:bp_LP}, $|F|\leq(|\dsize|-1)\cdot m$.
We partition $F$ arbitrarily into $m$ subsets, each containing at most $(|\dsize|-1)$ items,
and then pack the $k$-th subset into the $k$-th bin.
Since the expected size of a light item is less than $\e^{15}$, $\wt\pi_i(\size_k)\leq\e^{11}$ for any $i, k$.
Therefore, $\sig(S_j) \leq \wh\sig_j + (|\dsize|-1)\e^{11} = \wh\sig_j + \e^6\bf1$ for any $j$.

We need one more notation to describe our algorithm.
For a set of items $S$, let
$$
\Pr(\conf(S), \C) =
\Pr((\confl(S),\wh\sig(S)), \C) =
 \Pr\bigl[\wh X_H + \wh Y_L \geq \C\bigr]
$$
where $\wh X_H = \wh X\bigl(H(S)\bigr)$ and
$\wh Y_L$ is the CPD corresponding to $\wh \sig(S)$ (according to Lemma ~\ref{thm:poisson}) .
By definition, if two sets $S_1$ and $S_2$ have the same configuration, 
$\Pr(\conf(S_1), \C) = \Pr(\conf(S_2), \C)$.

Now, everything is ready to state our algorithm.
We simply enumerate all s-configurations in $\CF$.
For each s-configuration $\conf$,
we first compute $\Pr(\conf(S), \C)$.
If it is at most $p+O(\e)$, we run the feasibility test.
If $\conf$ passes the test, the returned solution is our final packing.
The pseudocode of our algorithm is described in Algorithm~\ref{algo:bp}.

\IncMargin{1em}
\begin{algorithm}[t]
    \BlankLine
    \For($\quad\quad//$ guess the number of bins in $\opt$){$m \leftarrow 1$ \KwTo $n$}{
        \For{each s-configuration $\conf=(\conf_1,\ldots,\conf_m)\in\CF$}{
            \If{$\Pr(\conf_j, (1+O(\e))\C) \leq p + O(\e), \forall j$ and $\conf$ pass the feasibility test}
            {
                    Return the solution $S$ obtained from the testing algorithm\;
            }
        }
    }
\caption{Stochastic Bin Packing}\label{algo:bp}
\end{algorithm}
\DecMargin{1em}

The algorithm clearly runs in polynomial time since the number of s-configurations is polynomial
and the feasibility test also runs in polynomial time.

\subsection{Analysis}

The following lemma shows that we can approximate the overflow probability of a bin given its configuration.
Therefore, it is sufficient to enumerate all possible configurations for each bin to find an approximation solution.

\begin{lem}
\label{lm:bp_conf}
For any set $S$ consisting of at most $3/\e^{16}$ heavy items, we have that
$$
\Bigl|\,\Pr(\conf(S), \C) - \Pr\bigl[\wt X(S) \geq \C\bigr]\,\Bigr| = O(\e).
$$
\end{lem}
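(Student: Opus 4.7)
The plan is to prove the stronger statement that $\Delta\bigl(\wt X(S),\, \wh X_H + \wh Y_L\bigr) = O(\e)$, which upper-bounds $\bigl|\Pr[\wt X(S)\geq \C]-\Pr[\wh X_H+\wh Y_L\geq \C]\bigr|$ and thus implies the lemma. I would handle the heavy contribution $\wt X(H(S))\mapsto \wh X_H$ and the light contribution $\wt X(L(S))\mapsto \wh Y_L$ separately, and then combine by the triangle inequality together with the fact that heavy and light items are independent.

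For the heavy items, the further discretization shifts at most $\e^{22}$ probability mass away from each of the $|\dsize|-1=O(\e^{-5})$ non-zero discretized sizes, so the per-item total variation distance between $\wt X_i$ and $\wh X_i$ is $O(\e^{17})$. Since the items are independent, the standard coupling across the heavy items yields
$$\Delta\bigl(\wt X(H(S)),\, \wh X(H(S))\bigr)\;\leq\; |H(S)|\cdot O(\e^{17})\;\leq\; \tfrac{3}{\e^{16}}\cdot O(\e^{17})\;=\; O(\e),$$
using the hypothesis $|H(S)|\leq 3/\e^{16}$.

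The light part is where the Poisson approximation does the real work and is the main obstacle. First, since $b\in L(S)$ implies $\E[\wt X_b]\leq \e^{15}$, Markov's inequality gives $\pi_b:=\Pr[\wt X_b\ne 0]=\Pr[\wt X_b\geq\e^4]\leq\e^{11}$; moreover Lemma~\ref{lm:boundexp}, applicable because the relaxed overflow bound $p+\e\leq 1$ forces $\E[\wt X(S)]\leq 3/\e$, gives $\sum_{b\in L(S)}\pi_b\leq 3/\e^5$, hence $\sum_b\pi_b^{\,2}\leq O(\e^6)$. Applying Lemma~\ref{thm:poisson} (after rescaling the discretized sizes by $\e^5$ to recover integer support) shows that $\wt X(L(S))$ is within total variation $O(\e^6)$ of the compound Poisson $Y_{\sig}$ corresponding to $\sig(L(S))$. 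Next I would relate $Y_{\sig}$ to $\wh Y_L=Y_{\wh\sig(S)}$. By the definition of $\wh\sig$, the difference $\wh\sig(S)-\sig(L(S))$ is coordinate-wise non-negative with each coordinate at most $\e^6$, and there are only $O(\e^{-5})$ coordinates, so the $\ell_1$-gap is $O(\e)$. Using the Poisson-thinning representation $Y_V\stackrel{d}{=}\sum_k \size_k N_k$ with $N_k\sim\Pois(V_k)$ independent, I can couple $Y_{\sig}$ and $\wh Y_L$ by adding independent $\Pois\bigl(\wh\sig(S)_k-\sig(L(S))_k\bigr)$ increments; the two coupled variables agree unless some increment is positive, which by a union bound happens with probability at most $\sum_k\bigl(\wh\sig(S)_k-\sig(L(S))_k\bigr)=O(\e)$. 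Triangle inequality gives $\Delta\bigl(\wt X(L(S)),\wh Y_L\bigr)=O(\e)$, and combining with the heavy estimate completes the proof.
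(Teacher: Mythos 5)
Your proposal is correct and follows essentially the same route as the paper: split $S$ into heavy and light parts, bound the heavy contribution by a per-item coupling over the $\e^{22}$-rounding (giving $3/\e^{16}\cdot|\dsize|\cdot\e^{22}=O(\e)$), handle the light contribution via Le Cam's theorem plus the $O(\e^6)$ per-coordinate gap between $\sig(S)$ and $\wh\sig(S)$, and combine with the triangle inequality for sums of independent variables. Your explicit Poisson-superposition coupling for the step from $Y_{\sig}$ to $\wh Y_L$ is just a more detailed rendering of the bound the paper asserts from $\|\sig(S)-\wh\sig(S)\|_\infty=O(\e^6)$.
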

\begin{proof}
Let $\wh X_H = \wh X\bigl(H(S)\bigr)$ and
$\wh Y_L$ be the CPD corresponding to $\wh \sig(S)$.
Since there are at most $3/\e^{16}$ heavy items, by the coupling argument,
$$
\Delta\Bigl(\wh  X_H, \wt X\bigl(H(S)\bigr)\Bigr) \leq 3/\e^{16} \cdot |\dsize| \cdot \e^{22} = O(\e).
$$
Let $Y_L$ be the CPD corresponding to $\sig(S)$. By Lemma ~\ref{lm:fixsetsignature},
we can see that
$$
\Delta\Bigl(Y_L, \wt X\bigl(L(S)\bigr)\Bigr) = O(\e).
$$
We can also show $\Delta\bigl(\wh Y_L, Y_L\bigr) = O(\e)$ since
$\bigl\|\,\sig(S) - \wh\sig(S)\,\bigr\|_\infty = O(\e^6)$.
Therefore,
\begin{eqnarray*}
  \Delta\Bigl(\wh X_H + \wh Y_L, \wt X(S)\Bigr)
   \leq \Delta\Bigl(\wh X_H, \wt X\bigl(H(S)\bigr)\Bigr)
    + \Delta\Bigl(\wh Y_L, Y_L\Bigr)
    + \Delta\Bigl(Y_L, \wt X\bigl(L(S)\bigr)\Bigr)
   = O(\e)
\end{eqnarray*}
This finishes the proof of the lemma.
\end{proof}

The following lemma shows that
we can approximate the overflow probability even if only an approximate signature is given.

\begin{lem}
\label{lm:bp_sig}
For any two sets $S_1$, $S_2$ such that $\confl(S_1) = \confl(S_2)$ and $\sig(S_1) \leq \sig(S_2) + 3\e^6\bf1$,
we have that
$$
\Pr\bigl[\wt X(S_1) \geq \C\bigr] \leq \Pr\bigl[\wt X(S_2) \geq \C\bigr] + O(\e).
$$
\end{lem}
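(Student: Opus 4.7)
My plan is to peel off the heavy items by coupling, replace the light-item sums by their compound Poisson approximations, and then compare the two CPDs using Lemma~\ref{lm:cpddomination} together with a small total-variation perturbation. The main obstacle I expect is in this last step, since Lemma~\ref{lm:cpddomination} only gives domination of $Y_1$ by the CPD whose parameter is the \emph{inflated} vector $\sig(S_2) + 3\e^6{\bf 1}$, so one must separately bound how much that inflated CPD differs from $Y_2$ itself.

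Since $\confl(S_1) = \confl(S_2)$, there is a natural bijection between $H(S_1)$ and $H(S_2)$ matching items whose further-discretized distributions $\wh\pi$ agree. Because $\wh\pi$ differs from $\wt\pi$ coordinatewise by at most $\e^{22}$ and each $S_j$ contains at most $O(\e^{-16})$ heavy items (by Lemma~\ref{lm:boundexp} together with $p\leq 1-\e$), the standard coupling argument yields $\Delta\bigl(\wt X(H(S_1)), \wt X(H(S_2))\bigr) = O(\e)$. Independence of the heavy and the light parts then reduces the lemma, at an additional $O(\e)$ cost, to showing that for every threshold $c \geq 0$,
\[
\Pr\bigl[\wt X(L(S_1)) \geq c\bigr] \;\leq\; \Pr\bigl[\wt X(L(S_2)) \geq c\bigr] + O(\e).
\]

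Exactly as in the proof of Lemma~\ref{lm:fixsetsignature}, I next approximate each $\wt X(L(S_i))$ by the CPD $Y_i$ with parameter vector $\sig(S_i)$; since each light item satisfies $\Pr[\wt X_b \neq 0] \leq \e^6$ and $\sum_{b\in L(S_i)}\Pr[\wt X_b\neq 0] = O(\e^{-5})$, Lemma~\ref{thm:poisson} gives $\Delta\bigl(\wt X(L(S_i)), Y_i\bigr) = O(\e)$. So it suffices to prove $\Pr[Y_1 \geq c] \leq \Pr[Y_2 \geq c] + O(\e)$.

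For this final step, introduce the CPD $Y$ with parameter vector $V := \sig(S_2) + 3\e^6{\bf 1}$. Since a CPD is additive in its parameter vector, $Y$ is distributed as $Y_2 + Z$ where $Z$ is an independent CPD with parameter $3\e^6{\bf 1}$; its Poisson count has mean $3\e^6(|\dsize|-1) = O(\e)$, so $\Pr[Z = 0] = 1 - O(\e)$ and hence $\Delta(Y, Y_2) \leq \Pr[Z\neq 0] = O(\e)$. Because $\sig(S_1) \leq V$ coordinatewise, Lemma~\ref{lm:cpddomination} gives the stochastic domination $\Pr[Y_1 \geq c] \leq \Pr[Y \geq c]$, and combining with $\Delta(Y, Y_2) = O(\e)$ closes the chain. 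The only point requiring genuine care is verifying that the $\ell_1$ norm of the slack $3\e^6{\bf 1}$, which is exactly the mean of the extra Poisson count in the decomposition $Y = Y_2 + Z$, is only $O(\e)$; that observation is what makes the required bound additive rather than multiplicative.
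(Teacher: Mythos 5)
Your proof is correct and follows essentially the same route as the paper's: handle the heavy parts via the common arrangement and the $\e^{22}$-rounding coupling, approximate the light parts by their compound Poisson distributions via Le Cam, and sandwich $Y_1$ between the CPD for the inflated vector $\sig(S_2)+3\e^6{\bf 1}$ (using Lemma~\ref{lm:cpddomination}) and $Y_2$ (using a total-variation bound of $O(\e)$). Your superposition decomposition $Y = Y_2 + Z$ with $\Pr[Z\neq 0] = 1 - e^{-3\e^6(|\dsize|-1)} = O(\e)$ is a slightly cleaner justification of the step the paper handles by coupling the two Poisson counts, but it is the same idea.
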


\begin{proof}
Let $\sig_S = \sig(S_2) + 3\e^6\bf1$.
Let $\wh X_H= \wh X\bigl(H(S_1)\bigr)=\wh X\bigl(H(S_2)\bigr)$ (since $\confl(S_1) = \confl(S_2)$).
Let $Y_1, Y_2$ be the CPD corresponding to $\wh \sig(S_1)$ and $\wh \sig(S_2)$, respectively.
Let $Z, Z_1, Z_2$ be the CPD corresponding to $\sig_S, \sig(S_1)$ and $\sig(S_2)$, respectively.
Since $\bigl\|\,\wh \sig(S_i) - \sig(S_i)\,\bigr\|_\infty = O(\e^6)$, we have that $\Delta\bigl(Y_i, Z_i\bigr) = O(\e)$
for $i=1,2$ (The proof is almost the same as that of Lemma~\ref{lm:fixsetsignature} and omitted here).
By Lemma~\ref{lm:bp_conf} we have that
$
\bigl|\,\Pr(\conf(S_i), \C) - \Pr[\wt X(S_i) \geq \C]\,\bigr| = O(\e)
$
for $i=1,2$.
Since $\bigl\|\,\sig_S - \sig(S_2)\,\bigr\|_\infty = 3\e^6$, we have that $\Delta\bigl(Z, Z_2\bigr) = O(\e)$.
Combining these facts together, we have that
\begin{align*}
\Pr\bigl[\wt X(S_2) \geq \C\bigr] & \geq \Pr\bigl[\wh X_H + Y_2 \geq \C\bigr] - O(\e)
  \geq \Pr\bigl[\wh X_H + Z_2 \geq \C\bigr] - O(\e) \\
 & \geq \Pr\bigl[\wh X_H + Z \geq \C\bigr] - O(\e).
\end{align*}
On the other hand, $\sig(S_1) \leq \sig_S$. So, by Lemma~\ref{lm:cpddomination},
$Z_1$ is stochastically dominated by $Z$. Therefore, we have that
\begin{align*}
\Pr\bigl[\wt X(S_1) \geq \C\bigr] & \leq \Pr\bigl[\wh X_H + Y_1 \geq \C\bigr] + O(\e)
 \leq \Pr\bigl[\wh X_H + Z_1 \geq \C\bigr] + O(\e) \\
& \leq \Pr\bigl[\wh X_H + Z \geq \C\bigr] + O(\e).
\end{align*}
Combining these two results, we complete the proof.
\end{proof}

Now, we are ready to prove the main theorem of this section.
\vspace{0.3cm}

\begin{proofofthm}{~\ref{thm:binpacking}}
Suppose the optimal solution $\opt=(O_1,\ldots, O_m)$ uses $m$ bins.
The algorithm will enumerate its s-configuration $\conf(\opt)$.
Obviously, $\conf(\opt)$ can pass the feasibility test.
Let $\Sol=(S_1,\ldots,S_m)$ be the solution obtained by the LP rounding procedure,
which guarantees that $\sig(S_j)\leq \sig(O_j)+3\e^6\bf1$, for any $j$.
By Lemma~\ref{lm:bp_sig} and Lemma ~\ref{lm:fixdiscretize}, for $1\leq j\leq m$,
\begin{align*}
\Pr\bigl[X(S_j) \geq (1+O(\e))\C\bigr] & \leq
\Pr\bigl[\wt X(S_j) \geq (1+O(\e))\C\bigr]+O(\e) \leq \Pr\bigl[\wt X(O_j) \geq (1+O(\e))\C\bigr] + O(\e) \\
& \leq \Pr\bigl[X(O_j) \geq \C\bigr] + O(\e) \leq p + O(\e).
\end{align*}
The proof of the theorem is completed.
\qed
\end{proofofthm}

\subsection{A 3-Approximation without Relaxing the Overflow Probability}
\label{sec:bp_norelax}
For \SBP, we can find a 3-approximation
within polynomial time without relaxing the overflow probability $p$,
for any constant $0<p<1$.
First, we note that for any set of items,
we can estimate the overflow probability
by using the technique for counting knapsack solutions~\cite{dyer2003approximate}.
In fact, since we assume $p$ is constant, we can simply use the
Monte Carlo method to get an estimate with additive error $\e$ with high probability
by randomly taking $O(\log n)$ samples.
For each item set $S$ we use $P(S)$ to denote the estimate probability of $\Pr[X(S)\geq\C+\e]$
Then, we have
\begin{description}
  \item[(a)] $(1-\e)\Pr[X(S)\geq\C+\e] \leq P(S)\leq (1+\e)\Pr[X(S)\geq\C+\e]$
\end{description}
with high probability ($1-\frac{1}{n^c}$ for some constant $c$) when $\Pr[X(S)\geq\C+\e] \geq p$.

We first run Algorithm ~\ref{algo:bp} and obtain a packing $B_1,\ldots, B_m$
where $S_i$ is the set of items in $B_i$ and $m\leq \opt$.
We know that $\Pr[X(B_i)\geq \C+\e]\leq p+\e$.
Then, for each $B_i$, we distribute the items in $B_i$ to at most 3 new bins
such that the overflow probability is not at most $p$ for each new bin.
Let $\e$ be any constant less than $\frac{p-p^2}{1+4p}$.
The pseudo-code of our algorithm is described in Algorithm ~\ref{algo:bp_norelax}.

\IncMargin{1em}
\begin{algorithm}[t]
    Run Algorithm~\ref{algo:bp} to get a packing $S_1, S_2, \ldots, S_m$\;
    \For{ each bin $B_i$ }{
        Pack each item in $B_i$ into a new bin $B_{i,j}$\;
        \While{ there exist two bins $B_{i,j_1}$, $B_{i,j_2}$ that $P(X(B_{i,j_1}\cup B_{i,j_2})) \leq (1-\e)p$ }{
        \nlset{(a)}    Merge $B_{i,j_1}$ and $B_{i,j_2}$ into one bin\;
        }
    }
\caption{
    \vspace{0.5cm}
Stochastic Bin Packing without Relaxing the Overflow Probability}\label{algo:bp_norelax}
\end{algorithm}
\DecMargin{1em}

\begin{proofofthm}{\ref{thm:binpackingnorelax}}
Now we claim that for each $B_i$, the while loop (a) produces
at most 3 new bins.
The approximation factor of 3 follows immediately since $m\leq \opt$.

Now, we prove the claim.
For each bin $B_{i,j}$ output by the algorithm, either it packs only one item
with overflow probability at most $p$ (otherwise, there is no feasible solution),
or $P(B_{i,j}) \leq (1-\e)p$.
Therefore, the true overflow probability of $B_{i,j}$ is at most $p$.

We still need to show the while loop terminates with at most 3 bins.
Suppose for contradictor that it outputs at least 4 bins, and
$B_{i,1}, B_{i,2}, B_{i,3}, B_{i,4}$ are four of them.
Then, $P(B_{i,1} \cup B_{i,2}) > (1-\e)p$.
Therefore, $\Pr[X(B_{i,1} \cup B_{i,2}) \geq \C+\e] > (1-2\e)p$.
Similarly, $\Pr[X(B_{i,3} \cup B_{i,4}) \geq \C+\e] > (1-2\e)p$.
Thus, we have that
$$
\Pr[X(B_i) \geq \C+\e] > 1-(1-(1-2\e)p)^2 > p + \e
$$
for any $\e< \frac{p-p^2}{1+4p}$.
This contradicts to the fact that $\Pr[X(B_i) \geq \C+\e] \leq p + \e$.
\qed
\end{proofofthm} 

\section{Stochastic Knapsack}

An instance of the stochastic knapsack problem
can be specified by a tuple $(\pi, \C)$,
where $\pi = \{\pi_1, \pi_2, ..., \pi_n\}$. $\pi_i$ is the joint distribution of size and profit for item $b_i$, $\forall i$. $\C$ is the capacity of the knapsack.
W.l.o.g., we can assume $\frac{1}{2}\leq\C\leq2$ 
and the size of each item is distributed between $0$ and $2\C$.
The relaxed knapsack capacity $\C+O(\e)$ should be less than $2\C$.
The distributions for different items are mutually independent.
We let random variables $X_b$ and $P_b$ denote the size and profit of item $b$.
We use $\pi_b$ to denote the probability distribution of $X_b$, i.e.,
$\pi_b(w)=\Pr[X_b=w]$.
W.l.o.g., we can assume that for each item, we obtain a fixed profit for each realized size.
For each item $b$, we define the  \textit{effective profit}  function 
\footnote{
We find the effective profit function easier to work with than the profit function
when the size and the profit are correlated.
}
$p_b$ to be
$$
p_b(w) = \E[P_b\mid X_b=w]\cdot \Pr[X_b=w]\,\, \text{ for any } \,\,\,w\in[0, \C].
$$
We use the shorthand notation $p_{b}(I) = \sum_{w \in I}p_b(w)$ for any $I\subseteq[0, \C]$.

\vspace{0.3cm}
\topic{\bf Policies:}
The process of applying a policy $\sigma$ on an instance $(\pi, \C)$ can be represented as a decision tree $T(\sigma, \pi, \C)$.
Each node $v$ in $T_\sigma$ corresponds to placing an item in the knapsack.
Each edge $e=(v,u)$ in $T_\sigma$ ($v$ is the parent) corresponds to a size realization of $v$.
We use $w_e$, $\pi_e$ to denote the corresponding size and probability of $e$, respectively.
We also use $T_\sigma$ to denote $T(\sigma, \pi, \C)$ when the context is clear.

We call the path from root to $v$ in $T_\sigma$ the \textit{realization path} of $v$, and denote it by $R(v)$.
For a node $v$, we denote the {\em occupied capacity} before inserting $v$ as
$
W(v)=\sum_{e\in R(v)}w_e
$
and the probability of reaching $v$ as
$
\Phi(v)=\prod_{e\in R(v)}\pi_e.
$
Denote by $R^\sigma$ the random set of items that $\sigma$ packs.

We use $\P(\sigma, \pi, \C)$ to denote the expected profit that
the policy $\sigma$ can obtain with the given distributions $\pi$ and total capacity $\C$.
We also use the shorthand notation $\P(\sigma)$ or $\P(T_\sigma)$ if the context is clear.
Recursively define the expected profit of the subtree $T_v$ rooted at $v$ to be
$$
\P(v)=\sum_{ e = (v,u) \mid W(v) + w_e \leq \C }\Big[p_v(w_e)+\pi_e\cdot \P(u)\Big].
$$
The expected profit $\P(\sigma)$ of policy $\sigma$
is simply $\P(\text{the root of }T_\sigma)$.
We use $\opt$ to denote the expected profit of the optimal adaptive policy.
We note that in some steps of our algorithm, we assume the knowledge of $\opt$.
In fact, any constant approximation of $\opt$,
which for example can be obtained using the approximation algorithm in \cite{dean2008approximating} for \SK\
or the one in \cite{gupta2011approximation} for \SKCRC,
would suffice for our purpose.

As we mentioned before, the problem is PSPACE-hard and the optimal policy may be
exponentially large.
In order to reduce search space, we need to focus on a
very special class of policies, in which it is possible to find a nearly optimal
policy efficiently and this policy is also close to the optimal policy for the original problem.
We start with some simple properties shown in Bhalgat et al.~\cite{bhalgat10}
\footnote{
In fact, they only considered the basic version of stochastic knapsack where the profit
is a fixed value for an item.
However, a scrutiny of their proofs shows that correlated profits do not affect the properties.
}.
W.l.o.g., we also assume that all (optimal or near optimal) policies $\sigma$ considered in this paper
have the following property:
\begin{itemize}
\item[{ P1.}] For $u, v\in T_\sigma$, if $u$ is an ancestor of $v$, then $\P(u)\geq \P(v)$.
\end{itemize}
Otherwise, replacing the subtree $T_u$ with $T_v$ increases the profit of the policy $\sigma$.
This also implies that for any $v\in T_\sigma$, $\P(v)\leq \P(\sigma)$.

\begin{lem}[part of Lemma 2.4 in \cite{bhalgat10}]
\label{lm:basicprop}
For any policy $\sigma$ on instance $(\pi,\C)$, there exists a policy $\sigma'$ such that
$\P(\sigma', \pi, \C)=(1-O(\e))\P(\sigma, \pi, \C)$ and
\begin{enumerate}
  \item[{\em P2.}] for any realization path $R$ in $T(\sigma', \pi, \C)$, $\sum_{v\in R}\E[X_v]=O(\C/\e)$.
\end{enumerate}
\end{lem}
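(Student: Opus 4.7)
The plan is to obtain $\sigma'$ by truncating $\sigma$: simulate $\sigma$, but just before placing each item $v$, check whether the cumulative expected size $\mu(v) := \sum_{u \in R(v)} \E[X_u]$ (including $v$ itself) would exceed a threshold $M = \Theta(\C/\e)$; if so, stop without placing $v$. Let $F \subseteq T_\sigma$ be the set of ``frontier'' nodes where the threshold is first crossed along a root-to-node path. By construction, every realization path in $T_{\sigma'}$ ends at the parent of some $v \in F$ (or at a leaf of $T_\sigma$), so the sum of $\E[X_u]$ along it is at most $M = O(\C/\e)$, which is property P2.

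It remains to show $\P(\sigma', \pi, \C) \geq (1 - O(\e))\,\P(\sigma, \pi, \C)$. The profit lost by the truncation is exactly $\sum_{v \in F} \Phi(v) \cdot \P(v)$, since for each $v \in F$ we forfeit the subtree rooted at $v$. Property P1 gives $\P(v) \leq \P(\sigma)$, so this lost profit is bounded by $\P(\sigma) \cdot \sum_{v \in F} \Phi(v)$. Because $F$ forms an antichain in $T_\sigma$ (each realization path meets $F$ in at most one node -- the first place the cumulative exceeds $M$), $\sum_{v \in F} \Phi(v)$ is exactly the probability that an execution of $\sigma$ passes through $F$.

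To bound this probability, consider the random variable $Z_\sigma = \sum_{v \text{ visited by } \sigma} \E[X_v]$. A run of $\sigma$ hits $F$ if and only if $Z_\sigma > M$ (since prefix sums along the path are monotone in depth). Now for each node $v$, the event ``$v$ is visited'' depends only on the size realizations $X_u$ for strict ancestors $u$ of $v$, hence is independent of $X_v$ itself. Therefore
\begin{align*}
\E[Z_\sigma] \;=\; \sum_v \Phi(v)\,\E[X_v] \;=\; \E\Bigl[\sum_{v \text{ visited}} X_v\Bigr] \;\leq\; 3\C,
\end{align*}
where the last inequality uses that the actual total size of items the policy attempts to insert is at most $\C$ (for successfully placed items) plus one overflow item of size at most $2\C$. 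Markov's inequality then gives $\Pr[Z_\sigma > M] \leq 3\C/M = O(\e)$ for the choice $M = 3\C/\e$. Combining, $\P(\sigma') \geq \P(\sigma) - O(\e)\,\P(\sigma) = (1-O(\e))\,\P(\sigma)$.

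The main subtlety is the innocuous-looking identity $\E[Z_\sigma] = \E[\sum_{v \text{ visited}} X_v]$: one must be careful that ``visited'' is a random set depending on the history, and justify it via the independence of $X_v$ from the indicator that $v$ is reached. Once that is set up, everything else is a clean application of Markov's inequality plus property P1; the bound $\E[Z_\sigma] \leq 3\C$ relies only on the assumptions that the knapsack capacity is $\C$ and that item sizes lie in $[0,2\C]$, both of which are standing assumptions in the section.
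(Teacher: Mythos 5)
Your proof is correct. Note that the paper itself does not prove this lemma --- it imports it verbatim from Bhalgat et al.~\cite{bhalgat10} --- so there is no in-paper argument to compare against; your write-up supplies a valid self-contained proof, and it follows the same standard route as the cited source: truncate the decision tree once the accumulated expected size crosses a threshold $M=\Theta(\C/\e)$, observe that the frontier $F$ is an antichain so the forfeited profit is $\sum_{v\in F}\Phi(v)\P(v)\leq \P(\sigma)\sum_{v\in F}\Phi(v)$ by P1, and bound $\sum_{v\in F}\Phi(v)=\Pr[Z_\sigma>M]$ by Markov. The one step that genuinely needs care is exactly the one you flag: the Wald-type identity $\sum_v\Phi(v)\E[X_v]=\E[\sum_{v\ \mathrm{visited}}X_v]$, which is legitimate because the event that $v$ is reached is measurable with respect to the realizations of the items at strict ancestors of $v$, and no item appears twice on a root-to-leaf path, so that event is independent of $X_{b_v}$; combined with the fact that a policy stops upon overflow (so the attempted items have total realized size at most $\C$ plus one final item of size at most $2\C$), this gives $\E[Z_\sigma]\leq 3\C$ and the claim follows.
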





\subsection{Discretization}

In this section, we discuss how to discretize the size and profit distributions
for items in $\B$, using parameter $\e$.
W.l.o.g., we assume that the range of $X_b$ is $[0, 2\C)$ for any item $b$.
The discretization of the size distributions is the same as the one in Section~\ref{sec:discretization}.
We also need to discretize the profit distributions.
For each item $b$ and $w \in [0, 2\C]$, we use $D_b(w)$ to denote the discretized size of $w$ for item $b$,
i.e., $D_b(w)$ is the value of $\wt X_b$ for $X_b = w$.
The discretized effective profit function $\wt p_b$ is defined to be
$$
\wt p_b(\size) = \sum_{w \mid D_b(w) = \size} p_b(w), \text{ for all } \size\in \dsize.
$$
This finishes the description of the discretization step


%
%

We need the notion of {\em canonical policies} introduced in \cite{bhalgat10}. 
A policy $\wt\sigma$ is a canonical policy if it makes decisions based on the discretized sizes of items inserted,
rather than their actual sizes.
A canonical policy stops inserting items when the total discretized size of items inserted exceeds the knapsack capacity $\C$.
Before that, it attempts to insert items even if the total actual size overflows.
No profit from those items can be collected.
In this following lemma, we show it suffices to only consider canonical policies. 
The proof is similar to that of Lemma A.5 in \cite{bhalgat10}, which can be found in the appendix.
Due to the presence of the correlations between profits and sizes,
we need to be more careful in bounding the profit loss.

\begin{lem}
\label{lm:policytransform}
Let $\pi$ be the joint distribution of size and profit for items in $\B$ and $\wt\pi$ be the discretized version of $\pi$.
Then, the following statements hold:
\begin{enumerate}
  \item For any policy $\sigma$, there exists a canonical policy $\wt\sigma$ such that
  $$
  \P(\wt\sigma, \wt\pi, (1+4\e)\C) = (1-O(\e))\P(\sigma, \pi, \C);
  $$
  \item For any canonical policy $\wt\sigma$,
  $$
  \P(\wt\sigma, \pi, (1+4\e)\C) = (1-O(\e))\P(\wt\sigma, \wt\pi, \C).
  $$
\end{enumerate}
\end{lem}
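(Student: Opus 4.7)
The plan is to couple the actual-size process under $\pi$ with the discretized-size process under $\wt\pi$ by setting $\wt X_b = D_b(X_b)$, and to exploit that the per-item discretization preserves conditional expectation in the small regime and rounds down by at most $\e^5$ in the large regime. Combined with property P2 of Lemma~\ref{lm:basicprop} (every realization path has total expected size $O(\C/\e)$), a Chebyshev-style argument in the spirit of Lemma~\ref{lm:fixdiscretize} will yield $|\sum_b X_b - \sum_b \wt X_b| = O(\e)\C$ with probability $1-O(\e)$ along any realization path; the slack $4\e\C$ in capacity is what absorbs this discrepancy.

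For Part 2, the canonical policy $\wt\sigma$ makes decisions based only on discretized sizes, so under the coupling both processes traverse the same random sequence of nodes in $T_{\wt\sigma}$. The expected profit collected at a node $v$ given observed discretized size $\size$ is $\wt p_v(\size)$ in both cases: by definition under $\wt\pi$, and under $\pi$ because $\E[p_v(X_v)\cdot\mathbf{1}\{D_v(X_v)=\size\}]$ equals $\wt p_v(\size)$ by the very definition of the discretized effective profit. The only discrepancy between the two processes is overflow: under $\wt\pi$ with capacity $\C$, $\wt\sigma$ stops when the discretized sum exceeds $\C$, whereas under $\pi$ with capacity $(1+4\e)\C$ item profits are forfeited once the actual sum exceeds $(1+4\e)\C$. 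The coupling bound shows the actual sum can exceed $(1+4\e)\C$ while the discretized sum is still within $\C$ only on an event of probability $O(\e)$, which translates to an $O(\e)$ relative profit loss.

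For Part 1, I construct $\wt\sigma$ by tree surgery on $\sigma$: at each node $v$ I partition $\sigma$'s outgoing edges by discretized size $\size=D_v(w_e)$, and for each discretized class I select the $\sigma$-subtree within the class having the highest expected forward profit and designate it as $\wt\sigma$'s subtree for the single canonical outgoing edge labeled $\size$. The profit is then analyzed by induction on depth: the best-in-class choice dominates the weighted average within each class, so $\sum_\size \wt\pi_v(\size)\cdot\P_{\wt\sigma}(e_\size)\geq\sum_e \pi_e\cdot\P_\sigma(e)$, while the effective-profit contribution aggregates exactly via $\wt p_v(\size)=\sum_{w:\,D_v(w)=\size} p_v(w)$. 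The capacity accounting differs between the two (discretized versus actual), but the relaxation to $(1+4\e)\C$ combined with the coupling bound ensures that with probability $1-O(\e)$ every node reached by $\sigma$ before actual overflow at $\C$ is also reached by $\wt\sigma$ before discretized overflow at $(1+4\e)\C$.

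The main obstacle is Part 1's construction: canonical policies must commit to a single subtree per discretized observation, losing information when distinct actual sizes share a discretization class. The best-representative trick recovers this loss in expectation, but propagating it through the induction requires care with the profit-size correlation, and property P1 must continue to hold after surgery (which it does, since replacing a subtree by its best sibling within a class can only increase forward profit). A secondary technical point is that the small-size region requires a variance bound along a path possibly containing many light items; this is where the expectation-preserving Bernoulli discretization, with per-step zero-mean error of variance $O(\e^4\E[X_b])$, is crucial, while the large-size region contributes only a deterministic $O(\e)\C$ rounding error bounded by the number of inserted items.
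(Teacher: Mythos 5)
Your Part~2 argument is essentially the paper's: under a canonical policy the two decision trees are node-for-node identical, the immediate expected profit at a node with discretized observation $\size$ is $\wt p_v(\size)$ under either distribution, and the only loss is the $\Phi$-mass of nodes $u$ with $\wt W(u)\leq \C$ but $W(u)>(1+4\e)\C$, which is $O(\e)$ by exactly the Chebyshev/variance argument you describe (zero-mean Step-1 error of variance $O(\e^4\E[X_b])$ per item, summed via property P2, plus the deterministic multiplicative Step-2 error).

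For Part~1, however, your construction diverges from the paper's and the induction as written has a gap. You build $\wt\sigma$ by deterministic surgery: within each discretized class at a node you graft the $\sigma$-subtree of maximum forward profit. The inequality you then assert, $\sum_\size \wt\pi_v(\size)\,\P_{\wt\sigma}(e_\size)\geq\sum_e\pi_e\,\P_\sigma(e)$, compares quantities living under different semantics: $\P_\sigma(u_w)$ is the profit of $T_{u_w}$ evaluated with \emph{actual} sizes, capacity $\C$, and starting weight $W(v)+w$, whereas the grafted subtree in $\wt\sigma$ is evaluated with \emph{discretized} sizes, capacity $(1+4\e)\C$, and starting weight $\wt W(v)+\size$. ``Best dominates average'' only gives you domination among the $\P_\sigma(u_w)$ values themselves; relating the canonical-semantics profit of the chosen subtree to its actual-semantics profit is the entire content of the lemma, and you cannot invoke ``the coupling bound'' to bridge them, because the surgery destroys the coupling: once $\wt\sigma$ commits to one representative branch per class, a run of $\wt\sigma$ on $\wt\pi$ no longer induces the distribution over actual size sequences of $T_\sigma$ on which the zero-mean martingale argument for $|W-\wt W|$ is based. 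The paper resolves exactly this by first passing to a \emph{randomized} canonical policy $\sigma_r$ that, upon observing discretized size $\size$, picks the branch $w$ with probability $\pi_b(w)/\wt\pi_b(\size)$; this preserves edge probabilities and node profits exactly, so the two policies differ only in the capacity test ($W(u)\leq\C$ versus $\wt W(u)\leq(1+4\e)\C$), the loss is charged to a node set $S_1$ containing at most one node per root-leaf path with $\wt W - W>\e(\C+1)$, and $\sum_{v\in S_1}\Phi(v)=O(\e)$ follows from the variance bound together with property P1 to cap each lost subtree's profit by $\P(\sigma)$. Derandomization is then immediate, since a randomized policy is a mixture of deterministic canonical policies. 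To repair your version you would either need to carry out essentially this randomized construction first and derandomize at the end, or set up the induction with the grafted subtrees' profits already measured in the canonical semantics and separately bound the per-level conversion loss along one-node-per-path cuts.
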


\eat{
\begin{proof}
For the first result, we first prove that there is a randomized canonical policy $\sigma_r$ such that
$
\P(\sigma_r, \wt\pi, (1+4\e)\C) = (1-O(\e))\P(\sigma, \pi, \C).
$
Thus such a deterministic policy $\wt\sigma$ exists.

In the decision tree $T(\sigma, \pi, \C)$,
each edge $e = (v, u)$ corresponds to an actual size realization of item $v$.
We use $\wt w_e$ to denote the discretized size of $w_e$, i.e., $\wt w_e = D_v(w_e)$.

The randomized policy $\sigma_r$ is derived from $\sigma$ as follows.
$T_{\sigma_r}$ has the same tree structure as $T_\sigma$.
If $\sigma_r$ inserts an item $b$ and observes a discretized size $\size\in \dsize$,
it chooses a random branch in $\sigma$ among those sizes that are mapped to $\size$, i.e., $\{w\mid D_b(w)=\size\}$
according to the probability distribution $\Pr[\text{branch } w \text{ is chosen}]=\pi_b(w)/\wt\pi_b(\size)$,
where $\wt\pi_b$ is the discretized version of $\pi_b$.
We can see that the probability of an edge in $T_{\sigma_r}$ is the same as the same as
that of the corresponding edge in $T_{\sigma}$. The only difference is two edges are labeled with different
lengths ($w_e$ in $T_{\sigma}$ and $\wt w_e$ in $T_{\sigma_r}$).

%

Now, we bound the profit we can collect from $T_{\sigma_r}$ with a knapsack capacity of $(1+4\e)\C$.
Recall $R(v)$ is the realization path from the root to $v$ in $T_{\sigma}$.
Let $\wt W(v) = \sum_{e \in R(v)}\wt{w}_e$.
W.l.o.g., we assume that
$\forall v\in T(\sigma, \pi, \C)$, $W(v) \leq \C$.
By definition, we have that
$\wt p_v(\size) = \sum_{e=(v,u) \mid \wt w_e = \size}p_v(w_e)$.
By our construction, we have that
\begin{eqnarray*}
  \P(\sigma_r, \wt\pi, (1+4\e)\C) &=& \sum_{v, s \mid \wt W(v) + s \leq (1+4\e)\C} \Phi(v) \wt p_v(s) \\
   &=& \sum_{e = (v, u) \mid \wt W(v) + \wt w_e \leq (1+4\e)\C} \Phi(v) p_v(w_e) \\
   &=& \sum_{e = (v, u) \mid \wt W(u) \leq (1+4\e)\C} \Phi(v) p_v(w_e) \\
   &=& \P(\sigma, \pi, \C)
    - \sum_{e = (v, u) \mid \wt W(u) > (1+4\e)\C} \Phi(v) p_v(w_e) \\
   &\geq& \P(\sigma, \pi, \C) \biggl[1 - \sum_{v \in S_1}\Phi(v)\biggr]
\end{eqnarray*}
where
$S_1 = \{v\in T(\sigma, \pi, \C) \mid \wt W(v) \leq (1+4\e)\C \text{ and } \exists e = (v, u), \wt W(u) > (1+4\e)\C\}$.
The last inequality holds due to Property P1.
We upper bound
$\sum_{v \in S_1}\Phi(v)$ in the following lemma.

\begin{lem}
\label{lm:disctprop}
Let $\sigma$ be a policy and let $(\pi, \C)$ be an instance of the stochastic knapsack problem.
Let $S$ be a set of nodes in $T(\sigma, \pi, \C)$
that contains at most one node from each root-leaf path. Then we have that
$$
\sum_{v \in S \mid \left|W(v) - \wt W(v)\right| \geq \e(\C+1)}\Phi(v) = O(\e).
$$
\end{lem}

\begin{proof}
We can assume w.l.o.g. that $S$ is the set of leaves in $T_\sigma$.
We prove this lemma for both steps of the discretization in Section~\ref{sec:discretization}.
First, we show after
{\bf Step 1:}
it holds that
\begin{description}
  \item[(a)]
    $\sum_{v \in S \mid \left|W(v) - \wt W(v)\right| \geq \e}\Phi(v) = O(\e).$
\end{description}
For $u\in T_\sigma$, let $b_u$ be the item corresponding to $u$ and define $\delta_u = X_{b_u} - \wt X_{b_u}$.
By our discretization,
$\E[\delta_u] = 0$ and $\Var[\delta_u] \leq 2\e^4\E[X_{b_u}]$.

Let $R_\sigma$ be the random path (consisting of both nodes and edges) $\sigma$ would choose.
For $u \in T_\sigma$,
let $R_{u+}$ be the random path $\sigma$ would choose after reaching $u$ (including $u$ and edges incident on $u$).
Define $\delta(R_{u+}) = \sum_{e \in R_{u+}} (w_e - \wt w_e)$.
Since $\sigma$ packs an item $b$ before it realizes to a particular size, we have that:
$
\E[\delta(R_{u+})] = \sum_{u' \in T_\sigma}\Pr[\sigma \text{ chooses } u']\E[\delta_{u'}] = 0.
$
Moreover,
\begin{eqnarray*}
  \Var\left[\delta(R_{u+})\right] &=& \E\left[(\delta(R_{u+}))^2\right] - (\E[\delta(R_{u+})])^2
    = \E\left[(\delta(R_{u+}))^2\right] \\
   &=& \sum_{e = (u, u')}\pi_e \cdot \E\left[(w_e-\wt{w}_e+\delta(R_{u'+}))^2\right] \\
   &=& \sum_{e = (u, u')}\pi_e\left((w_e-\wt{w}_e)^2 + 0 + \E\left[(\delta(R_{u'+}))^2\right]\right) \\
   &\leq& \Var[\delta_u] + \max_{e = (u, u')}\Var[\delta(R_{u'+})] \\
   &\leq& 2\e^4\E[X_{b_u}] + \max_{e = (u, u')}\Var[\delta(R_{u'+})]
\end{eqnarray*}

By Property P2, $\sum_{u \in R(v)} \E[X_{b_u}] = O(\C/\e)$ for any root-leaf path $R(v)$.
Then we have $\Var[\delta(R_\sigma)] = O(\e^3)$ by induction.
By Chebychev's inequality, we get that
$$
\sum_{v \in S \mid \left|W(v) - \wt W(v)\right| \geq \e}\Phi(v)
  = \Pr[\left|\delta(R_\sigma)\right| \geq \e]
  \leq \frac{\Var[\delta(R_\sigma)]}{\e^2}
  = O(\e).
$$

\vspace{0.3cm}
\noindent
{\bf Step 2:}
For step 2, $\left|w_e-\wt w_e\right| \leq \e w_e$ and $W(v) \leq \C$ for any $e, v$.
Thus we have that:
\begin{description}
  \item[(b)]
    $\forall v \in S, \left|W(v) - \wt W(v)\right| \leq \e\C.$
\end{description}
From (a) and (b), we can conclude the lemma.
\end{proof}

For any edge $e = (v,u)\in T(\sigma, \pi, \C)$, we have $W(u) \leq \C$ and $\wt w_e - w_e \leq \e^4$. Thus for any $v\in S_1$, we have
\begin{eqnarray*}
  \wt W(V) - W(V) \geq \max_u \left\{\wt W(u) - W(u) - \e^4 \right\}
   > (1+4\e)\C - \C - \e^4 \geq \e(\C+1).
\end{eqnarray*}
Therefore,
$
\sum_{v \in S_1}\Phi(v) = \sum_{v \in S_1 \mid \wt W(v) - W(v) > \e(\C+1)} \Phi(v).
$
Note that $S_1$ contains at most one node from each root-leaf path.
Applying Lemma~\ref{lm:disctprop} we have that $\sum_{v\in S_1}\Phi(v) = O(\e)$.
This completes the proof of the first part.

\vspace{0.3cm}

Now, we prove the second part. $\Phi(v)$s are defined with respect to  $T(\wt\sigma, \wt\pi, \C)$.
%
Since a canonical policy makes decisions based on the discretized sizes,
$T(\wt\sigma, \wt\pi, \C)$ has the same tree structure as $T(\wt\sigma, \pi, (1+4\e)\C)$,
except that we can not collect the profit from the later if the knapsack overflows at the end of the policy.
More precisely, we have that
$$
\P(\wt\sigma, \pi, (1+4\e)\C) = \sum_{e = (v, u) \mid W(u) \leq (1+4\e)\C} \Phi(v) p_v(w_e).
$$
where $e\in T(\wt\sigma, \wt\pi,\C)$.
W.l.o.g., we assume that $\wt W(v) \leq \C$ holds for all $v \in T(\wt\sigma, \wt\pi, \C)$.
Thus
$$
\P(\sigma, \pi, (1+4\e)\C) \geq \P(\wt\sigma, \wt\pi, \C)\biggl[1 - \sum_{v\in S_2}\Phi(v)\biggr]
$$
where
$S_2 = \{v\in T(\wt\sigma, \wt\pi, \C) \mid W(v) \leq (1+4\e)\C \text{ and } \exists e = (v, u), W(u) > (1+4\e)\C\}$.
By Lemma~\ref{lm:disctprop}, we have $\sum_{v\in S_2} \Phi(v) = O(\e)$. This completes the proof of the lemma.
\end{proof}
}



\subsection{Block-Adaptive Policies}


To further reduce the search space,
Bhalgat et al. \cite{bhalgat10} discovered a very specific class of canonical policies, called {\em block-adaptive policies}
and showed it is sufficient to restrict the search to this set if we are satisfied 
with a nearly optimal policy.
In a block-adaptive policy,
instead of inserting one item at a time,
we insert a set of items together each time.
This set of items is called a \textit{block}.
A block-adaptive policy $\hsigma$ can also be thought as a decision tree $T_{\hsigma}$
where each node in the tree corresponds to a block.
Each edge incident on a vertex corresponds to a realization of the sum of the discretized sizes
of all items in the block.

It has been shown in \cite{bhalgat10} that
for \SK, from an optimal (or nearly optimal) adaptive canonical policy $\sigma$,
we can construct a block adaptive policy (with some other nice properties),
from which we can get almost as much profit as from $\sigma$, as in the following Lemma.

\begin{lem}
\label{lm:SKCRC_blockpolicy}
A canonical policy $\wt\sigma$ with expected profit $(1-\e)\opt$
can be transformed into a block-adaptive policy
with expected profit $(1 - O(\e))\opt$ when the capacity is further relaxed by $\e\C$.
Moreover, the block-adaptive policy satisfies the following properties:
\begin{enumerate}
  \item[B1.] There are $O(\e^{-14})$ blocks on any root-leaf path in the decision tree.
  \item[B2.] There are $|\dsize|=O(\C/\e^5)$ children for each block.
  \item[B3.] Each block $M$ with more than one items satisfies that
  $\sum_{b\in M} \E\big[\wt X_b\big] \leq \e^{13}$.
   \footnote{In fact, this property was not explicitly mentioned in Bhalget et al.~\cite{bhalgat10}.
  But it can be concluded from 
  the fact that $M$ has total profit at most $2\e^{14}\opt$
  and each item has a profit density at least $\e\opt$. In our alternative proof provided in Section~\ref{sec:SKCRC_blockadaptive},
  we do not need the notion of profit density.}
\end{enumerate}
\end{lem}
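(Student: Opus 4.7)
I construct the block-adaptive policy $\hsigma$ from the canonical policy $\wt\sigma$ by aggregating consecutive items along root-to-leaf walks of $T_{\wt\sigma}$ into blocks. The grouping is greedy: any item $b$ with $\E[\wt X_b] > \e^{13}$ forms its own singleton block; otherwise I keep appending items to the current block while $\sum_{b\in M} \E[\wt X_b]$ stays below $\e^{13}$, and I cut the block the moment adding the next item would break this threshold. Within each block, I fix the ordered item sequence to be the trajectory in $T_{\wt\sigma}$ whose conditional expected profit-to-go is maximum. After inserting the block, $\hsigma$ branches on the realized discretized block sum $S_M = \sum_{b\in M}\wt X_b$, which takes at most $|\dsize|$ distinct values.

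\textbf{Structural properties.} B2 is immediate. B3 follows from the grouping rule, because a multi-item block is closed once its expected size-sum reaches $\e^{13}$. For B1, Property P2 of Lemma~\ref{lm:basicprop} caps the total expected size along any root-to-leaf path in $T_{\wt\sigma}$ at $O(\C/\e) = O(1/\e)$. Every block in $T_{\hsigma}$ contributes $\Omega(\e^{13})$ to this budget (a multi-item block was only closed after accumulating at least this much mass; a singleton block contains one item with $\E[\wt X_b] > \e^{13}$), so each path carries at most $O(\e^{-14})$ blocks.

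\textbf{Profit bound, the main obstacle.} Unlike in \cite{bhalgat10}, the correlation between size and profit rules out a profit-density argument: the profit gap between two committed item sequences cannot be bounded by their size gap alone. My alternative analysis has two ingredients. First, after committing to $M$'s items, $\hsigma$ branches to the subtree of $T_{\wt\sigma}$ reachable with discretized block sum $s$ that maximizes expected profit-to-go; a conditional-expectation argument then shows that, conditioned on $S_M = s$, the expected future profit of $\hsigma$ dominates that of $\wt\sigma$, so no profit is lost at branching. Second, the intra-block profit discrepancy (between $\hsigma$'s committed trajectory and $\wt\sigma$'s adaptive sequence on a non-matching realization) is charged per block against the block's own contribution to $\P(\wt\sigma)$ rather than against $\opt$; summing these per-block contributions telescopes to $O(\e)\P(\wt\sigma) = O(\e)\opt$ along every path. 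Finally, the extra $\e\C$ of relaxed capacity absorbs the possible overshoot at the last committed block on a path (whose size mass is at most $\e^{13} \ll \e\C$), which together with Property P1 and Lemma~\ref{lm:policytransform} yields $\P(\hsigma, \wt\pi, (1+O(\e))\C) \geq (1-O(\e))\opt$ as required.
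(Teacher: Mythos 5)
There is a genuine gap, and it is exactly at the point you flag as ``the main obstacle.'' Your blocks are formed by expected size alone, and after a block you jump to the profit-maximizing subtree of $T_{\wt\sigma}$ reachable with the realized block sum. This breaks down in two ways. First, feasibility: if the first nonzero size occurs at a middle node $u$ of the block, the candidate continuation $T_{u_{\size}}$ in $\wt\sigma$ may later insert the very items of the block that your policy has already committed after $u$, so the composed policy can pack an item twice (and, in the \GSK\ extension used for \SKCRC, two items from the same set $\B_i$); the max-over-reachable-subtrees domination argument says nothing once the continuation is infeasible. The paper's construction avoids this by always moving to $l(v)_{\size}$, the $\size$-child of the \emph{last} node of the segment, which guarantees that the set of items ever attempted is a realization path of $\wt\sigma$. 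The price of that jump is controlled by a second condition built into the segment definition, $|\P(T_{u_{\size}})-\P(T_{l(v)_{\size}})|\leq \e^5\opt$ for all $u$ in the segment and all $\size$ (this condition also contributes an $O(\e^{-10})$ term to the segment count). Your grouping has no such condition, and with correlated sizes and profits the profit-to-go can change arbitrarily across a block of tiny expected size, so your ``charge the intra-block discrepancy to the block's own contribution and telescope'' step has no quantitative basis as written.

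The capacity accounting is also off. The overshoot is not confined to ``the last committed block'': on every one of the up to $\C/\e^4$ blocks in which a nonzero size appears mid-block, your policy has committed the remaining items, and it is their \emph{realized} sizes (not the $\e^{13}$ expected mass) that consume capacity the continuation $\P(T_{u_{\size}})$ assumed was free, so ``no profit is lost at branching'' fails once this load hits the main knapsack. The paper handles this with the auxiliary-knapsack device: leftover items of each segment go into a separate knapsack of capacity $\e\C$, the policy's decisions are independent of their realizations, the expected total auxiliary load is at most $\C/\e^4\cdot\e^{13}=\e^9\C$, and Markov's inequality shows the auxiliary knapsack overflows (in which case all profit is discarded) with probability at most $\e^8$. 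Your proposal contains no mechanism playing this role. (A minor, fixable point: a multi-item block can be closed with small mass because the next item is large, so B1 needs the standard pairing of consecutive blocks rather than ``each block contributes $\Omega(\e^{13})$.'')
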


In Section~\ref{sec:SKCRC_blockadaptive}, we provide a proof for the generalization of
the above lemma to \SKCRC.

\subsection{Poisson Approximation}

To search for the (nearly) optimal block-adaptive policy, we want to enumerate all possible
structures for a block.
In \cite{bhalgat10}, this is done by enumerating all different combinations of
the profit contributions from $O(\log n)$ equivalence classes of items, using the technique developed in \cite{chekuri2000ptas}.
Instead, we enumerate all possible signatures of a block, similar to what we have done in Section~\ref{sec:fixset}.
Since we consider correlated profits and sizes, a signature needs to reflect the profit distribution as well as the size distribution.
Formally, for an item $b$, we define the {\em signature} of $b$ to be
$$
\sig(b) = \Bigl(\ol p_b(\size_0), \ol p_b(\size_1), \ldots, \ol p_b(\size_{z-1}); \ol\pi_b(\size_1), \ol\pi_b(\size_2), \ldots, \ol\pi_b(\size_{z-1})\Bigr),
$$
where $\ol p_b(\size) = \bigl\lfloor \wt p_b(\size)\cdot\frac{n}{\e^{23}\opt}\bigr\rfloor\cdot\frac{\e^{23}\opt}{n}$
and $\ol \pi_b(\size) = \bigl\lfloor \wt \pi_b(\size)\cdot\frac{n}{\e^{23}}\bigr\rfloor\cdot\frac{\e^{23}}{n}$ for any $\size\in \dsize$.
For a block $M$ of items, we define the {\em signature} of $M$ to be

\begin{center}
$\sig(M)=\sum_{b\in M}\sig(b).$
\end{center}
We denote the ``zero size probability'' of $M$ to be $\wt\pi_M^0 = \Pr \bigl[ \wt X(M) = 0 \bigr]$.

The following lemma shows that it is sufficient to enumerate all signatures for all blocks in a
block-adaptive policy.
\begin{lem}
\label{lm:signature}
Consider two decision trees $T_1, T_2$ corresponding to block-adaptive policies with the same topology
(i.e., $T_1$ and $T_2$ are isomorphic).
If for each block $M_1$ in $T_1$, the block $M_2$ at the corresponding position in $T_2$
satisfies that $\sig(M_1) = \sig(M_2)$,
then $|\P(T_1)-\P(T_2)|=O(\e)\opt$.
\end{lem}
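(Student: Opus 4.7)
The plan is to prove the lemma by a leaves-to-root induction on the common tree topology, establishing that for every corresponding pair of block-nodes $M_1\in T_1$ and $M_2\in T_2$ with $\sig(M_1)=\sig(M_2)$, the expected profits $\P(M_1)$ and $\P(M_2)$ differ by at most $O(\e^{18})\opt$ per remaining level. The core is a per-block comparison with two components: matching the immediate profit collected when the block is inserted, and matching the distribution of the child that is entered afterwards. Both are controlled by the signature, so corresponding blocks behave almost identically.

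First I would handle the immediate profit. Conditioned on reaching the block, the expected profit collected inside $M$ is, up to the $\size$-granularity, a linear combination of the $\wt p_b(\size)$ values for $b\in M$ and $\size\in\dsize$. The signature records the rounded sums $\sum_{b\in M}\ol p_b(\size)$, and the per-item-per-coordinate rounding error is at most $\e^{23}\opt/n$; summed over $n$ items and $|\dsize|=O(\e^{-5})$ coordinates, the immediate-profit discrepancy is at most $O(\e^{18})\opt$ per block.

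Next I would use Le Cam's theorem (Lemma~\ref{thm:poisson}) to compare the size distributions $\wt X(M_1)$ and $\wt X(M_2)$. For a multi-item block $M$, property B3 gives $\sum_{b\in M}\E[\wt X_b]\leq \e^{13}$; combined with the fact that $\wt X_b\geq \e^4$ whenever it is nonzero, this yields $\Pr[\wt X_b\ne 0]\leq \e^9$ for each $b\in M$, hence $\sum_{b\in M}(\Pr[\wt X_b\ne 0])^2=O(\e^{18})$. Applying Lemma~\ref{thm:poisson} to the $\ol X_b$ variables shows that $\ol X(M_1)$ and $\ol X(M_2)$ are both within total variation $O(\e^{18})$ of the \emph{same} compound Poisson distribution (because $\sig(M_1)=\sig(M_2)$), and an additional $O(\e^{18})$ coupling slack absorbs the rounding from $\wt\pi_b$ to $\ol\pi_b$, giving $\Delta(\wt X(M_1),\wt X(M_2))=O(\e^{18})$. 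Single-item blocks are easier: signature matching forces the two items to have identical discretized distributions up to $O(\e^{18})$ rounding.

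The inductive step then writes $\P(M_i)=(\text{immediate profit at }M_i)+\sum_{\size\in\dsize}\Pr[\wt X(M_i)=\size]\cdot \P(\text{child of }M_i\text{ at edge }\size)$. The discrepancy between the two trees decomposes into the immediate-profit gap, the TV-induced shift in child-reaching probabilities (each weighted by a child subtree value bounded by $\opt$ via property P1), and the recursive gap at the corresponding children. Summing over any root-to-leaf path of length $O(\e^{-14})$ (property B1) telescopes to a total error of $O(\e^{-14}\cdot\e^{18})\opt=O(\e^4)\opt=O(\e)\opt$. The main obstacle is keeping the errors additive through the recursion rather than multiplicative: this is exactly what property P1 buys us, since it bounds every subtree value by $\opt$, so that each $O(\e^{18})$ total-variation gap translates into an additive $O(\e^{18})\opt$ profit loss no matter where it occurs in the tree.
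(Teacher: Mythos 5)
Your overall strategy---per-block comparison of the immediate profit and of the size distribution via Le Cam's theorem, followed by a telescoping induction over the $O(\e^{-14})$-depth tree with P1 bounding every subtree value by $\opt$---is exactly the paper's route (the paper packages the per-block step as Lemma~\ref{lm:blockreplacement} and then sums $\Phi(M)$ over all blocks). Your treatment of the size distributions is correct: $\Pr[\wt X_b\neq 0]\leq \e^{13}/\e^4=\e^9$ for each item of a multi-item block, hence $\sum_b(\Pr[\wt X_b\neq 0])^2=O(\e^{18})$, and the signature-rounding slack is another $O(\e^{18})$, giving $\Delta(\wt X(M_1),\wt X(M_2))=O(\e^{18})$ as in the paper.

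The gap is in the immediate-profit step. The expected profit collected inside a block $M$ is \emph{not} a linear combination of the $\wt p_b(\size)$ with coefficients determined by the signature: the profit from item $b$ is $\sum_{\size}\Pr[W_b\leq \C'-\size]\,\wt p_b(\size)$, where $W_b$ is the capacity occupied by the items of $M$ inserted before $b$, and these coefficients depend on the individual items and their order, not on $\sig(M)$. Consequently two blocks with identical signatures can have immediate profits differing by up to $(1-\wt\pi_M^0)\sum_b\wt p_b([0,\C'])=O(\e^9)\P(M_1)$, which can be as large as $O(\e^9)\opt$ --- far bigger than the $O(\e^{18})\opt$ you claim per block. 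The paper's fix is the sandwich $\wt\pi_M^0\sum_{b\in M}\wt p_b([0,\C'])\leq \P(M)\leq \sum_{b\in M}\wt p_b([0,\C'])$ with $\wt\pi_M^0\geq 1-O(\e^9)$ (from B3 and Markov), yielding the per-block bound $O(\e^{18})\opt+O(\e^9)\P(M_1)$; the extra terms are then summed separately using $\sum_M\Phi(M)\P(M)=\P(\sigma^b_1)\leq\opt$, contributing only $O(\e^9)\opt$ in total. With that correction your telescoping gives $O(\e^4)\opt+O(\e^9)\opt=O(\e)\opt$ and the argument goes through; as written, the per-block claim is false and the accounting does not close.
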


Before proving Lemma~\ref{lm:signature}, we need to prove the following result.

\begin{lem}
\label{lm:blockreplacement}
Suppose the capacity of the knapsack is $\C'\leq C$.
Let $M_1, M_2$ be two blocks with the same signature $\sig_M$.
Let $\sigma^b$ be a block adaptive policy in which $M_1$ is the root block.
Let $\P(M_1)$ be the expected profit we can get from $M_1$ with a knapsack capacity $\C'$.
Then, replacing $M_1$ with $M_2$ in $T_{\sigma^b}$
incurs a expected profit loss of at most $O(\e^{18})\opt + O(\e^9)\P(M_1)$.
\end{lem}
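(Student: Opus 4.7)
The plan is to decompose the expected profit of each tree into a root-block contribution and a child-subtree contribution, and bound the difference in each separately. Let $T_i$ denote the block-adaptive tree with root block $M_i$ with the child subtrees held identical. For each discretized size $\size\in\dsize$, let $\tau_\size$ be the child subtree reached when the root block realizes to total discretized size $\size$, and let $R_i(\C')$ denote the expected profit collected from the root block $M_i$ under capacity $\C'$. Then
\[
\P(T_i) \;=\; R_i(\C') \;+\; \sum_\size \Pr[\wt X(M_i)=\size]\cdot\P(\tau_\size),
\]
so the replacement loss decomposes as $|R_1(\C')-R_2(\C')|$ plus the reweighting error $\bigl|\sum_\size(\Pr[\wt X(M_1)=\size]-\Pr[\wt X(M_2)=\size])\P(\tau_\size)\bigr|$.

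For the first term, the equality of signatures yields $\sum_{b\in M_1}\ol p_b(\size)=\sum_{b\in M_2}\ol p_b(\size)$ for every $\size$, while the per-item per-coordinate rounding error $|\wt p_b-\ol p_b|$ is at most $\e^{23}\opt/n$. Summing over $\le n$ items and $|\dsize|=O(\e^{-5})$ coordinates bounds $\bigl|\sum_\size \sum_{b\in M_1}\wt p_b(\size)-\sum_\size\sum_{b\in M_2}\wt p_b(\size)\bigr|$ by $O(\e^{18})\opt$. For multi-item blocks, Property B3 together with Markov's inequality gives within-block overflow probability $O(\e^{13})$, so $R_i(\C')$ agrees with $\sum_\size\sum_{b\in M_i}\wt p_b(\size)$ up to a further $O(\e^{13})\opt$ term; for a singleton block the identity $R_i(\C')=\sum_{\size\le\C'}\wt p_b(\size)$ is exact. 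Either way this gives the $O(\e^{18})\opt$ contribution.

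For the second term, Property P1 yields $\P(\tau_\size)\le\P(M_1)$ for every $\size$, so the reweighting error is at most $\Delta(\wt X(M_1),\wt X(M_2))\cdot\P(M_1)$. To control this total variation distance I would mirror the proof of Lemma~\ref{lm:fixsetsignature}: introduce $\ol X_b$ with distribution $\ol\pi_b$ so that $\Delta(\wt X(M_i),\ol X(M_i))=O(\e^{18})$ by coupling; apply Lemma~\ref{thm:poisson} to sandwich each $\ol X(M_i)$ within TV distance $2\sum_b\Pr[\ol X_b\ne 0]^2$ of the compound Poisson distribution determined by the $\ol\pi$-coordinates of $\sig(M_i)$; and observe that the two limiting compound Poissons coincide since the signatures agree. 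For multi-item blocks, B3 yields $\sum_b\Pr[\ol X_b\ne 0]\le\e^{13}/\e^4=\e^9$ and $\max_b\Pr[\ol X_b\ne 0]\le\e^9$, so Le Cam's bound is $O(\e^{18})$; for a singleton block the coupling estimate alone already gives TV distance $O(\e^{18})$. In either case $\Delta(\wt X(M_1),\wt X(M_2))=O(\e^{18})$, which is in fact stronger than the claimed $O(\e^9)$ factor. The main obstacle I anticipate is pinning down the precise semantics of $R_i(\C')$ when the block itself may overflow the remaining capacity, but this is exactly what Property B3 is designed to handle (the multi-item case), while the singleton case is trivial.
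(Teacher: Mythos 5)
Your decomposition is the same as the paper's, and your treatment of the reweighting term is essentially the paper's argument (coupling to $\ol X$ plus Le Cam gives $\Delta\bigl(\wt X(M_1),\wt X(M_2)\bigr)=O(\e^{18})$). One caveat there: your appeal to P1 for $\P(\tau_\size)\le\P(M_1)$ is not valid --- P1 compares profits of nested subtrees, so it bounds $\P(\tau_\size)$ by the profit of the whole policy (hence by $\opt$), not by the root block's own profit $\P(M_1)$, which can be far smaller. Using $\P(\tau_\size)\le\opt$ instead is what the paper does and still gives an $O(\e^{18})\opt$ reweighting loss, so this slip is harmless to the final bound.

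The genuine gap is in your bound on $\bigl|R_1(\C')-R_2(\C')\bigr|$. First, the exponent is off: B3 gives $\E\bigl[\wt X(M_i)\bigr]=O(\e^{13})$, and since every nonzero discretized size is at least $\e^4$, Markov yields $\Pr\bigl[\wt X(M_i)\neq 0\bigr]=O(\e^{9})$, not $O(\e^{13})$; moreover the relevant event is that \emph{some} item of the block realizes to a nonzero size (reducing the capacity seen by later items of the block), not that the block total exceeds $\C'$. Second, and more importantly, the accounting is wrong: bounding this correction by (probability)$\times\opt$ gives at best $O(\e^{9})\opt$, which is neither the $O(\e^{18})\opt$ you assert nor within the lemma's bound $O(\e^{18})\opt+O(\e^{9})\P(M_1)$, and it would be fatally lossy downstream, since in Lemma~\ref{lm:signature} the per-block losses are summed against $\sum_M\Phi(M)=O(\e^{-14})$, so an additive $O(\e^{9})\opt$ (let alone $O(\e^{13})\opt$) per block does not yield $O(\e)\opt$. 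The paper's proof avoids this by making the correction multiplicative in the block's own profit: with probability at least $\wt\pi^0_{M_i}\ge 1-O(\e^9)$ every item of the block sees the full capacity $\C'$, hence $\wt\pi^0_{M_i}\sum_{\size\le\C'}\sum_{b\in M_i}\wt p_b(\size)\;\le\;\P(M_i)\;\le\;\sum_{\size\le\C'}\sum_{b\in M_i}\wt p_b(\size)$, so $R_i(\C')$ agrees with the signature sum up to a factor $1-O(\e^9)$, i.e.\ up to an $O(\e^9)\P(M_i)$ error; combined with your (correct) $O(\e^{18})\opt$ comparison of the two signature sums this gives exactly $O(\e^{18})\opt+O(\e^9)\P(M_1)$. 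This multiplicative step is the missing idea --- it is precisely where the $O(\e^9)\P(M_1)$ term in the statement comes from, whereas your write-up attributes that term (incorrectly) to the total-variation part.
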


\begin{proof}
For ease of notation, we use $T_1$ to denote $T_{\sigma^b}$
and $T_2$ the tree obtained by replacing $M_1$ with $M_2$.
Let $\P(M_2)$ be the expected profit we can get from $M_2$ with a knapsack capacity $\C'$.
First, we show the following two useful results.
\begin{enumerate}
  \item[\quad(a)]
    \quad$\left|\P(M_1) - \P(M_2)\right| = O(\e^{18}) = O(\e^{18})\opt + O(\e^9)\P(M_1)$.
  \item[\quad(b)]
    \quad$\Delta\left(\wt X(M_1), \wt X(M_2)\right) = O(\e^{18}).$
\end{enumerate}
It is straightforward to verify the above results
for the case where both $M_1$ and $M_2$ have only one item,
from the definition of signatures.

Now, we focus on the case where $M_1$ has more than one items. The case where $M_2$ has more than one items
is the same.
By Lemma~\ref{lm:SKCRC_blockpolicy}~B3, $\E[\wt X(M_1)] = \sum_{b \in M_1} \E[\wt X_b] \leq 2\e^{13}$.
Then we have,
$$
\E\bigl[\wt X(M_2)\bigr] = \sum_{\size >0}\Bigl(\sum_{b \in M_2} \wt\pi_b(\size)\Bigr)\cdot \size
\leq \sum_{\size >0 }\Bigl(\sum_{b\in M_1}\wt\pi_b(\size) + \e^{23}\Bigr)\cdot \size
\leq \E\bigl[\wt X(M_1)\bigr] + O(\e^{18}) \leq 3\e^{13}.
$$
By Markov's inequality, $\wt\pi_{M_1}^0= 1 - \Pr\left[ \wt X(M_1) \geq \e^4 \right] \geq 1 - 2\e^9$, and $\wt\pi_{M_2}^0\geq 1 - 3\e^9$.

Suppose we insert the items in $M_1$ one by one.
For any item $b \in M_1$, with probability at least $\wt\pi_{M1}^0$,
the remaining capacity before inserting $b$ is $\C'$ (all previous items realized to zero size).
So the expected profit we can get from $b$ is at least $\wt\pi_{M_1}^0\cdot\wt p_b\bigl([0, \C']\bigr)$.
Thus
$$
\P(M_1) \geq \wt\pi_{M_1}^0\sum_{b\in M_1}\wt p_b\bigl([0, \C']\bigr)
= \wt\pi_{M_1}^0\sum_{0 \leq \size \leq \C'}\sum_{b \in M_1}\wt p_b(\size).
$$
We also have that
$$
\P(M_1) \leq \sum_{b\in M_1}\wt p_b\bigl([0, \C']\bigr) = \sum_{0 \leq \size \leq \C'}\sum_{b \in M_1}\wt p_b(\size).
$$
Similarly, we can show that
$$
\wt\pi_{M_2}^0\sum_{0 \leq \size \leq \C'}\sum_{b \in M_2}\wt p_b(\size) \leq \P(M_2) \leq \sum_{0 \leq \size \leq \C'}\sum_{b \in M_2}\wt p_b(\size).
$$
Since $\sig(M_1) = \sig(M_2) = \sig_M$, we have that
$$
\Bigl|\sum_{0 \leq \size \leq \C'}\sum_{b \in M_1}\wt p_b(\size) - \sum_{0 \leq \size \leq \C'}\sum_{b \in M_2}\wt p_b(\size)\Bigr| = O(\e^{18})\opt.
$$
Linking these inequalities together, we obtain (a).

On the other hand, since $\E[\wt X(M_1)] \leq 2\e^{13}$,
$\E[\wt X(M_2)] \leq 3\e^{13}$
and
$\ol\pi_{M1}(s) = \ol\pi_{M2}(s)$ for any $s$,
we can show (b) holds also by following
the same proof as that of Lemma ~\ref{lm:fixsetsignature}, which we do not repeat here.

Let $v_\size$ be the child of $M_1$ corresponding to size realization $\size$,
and $T_\size$ be the subtree rooted at $v_\size$.
Given (a) and (b), we have that
\begin{eqnarray*}
\P(T_2) &=& \P(M_2) + \sum_\size\Pr\left[\wt X(M_2) = \size\right]\P(T_\size)  \\
   &\geq& \P(M_1) - O(\e^{18})\opt - O(\e^9)\P(M_1) \\
   & &+ \sum_\size\Pr\left[\wt X(M_1) = \size\right]\P(T_\size)
    - \Delta\left(\wt X(M_1) , \wt X(M_2)\right)\cdot\max_\size\P(T_\size) \\
   &=& \P(T_1) - O(\e^{18})\opt - O(\e^9)\P(M_1).
\end{eqnarray*}
The last inequality holds since P1: $\max_\size\P(T_\size)\leq \opt$.
\end{proof}

\begin{proofoflm}{\ref{lm:signature}}
We replace all the blocks in $T_{\sigma_1^b}$ by the corresponding ones in $T_{\sigma_2^b}$. By Lemma~\ref{lm:blockreplacement}, the total profit loss is at most
\begin{eqnarray*}
    & & \sum_{M} \Phi(M)\left[O(\e^{18})\opt + O(\e^9)\P_M\right] \\
    &=& O(\e^{18})\sum_{M} \Phi(M)\opt
    + O(\e^9)\sum_{M}\Phi(M)\P_M \\
    &=& O(\e^{18})\sum_{M} \Phi(M)\opt
    + O(\e^9)\P(\sigma_1^b) \\
    &\leq& O(\e)\opt
\end{eqnarray*}
The last inequality holds because $\P(\sigma_1^b)\leq\opt$ and the depth of $T_{\sigma_1^b}$ is $O(\e^{-14})$,
thus $\sum_{M} \Phi(M) = O(\e^{-14})$.
\qed
\end{proofoflm}

The number of possible signatures for a block is $\bigl(O(n/\e^{23})\bigr)^{O(\e^{-5})}=n^{\poly(1/\e)}$, which is a polynomial of $n$.
For any block adaptive policy $\sigma^b$, there are at most $f(\e) = \bigl(O(\e^{-5})\bigr)^{O(\e^{-14})}=2^{\poly(1/\e)}$ blocks in its decision tree,
since the height of the tree is $O(\e^{-14})$ and the branching factor is at most $|\dsize| = O(\e^{-5})$.
Therefore, the number of all topologies of the decision tree is a constant.



\subsection{Finding a Nearly Optimal Block-Adaptive Policy}

We have shown it suffices to
enumerate over all topologies of the decision trees
along with all possible signatures for each block
(the number of all possibilities is $n^{O(f(\e))}=n^{2^{\poly(1/\e)}}$)
in order to find a nearly optimal block-adaptive policy.
Now, we show how to find a nearly optimal block-adaptive policy with a given tree topology along with the signatures for all blocks,
using dynamic programming.

The dynamic program is fairly standard and
we present a sketch here.
Assume the tree topology has been fixed.
A configuration $\dpc$ in the dynamic program is a set of signatures,
each corresponding to a block in the tree.
As we have shown, the number of configurations is $\poly(n)$.
We use $\DP(i, \dpc)=1$ to denote the fact that
we can reach configuration $\dpc$ using a subset of $\{b_1,\ldots, b_i\}$.
Otherwise, $\DP(i, \dpc)=0$.
Initially, $\DP(0, \mathbf{0})=1$.
We compute all $\DP(i, \dpc)$ values in an lexicographically increasing order of $(i,\dpc)$.
The value of $\DP(i,\dpc)$ can be computed from the values of $\DP(i-1, \dpc')$
for all $\dpc'\leq \dpc$ (coordinatewise).
In fact, this step can be done as follows.
Suppose we want to compute $\DP(i,\dpc)$.
We can decide to place item $b_i$ in a few blocks in the decision tree.
The constraint here is no two blocks where we place $b_i$ have an ancestor-descendant relationship.
Since the size of tree is $f(\e)=(1/\e^5)^{O(1/\e^{14})}$, so the number of possible ways of
adding item $b_i$ is $2^{f(\e)}$ which is (still) a constant.
For a particular placement of $b_i$, we subtract the contribution of $b_i$
from configuration $\dpc$ (i.e., subtract $\sig(b_i)$ from the vectors in $\dpc$ corresponding to
the blocks where we place $b_i$), resulting another configuration $\dpc'$.
We let $\DP(i, \dpc)\leftarrow \max(\DP(i, \dpc), \DP(i-1, \dpc'))$.
Since the number of tree topologies is a constant, 
the number of configurations is $n^{f(\e)}$ and computing each $\DP(i,\dpc)$ values
takes a constant time, the overall running of our algorithm is $O(n^{f(\e)})$, which improves
upon the $n^{O(f(\e))^{O(f(\e))}}$ running time in \cite{bhalgat10}.

\vspace{0.2cm}

Now we have all necessary components to show the main theorem of this section.
\vspace{0.2cm}

\begin{proofofthm}{~\ref{thm:sk}}
Suppose $\sigma^*$ is the optimal policy. The optimal expected profit is denoted as
$\P(\sigma^*, \pi, \C) = \opt$.
Given an instance $(\pi, \C)$, the first step is to compute
the discretized distribution $\wt\pi$.
Then we use the dynamic program to find a nearly optimal block adaptive policy $\sigma$ for $(\wt\pi, (1+5\e)\C)$.
By result 1 of Lemma~\ref{lm:policytransform}, there exists a canonical policy $\wt\sigma$ such that
$$
\P(\wt\sigma, \wt\pi, (1+4\e)\C) \geq (1-O(\e))\P(\sigma^*, \pi, \C) = (1-O(\e))\opt.
$$
By Lemma~\ref{lm:SKCRC_blockpolicy}, there exists a block adaptive policy $\sigma^b$ such that
$
\P(\sigma^b, \wt\pi, (1+5\e)\C) \geq (1-O(\e))\opt.
$
Since the configuration of $\sigma^b$ is enumerated at some step of the algorithm,
our dynamic program is able to find a block adaptive policy $\sigma$ with the same configuration
(the same tree topology and the same signatures for corresponding nodes).
By Lemma~\ref{lm:signature}, we can see that
$$
\P(\sigma, \wt\pi, (1+5\e)\C) \geq (1-O(\e))\P(\sigma^b, \wt\pi, (1+5\e)\C) \geq (1-O(\e))\opt.
$$
By result 2 of Lemma~\ref{lm:policytransform},
$$
\P(\sigma, \pi, (1+4\e)(1+5\e)\C)
\geq (1-O(\e))\P(\sigma, \wt\pi, (1+5\e)\C) \geq (1-O(\e))\opt.
$$
Hence, the proof of the theorem is completed.
\qed
\end{proofofthm}


\section{Stochastic Knapsack with Correlations and Cancelations}
\label{sec:SKCRC}

Recall in the stochastic knapsack problem with correlations and cancelations (\SKCRC),
we can cancel a job in the middle and we gain zero profit from a canceled job.
If we decide to cancel job $b$ after running for $t$ time units,
job $b$ can be thought as a job $b^{t}$ with running time $X_{b^{t}}=\min\{X_b, t\}$, where $X_b$ is the processing time of $b$.
The effective profit of the new job $p_{b^{t}}(x)$ equals $p_b(x)$ if $x < t$ and $0$ if $x \geq t$.
Since we consider discrete time distributions, it only makes sense to cancel a job
after a discrete point with nonzero probability mass.
Suppose the size of the support of each time distribution is bounded by $m$.
Therefore,
for each job $b$, we can use a set of $m$ jobs to represent all possible cancelations of $b$,
and in each realization path, we are allowed to choose at most one job from the set.
In fact, we solve the following more general problem.
We have $n$ sets of items, $\B_i, i=1,2,...,n$. Each set $\B_i$ consists of several items $b_{i1}, b_{i2}, ...$.
Our goal is to find a policy that packs at most one item from each item set $\B_i$ to the knapsack with capacity $\C$
such that the expected profit is maximized.  We call this problem the {\em generalized stochastic knapsack} problem,
denoted as \GSK.

It is not clear how to use the technique in Bhalgat et al.~\cite{bhalgat10} to handle the above problem.
In fact, they discretize all (size) probability distributions into $q=O(\log n)$ equivalent classes
and enumerate all combinations of the profit contributions from different classes in each block.
For this purpose, they adopt the technique developed in \cite{chekuri2000ptas} to reduce the number of combinations to a polynomial.
Once the profit contribution from a class to a block is fixed, the actual set of items assigned into that block
can be easily determined in a greedy manner (items with larger profit should be packed into the blocks that are closer to the root).
However, in \SKCRC, $\B_i$ may contain items from several classes.
Assigning an item in $\B_i$ to a block would prevent us from assigning any other items in $\B_i$ to the same block and its descendants.
Such dependency makes the item assignment very complicated (for a fixed combination)
and it is not clear to us how to do this in polynomial time.
However, our technique can be easily extended to \SKCRC.

\subsection{Block-Adaptive Policies}
\label{sec:SKCRC_blockadaptive}
In this section, we show that Lemma~\ref{lm:SKCRC_blockpolicy} also holds for \GSK\ (thus also for \SKCRC).
We note that the proof in \cite{bhalgat10} does not generalize to \GSK\
and we need to modify the proof in some essential way.
We note that our proof works even when there are arbitrary precedence or cardinality constraints imposed on
the items. In fact, any realization path of the constructed block-adaptive policy corresponds to some realization path
of the original policy $\wt\sigma$. The idea of the proof may be useful in showing the existence of nearly optimal
block-adaptive policies for other problems, thus may be of independent interest.

\begin{proof}
For any node $v$ in the decision tree $T_{\wt\sigma}$, we define the
{\em leftmost path of $v$} to be the realization path which starts at $v$, ends
at a leaf, and consists of only edges corresponding to size zero.
For any node $v$ and size $\size\in \dsize$,
we use $v_{\size}$ to denote the {\em $\size$-child} of $v$, that is
the child of $v$ corresponding to the size realization $\size$.
We define the {\em segment} starting with node $v$ (denoted as $\segment(v)$) in $T_{\wt\sigma}$
as the maximal prefix of the leftmost path of $v$ such that:
\begin{enumerate}
  \item If $\E[\wt X(v)] > \e^{13}$, $\segment(v)$ is the singleton node $\{v\}$. Otherwise, $\E[\wt X(\segment(v))] \leq \e^{13}$;
  \item For any two nodes $ u, w \in \segment(v)$, and any size $\size$,
  $\left|\P(T_{u_\size}) - \P(T_{w_\size})\right| \leq \e^5\opt$.
\end{enumerate}

We partition $T_{\wt\sigma}$ into segments as follows.
We say a node $v$ is a {\em starting node} if $v$ is a root or $v$ corresponds to a non-zero size realization of its parent.
For each starting node $v$, we greedily partition the leftmost path of $v$ into segments, i.e.,
delete $\segment(v)$ and recurse on the remaining part.
Fix a particular root-to-leaf path $R$.
Let us bound the number of segments on $R$.
Suppose we are at node $u$, which is a node in $\segment(v)$, and the next node we are about to visit is $w$
which is not in $\segment(v)$.
We know that one of the following events must happen:
\begin{enumerate}
\item[1.] $w$ corresponds to a non-zero size realization of $u$;
\item[2.] $\sum_{u'\in \segment(v)\cup\{w\}}\E\bigl[\wt X(u')\bigr] > \e^{13}$;
\item[3.] For some size $\size$, $|\P(T_{v_\size}) - \P(T_{w_\size})| > \e^5\opt$.
\end{enumerate}
The first event happens for at most $O(\C/\e^{4})$ times.
Since $\sum_{v\in R}\E[\wt X(v)] = O(\C/\e)$ by Lemma~\ref{lm:basicprop},
the second event happens at most $O(\e^{-14})$ times.
Suppose $R=\{v^1,v^2,\ldots, v^k\}$.
W.l.o.g., we can assume that for each size $s$, $\opt\geq \P(T_{v^1_\size}) \geq \P(T_{v^2_\size})\geq \ldots\geq \P(T_{v^k_\size})\geq 0$
by a simple substitution argument similar to P1.
For each particular size $\size$, the third event occurs for $O(\e^{-5})$ times.
Since there are at most $|\dsize|=O(\e^{-5})$ different sizes,
we need at most $|\dsize|\cdot\e^{-5} = O(\e^{-10})$ parts.
This gives a bound $O(\e^{-14}+\e^{-10}+\e^{-4})=O(\e^{-14})$ on the number of segments on each root-leaf path.

Now, we are ready to describe the algorithm,
which takes a canonical policy $\wt\sigma$ as input, and packs items into the knapsack in a block-adaptive way.
For any segment $\segment(v)$, we use $l(v)$ to denote the last node in $\segment(v)$.
Similar to the argument in \cite{bhalgat10},
we use two knapsacks, the main knapsack with capacity $\C$ and the auxiliary knapsack with capacity $\e\C$.

\IncMargin{1em}
\begin{algorithm}[H]

\DontPrintSemicolon

\nlset{1.} Initially, $S = \emptyset$. $S$ represents the auxiliary knapsack. \;
\nlset{2.} We start at the root of $T_{\wt\sigma}$. \;
\Repeat {
    \em A leaf in $T_{\wt\sigma}$ is reached.
}
{
    \nlset{3.1} Suppose we are at node $v$ in $T_{\wt\sigma}$. Add the items in $\segment(v)$ to the main knapsack one by one
    until some node $u$ realizes to a nonzero size, say $\size$. \;
    \nlset{3.2} Add all remaining items in $\segment(v)$ to the auxiliary knapsack $S$. \;
    \nlset{3.3} Visit node $l(v)_\size$, the $\size$-child of the last node of $\segment(v)$. \;
    \nlset{3.4} If all nodes in $\segment(v)$ realize to size $0$, visit $l(v)_0$. \;
}

\nlset{4.} If the auxiliary knapsack overflows, discard the entire profit. \;
\label{algo:SKCC_blockpolicy}

\PrintSemicolon

\end{algorithm}
\DecMargin{1em}

We can see that the set of items the algorithm attempts to insert always
corresponds to some realization path in the original policy $\wt\sigma$.
Hence, the algorithm packs at most  one item from each item set $\B_k$.
We have shown that B1 holds.
Properties B2 and B3 are straightforward from the definition of segments and the algorithm.

Now we show that the expected profit that new policy can obtain is at least $(1-O(\e))\opt$.
Let us focus on the profit we collect from the main knapsack and ignore those from the auxiliary knapsack.
We note that the main knapsack never overflows.
Our algorithm deviates the policy $\wt\sigma$ whenever some node $u$ in the middle of some segment $\segment(v)$
realizes to a nonzero size, say $\size$.
In this case, $\wt\sigma$ would visit $u_\size$, the $\size$-child of $u$
and follows $T_{u_\size}$ from then on,
but our algorithm visits $l(v)_\size$, the $\size$-child of the last node of that segment,
and follows $T_{l(v)_\size}$.
The expected profit loss in each such event can be bounded by $|\P(T_{u_\size})-\P(T_{l(v)_\size})|\leq \e^{5}\opt$.
Suppose $\wt\sigma$ pays such a profit loss, and switches to visit $l(v)_\size$.
Hence, $\wt\sigma$ and our algorithm always stay at the same node.
Note that the number of edges corresponding to nonzero size realizations is at most $\C/\e^4$ in any root-to-leaf path.
So $\wt\sigma$ pays at most $O(\e^{-4})$ times in any realization.
Therefore, the total profit loss is at most $O(\e \opt)$.

\eat{
Let $\sigma'$ be the policy that packs only those items packed into the main knapsack by $\sigma^b$.
We first prove that $\P(\sigma', \wt\psi, \C) = (1-O(\e)) \P(\wt\sigma, \wt\psi, \C)$.

Define the {\em large size depth} of a node $v \in T_{\wt\sigma}$ to be
$
d_L(v) = \left|\left\{e\in R(v) \mid \wt w_e \neq 0 \right\}\right|.
$
Then we have that
$$
d_L(v) \leq \C/\e^4
\quad\text{ and }\quad
\sum_{v \mid d_L(v) = d} \sum_{s = s_1, s_2, \ldots, s_{z-1}} \Phi(v_s) \leq 1 \text{ for } d = 0, 1, \ldots, \C/\e^4 - 1,
$$
where $v_s$ is the children of $v$ corresponding to size realization $s$.
For each node $v \in T_{\wt\sigma}$, we let $v^+$ be the lowest node in the segment containing $v$.
Then the decision tree $T_{\sigma'}$ can be transformed from $T_{\wt\sigma}$
by recursively replacing subtrees $T_{v_s}$ by $T_{v^+_s}$,
where the recursion is regrading to $d_L(v)$ from $0$ to $\C/\e^4 - 1$.

By property 3 of the segments, $\P(T_{v^+_s}) \geq \P(T_{v_s}) - \e^5\opt$.
Therefore, the loss of expected profit in recursion $d_L(v) = d$ is bounded by
$$
\sum_{v \mid d_L(v) = d} \sum_{s = s_1, s_2, \ldots, s_{z-1}}
\Phi(v_s)\left[\P(T_{v_s}) - \P(T_{v^+_s})\right]
\leq \e^5\opt.
$$
And $\P(\sigma', \wt\psi, \C) \geq \P(\wt\sigma, \wt\psi, \C) - \e\C\opt$ after all $\C/\e^4$ recursions.
Since we assumed that $\P(\wt\sigma, \wt\psi, \C) = (1-O(\e))\opt$, we have that
$\P(\sigma', \wt\psi, \C) = (1-O(\e))\P(\wt\sigma, \wt\psi, \C)$.
}

Finally, we note that we may not be able to collect the profit from the main knapsack in every realization since
the auxiliary knapsack may overflow, in which case we lose all the profit.
An easy (but important) observation is that the decision of the policy
is independent of the size realizations of the items in the auxiliary knapsack $S$.
Therefore, we can think that the sizes of the items in $S$ are realized after the execution of the algorithm.
If $S$ overflows, we lose the entire profit.
Since there are at most $\C/\e^4$ items realizes to a nonzero size,
$S$ is packed with items from at most $\C/\e^4$ segments.
By the first property of a segment, $\E[\wt X(S)] \leq \e^9\C$.
By Markov's inequality, $\Pr[\wt X(S) > \e\C ] \leq \e^8$.
Hence, the expected profit we gain is at least $(1-\e^8)(1-O(\e))\opt$.
\eat{
Note that the size and profit of an item only realizes after it is packed into a knapsack,
we have that
\begin{eqnarray*}
  \P(\sigma^b, \wt\psi, (1+\e)\C) &=& \sum_S \Pr[S_a = S] \cdot \Pr\left[\wt X(S) \leq \e\C\right]
   \cdot \left[\P(\sigma', \wt\psi, \C) \mid S_a = S \right] \\
   &\geq& \left(1-2\e^8\right) \sum_S \Pr[S_a = S]
   \cdot \left[\P(\sigma', \wt\psi, \C) \mid S_a = S \right] \\
   &=& (1-2\e^8)\P(\sigma', \wt\psi, \C),
\end{eqnarray*}
where $\P(\sigma', \wt\psi, \C) \mid S_a = S$ is the expected profit that $\sigma'$ will get condition on $S_a = S$.

Therefore, $\P(\sigma^b, \wt\psi, (1+\e)\C) = (1-O(\e))\P(\wt\sigma, \wt\psi, \C)$.
}
\end{proof}

\subsection{Finding a Nearly Optimal Block-Adaptive Policy}
\label{sec:SKCRC_DP}
We need to modify the dynamic program to incorporate the constraint that at most one item from each $\B_i$ can be packed.
Now, we use $\DP(i, \dpc)=1$ to denote the fact that
we can reach configuration $\dpc$ using items from $\{\B_1,\ldots, \B_i\}$, such that
on each realization path, at most one item in $\B_i$ can appear at most once.
Suppose we want to compute the value of $\DP(i, \dpc)$ for some $i$ and configuration $\dpc$.
Suppose we have computed the values of $\DP(i-1,\dpc')$ for all $\dpc'\leq \dpc$.
We can decide to place items from $\B_i$ in a few blocks in the decision tree.
The only constraint here is that
we can place at most one item from $\B_i$ in each realization path.
So the number of ways to do so is bounded by
$m^{f(\e)}$ for $f(\e)=2^{(1/\e^5)^{O(1/\e^{14})}}$,
which is a polynomial of the input size.
So the overall running time of the dynamic program is still a polynomial.
The rest of the analysis is the same as before and we do not repeat it here.

In summary, we have the following theorem, from which Theorem~\ref{thm:skcc}
follows as a direct corollary.

\begin{thm}
\label{thm:gsk}
For any $\e>0$, there is a polynomial time algorithm that finds a $(1+\e)$-approximate
adaptive policy for \GSK\ when the capacity is relaxed to $1+\e$.
\end{thm}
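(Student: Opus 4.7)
The plan is to assemble Theorem~\ref{thm:gsk} by chaining exactly the same four reductions used in the proof of Theorem~\ref{thm:sk}, but replacing each ingredient by its \GSK-version. Specifically: (i) discretize the size/profit distributions, (ii) pass to a canonical policy via Lemma~\ref{lm:policytransform}, (iii) pass to a block-adaptive policy via the \GSK-extension of Lemma~\ref{lm:SKCRC_blockpolicy} proved in Section~\ref{sec:SKCRC_blockadaptive}, (iv) use signature-based enumeration together with the modified dynamic program of Section~\ref{sec:SKCRC_DP} to recover a near-optimal block-adaptive policy.

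Concretely, starting from an optimal policy $\sigma^*$ with $\P(\sigma^*,\pi,\C)=\opt$, discretize to $\wt\pi$ using Section~\ref{sec:discretization}. Lemma~\ref{lm:policytransform}(1) is purely about tree transformations and does not interact with the one-item-per-set constraint, so it yields a canonical $\wt\sigma$ with $\P(\wt\sigma,\wt\pi,(1+4\e)\C)\geq (1-O(\e))\opt$. The extension of Lemma~\ref{lm:SKCRC_blockpolicy} then produces a block-adaptive policy $\sigma^b$ with $\P(\sigma^b,\wt\pi,(1+5\e)\C)\geq (1-O(\e))\opt$ satisfying B1--B3; crucially, as noted in Section~\ref{sec:SKCRC_blockadaptive}, every realization path of $\sigma^b$ arises from a realization path of $\wt\sigma$, so the at-most-one-per-$\B_i$ constraint is preserved. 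Since the number of distinct tree topologies of such block-adaptive policies is a constant and the number of possible signatures per block is $n^{\poly(1/\e)}$, we can enumerate all $(\text{topology},\text{signatures})$ pairs in polynomial time, and by Lemma~\ref{lm:signature} the best one that is realizable approximates $\sigma^b$ up to $O(\e)\opt$. Finally invoking Lemma~\ref{lm:policytransform}(2) gives $\P(\sigma,\pi,(1+O(\e))\C)\geq (1-O(\e))\opt$, which after rescaling $\e$ yields the claimed bound.

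The main obstacle is step (iv), the realizability check. In the \SK\ case the dynamic program scans items $b_1,\dots,b_n$ one at a time and, for each already-computed configuration $\dpc'\leq\dpc$, enumerates which subset of blocks $b_i$ is placed into; the constant $2^{f(\e)}$ with $f(\e)=2^{\poly(1/\e)}$ bounds the number of such placements. For \GSK\ the items within a single set $\B_i$ are mutually exclusive along each root-leaf realization path, so one cannot simply decide each item's placement independently. The fix, described in Section~\ref{sec:SKCRC_DP}, is to process $\B_i$ as a whole: at step $i$, enumerate all maps from $\B_i$ to blocks such that no root-leaf path is assigned more than one item from $\B_i$. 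Because the block tree has at most $f(\e)$ nodes and $|\B_i|\leq m$, the number of such assignments is at most $m^{f(\e)}=\poly(n)$ for fixed $\e$, so the DP still runs in polynomial time.

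Once the DP is in place, the accounting is identical to Theorem~\ref{thm:sk}: the guessed configuration of $\sigma^b$ will be enumerated at some step, the DP returns a policy $\sigma$ with the same topology and signatures, and Lemma~\ref{lm:signature} together with Lemma~\ref{lm:policytransform}(2) controls the total loss by $O(\e)\opt$. I expect no new technical difficulty beyond ensuring that the feasibility constraint ``at most one item per $\B_i$ per path'' is maintained by the signature-preserving substitution in Lemma~\ref{lm:signature}, which follows automatically because substituting a block does not change the set of realization paths of the tree, only the items assigned to each block.
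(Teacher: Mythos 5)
Your proposal is correct and follows essentially the same route as the paper: the paper proves Theorem~\ref{thm:gsk} by exactly this chain (discretization, Lemma~\ref{lm:policytransform}, the \GSK-extension of Lemma~\ref{lm:SKCRC_blockpolicy} from Section~\ref{sec:SKCRC_blockadaptive}, signature enumeration with the modified dynamic program of Section~\ref{sec:SKCRC_DP} whose $m^{f(\e)}$ placement count you reproduce, and Lemma~\ref{lm:signature} for the accounting). No gaps.
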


\vspace{0.2cm}
\topic{\SKCRC\ with Exponential Number of Realizations}
We have assumed that the size distribution of each item has a polynomial size support.
Now, we consider the case where the supports may be of exponential sizes (must be represented implicitly).
The only assumption we make here is that for any item $b_i$ and time $t$, we can compute 
the signature of $b_i^t$ in polynomial time. 
The catch is that even though there are exponential even infinite realizations, 
the number of possible signatures is bounded by a polynomial.
Moreover, as we increase $t$, each coordinate of the signature of $b_i^t$ changes monotonically,
thus the same signature does not appear again.
Hence, starting from $t=0$, we can use binary search to identify the first point
where the signature changes. Repeating this, we can find all different signatures for $b$, 
each corresponding to $b^t$ for some $t$,
in polynomial time.
%
In the dynamic program, we only care the signature of an item, instead of an item per se.
So we can let $\B_i$ contain only those $b_i^t$s that correspond to distinct signatures.
The size of $\B_i$ is therefore bounded by a polynomial.
                
\vspace{0.2cm}
\topic{\SKCRC\ without Relaxing the Capacity}
Combining this result and the algorithm developed in~\cite{bhalgat20112}
we can give a $(2+\e)$-approximation for \SKC\ (Theorem~\ref{thm:norelax1}).
This improves the factor 8 approximation algorithm developed in~\cite{gupta2011approximation}.
We note the algorithm in~\cite{bhalgat20112} does not work for correlated sizes and profits.
So whether there is $(2+\e)$-approximation for \SKCRC\ is still open.
However, with mild assumptions on the size distributions,
we can achieve an approximation factor of $2$ for \SKCRC\ (Theorem~\ref{thm:skccnorelax}).
The details can be found in Appendix~\ref{sec:norelax}.

\section{Bayesian Online Selection}
\label{sec:ssp}
In this section, we consider the {\em Bayesian online selection problem} subject to a knapsack constraint
(denoted as \SSP). Our problem falls into the framework formulated in \cite{kleinberg2012matroid}.
In \SSP, we are given a set of items $\B = \{b_1, b_2, \ldots, b_n\}$
and a knapsack capacity $\C$. Each item $b_i$ has a random size $X_i$ and a random profit $P_i$.
$X_i$ and $P_i$ can be correlated but
different items are independent of each other.
The (discrete) joint distribution $\pi$ of $X_i$ and $P_i$ is the input to the problem,
in the form of $\pi_i(x,p) = \Pr(X_i=x, P_i=p)$.
Let $\calD_i$ be the support of $\pi_i$, i.e., $\calD_i=\{(x,p) \mid \pi_i(x,p)\ne 0\}$.
An adaptive policy $\sigma$ can choose an item $b_i \in \B$ each time, view the size realization of $X_i$,
and then make a irrevocable decision whether to pack $b_i$ or not.
If $\sigma$ decides to pack $b_i$ into the knapsack (given the remaining capacity is sufficient),
the profit $P_i$ is collected.
Otherwise, no profit is collected, the remaining capacity does not change and we can not
recall $b_i$ later.
If the items arrive in a predetermined order, we call the problem the {\em
fixed order \SSP} problem. 

Our problem is closely connected to the {\em knapsack secretary problem}~\cite{babaioff2007knapsack}
in the following sense.
In the knapsack secretary problem,
the size and the profit of each item are unknown in advance and the items arrive in a random order.
When an item arrives, its size and profit become known to the decision maker and an irrevocable decision has to be made.
Therefore, if we intentionally forget about the stochastic information and process the items in
a random order, \SSP\ becomes exactly the knapsack secretary problem, for which a constant factor
competitive algorithm is known~\cite{babaioff2007knapsack}
\footnote{
We note that most work on
secretary problems and prophet inequalities
measures the performance of the algorithm by comparing the solution found by the online algorithm
against the offline optimum, while the approximation ratios in this paper
are computed by comparing against the best adaptive policy.
}.
The ability that we can adaptively choose the order of the items in \SSP\
is very similar to the {\em free order models} of 
matroid prophet inequality~\cite{chawla2010multi} and
matroid secretary~\cite{jaillet2012advances}.

In the remainder of this section, we focus on proving Theorem~\ref{thm:ssp}.
First, we consider as a warmup an interesting special case where the profit of each item is a fixed value.
In this case, we can show the following intuitive lemma. The proof can be found in the appendix.
\begin{lem}
\label{lm:ssp}
Let $\sigma$ be the optimal policy.
Suppose $\sigma$ chooses $b$ as the next item to consider.
and decides to discard $b$ if $X_b$ realizes to $t$.
Then $\sigma$ should discard $b$ if $X_b$ realizes to a larger size $s \geq t$.
\end{lem}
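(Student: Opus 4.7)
The plan is to use an exchange argument grounded in the monotonicity of the optimal continuation value. Let $W$ denote the remaining capacity at the moment $\sigma$ is about to consider $b$, and let $V(\cdot)$ denote the expected profit of the optimal sub-policy operating on the remaining items $\B \setminus \{b\}$, viewed as a function of the capacity available to it. The key structural property is that $V$ is non-decreasing: any sub-policy feasible at capacity $W$ remains feasible at any $W' \geq W$, and a coupling over the (independent) realizations of the remaining items produces at least as much profit at $W'$ as at $W$. This monotonicity does not depend on the prior decisions made on $b$, since $b$ is excluded from the continuation in either case.

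Next I would write down the local optimality conditions for $\sigma$ at the node processing $b$. Since $\sigma$ discards $b$ when $X_b = t$, discarding must be weakly preferred to accepting, giving
\[
V(W) \;\geq\; p_b + V(W - t).
\]
If, toward a contradiction, $\sigma$ accepts $b$ when $X_b = s$ for some $s \geq t$, then accepting must be weakly preferred to discarding, i.e.,
\[
p_b + V(W - s) \;\geq\; V(W).
\]
Chaining these two inequalities yields $V(W - s) \geq V(W - t)$. But $W - s \leq W - t$ and $V$ is non-decreasing, so $V(W - s) \leq V(W - t)$. Hence equality holds throughout, and in particular $V(W) = p_b + V(W - s)$. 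Therefore discarding $b$ at realization $s$ attains the same expected profit as accepting, so we may modify $\sigma$ to discard at $s$ without any loss of optimality, which is exactly the claim.

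The main obstacle I expect is pinning down the monotonicity of $V$ precisely in the adaptive-policy model: one needs to argue that after the decision on $b$ is made (whether accept or discard), the probabilistic structure of the residual instance is identical, so a policy on the remainder can be treated as a function of the available capacity alone. Once this coupling argument is in place, the rest is just the two-line inequality manipulation above. A minor secondary point is that the statement is really a ``there exists an optimal $\sigma$ that discards at $s$'' claim rather than a strict impossibility of any optimal $\sigma$ accepting at $s$; the exchange argument naturally delivers the former, which is what subsequent sections use.
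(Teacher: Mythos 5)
Your proposal is correct and follows essentially the same route as the paper's proof: the local optimality condition at the discard branch ($V(W)\geq P_b+V(W-t)$) combined with monotonicity of the continuation value in the remaining capacity ($V(W-t)\geq V(W-s)$ for $s\geq t$) yields $V(W)\geq P_b+V(W-s)$, so discarding at $s$ is (weakly) optimal. Your added remarks about the coupling argument behind monotonicity and about the lemma really being a ``there exists an optimal policy'' statement are fine clarifications but do not change the argument.
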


\eat{
\begin{proof}
We use $\P(\sigma, \B, \C')$ to denote the expected profit the policy $\sigma$
can achieve with item set $\B$ and remaining capacity $\C'$.
Since $\sigma$ discards $b$ when $X_b$ realizes to $t$, we have that
$\P(\sigma, \B\backslash\{b\}, \C') \geq P_b + \P(\sigma, \B\backslash\{b\}, \C'-t).$
We also have $\P(\sigma, \B\backslash\{b\}, \C'-t) \geq \P(\sigma, \B\backslash\{b\}, \C'-s)$ since $\C'-t \geq \C'-s$.
Thus, $\P(\sigma, \B\backslash\{b\}, \C') \geq P_b + \P(\sigma, \B\backslash\{b\}, \C'-s)$.
Therefore, $\sigma$ should discard $b$ when $X_b = s \geq t$,
in order to maximize the expected profit.
\end{proof}
}

The above lemma suggests that at a particular stage, for an item $b_i$, there is a cutoff point $t$
such that  we accept $b$ if $X_i\leq t$, and reject $b_i$ otherwise.
In this case, item $b_i$ is equivalent to an item
with the size $X_i^t$ and profit $P_i^t$, jointly distributed as follows:
$$(X_i^t, P_i^t)=
\left\{
  \begin{array}{ll}
    (X_i, P_i), & \hbox{$X_i \geq t$;} \\
    (0,0), & \hbox{$X_i > t$.}
  \end{array}
\right.
$$
In fact, this viewpoint allows us to reduce \SSP\ to \GSK.
For each item $b_i$, we create a set of items $\B_i=\{b^t_i\}_t$
where $b^t_i$ represents the item $b_i$ with cutoff point $t$.
The only requirement is that at most one item from $\B_i$ can be packed in the knapsack.
Since we assume discrete distributions, there are at most a polynomial number cutoff points.
Hence, the size of \GSK\ instance we create is also bounded by a polynomial.
Theorem~\ref{thm:ssp} directly follows from Theorem~\ref{thm:gsk}.

Now, we consider the general case where $X_i$ and $P_i$ are correlated.
In this case, there is no single cutoff point as before.
However, we can still reduce the problem to \GSK.
Suppose we decide to consider $b_i$ at a particular stage,
and decide to accept $b_i$ if and only if the realization $(X_i, P_i) \in D$
for some $D\subseteq \calD_i$. We call $D$ the {\em acceptance set}.
Then, $b_i$ is equivalent to an item
with the size $X_i^D$ and profit $P_i^D$, jointly distributed as follows:
$$(X_i^D, P_i^D)=
\left\{
  \begin{array}{ll}
    (X_i, P_i), & \hbox{$(X_i, P_i) \in  D$;} \\
    (0,0), & \hbox{Otherwise.}
  \end{array}
\right.
$$
However, the \GSK\ instance created can be exponential in size
since each subset $D\subseteq \calD_i$ corresponds to a distinct item in $\B_i$.
We may reduce the size of $\B_i$ by exploiting the simple observation that in any optimal policy, if $(x, p)\in D$, any $(x', p')$
with $x'\leq x$ and $p'\geq p$ (a realization with a smaller size and a larger profit)
must be in $D$ (we can use the same proof as Lemma~\ref{lm:ssp} to show this).
But the resulting size is still exponential.

Now, we sketch an algorithm that reduces the size of $\B_i$ to a polynomial and incurs a profit loss of at most $O(\e \opt)$.
We modify the distribution $\pi_i$ as follows.
\begin{enumerate}
\item
For any $(x,p)\in \calD_i$ such that $x\leq \frac{\e\C}{n}$,
we move the probability mass at point $(x,p)$ to point $(0,p)$.
This step affects the total size of all packed items by at most $\e\C$ since there are at most $n$ items.
\item
For any $(x,p)\in \calD_i$ such that $p\leq \frac{\e\opt}{n}$,
we move the probability mass at point $(x,p)$ to point $(x,0)$.
This step affects the total profit by at most $\e\opt$.
\item
For other $(x,p)\in \calD_i$,
we move the probability mass at point $(x,p)$ to point $(x',p')$
where $x'$ is the largest value of the form $\frac{\e\C}{n} (1+\e)^k$ (for some $k\in \mathbb{Z}^+$) that is at most $x$,
and $p'$ is the largest value of the form $\frac{\e\opt}{n} (1+\e)^l$ (for some $l\in \mathbb{Z}^+$) that is at most $p$.
Obviously, this step affects the total size and total profit by at most a multiplicative factor $1+\e$.
\end{enumerate}

\begin{figure*}[t]
    \begin{center}
    \includegraphics[width=0.3\linewidth]{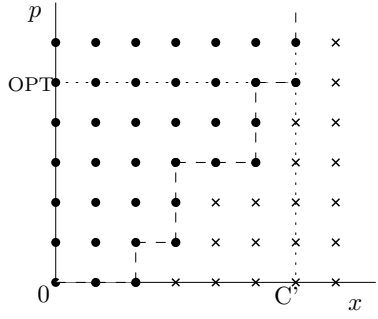}
    \caption{The staircase structure of an acceptance set $D$ in an optimal policy. The solid points are in $D$ and the crossed points are not.}
    \vspace{-0.7cm}
    \label{fig:staircase}
    \end{center}
\end{figure*}

Now, the number of different sizes is at most $\log_{1+\e}n$
and the number of different profits is at most $\log_{1+\e} \frac{np_{\max}}{\opt}$
where $p_{\max}$ is the maximum possible profit of any realization.
We first assume $p_{\max}$ is at most $\opt$.
In this case, the size of $\calD_i$ is at most $\log_{1+\e}^2n=O(\log^2 n)$
and the number of all subsets is $2^{O(\log^2n)}=n^{O(\log n)}$, which is slightly larger than a polynomial.
To reduce this number to a polynomial, the simple observation we made before comes in handy.
That is if an optimal policy accept a realization, it must accept any realization with a smaller size and a larger profit.
Think all points in $\B_i$ are arranged as $\log_{1+\e}n\times \log_{1+\e}n$ grid points, in the 2d plane.
If we include a point $(x,p)$ in $D$ (i.e., we accept the point), we need to include all points to the upper left of $(x,p)$.
Hence, the points in $D$ form a staircase structure.
The number of such structures are bounded by the number of monotone paths from the origin to the upper right corner
of the grids, which is
$$
{\log_{1+\e}n+\log_{1+\e}n \choose \log_{1+\e}n} \leq \frac{(2\log_{1+\e}n)^{\log_{1+\e}n}}{(\log_{1+\e}n/e)^{\log_{1+\e}n}}
\leq (2e)^{\log_{1+\e}n}= \poly(n).
$$
We have reduced the size of $\B_i$, thus the size of the \GSK\ instance, to a polynomial.

Now, we consider the case where $p_{\max}>\opt$,
Consider a particular item $b_i$. Suppose for some realization $(x,p)\in \calD_i$, $p> \opt$.
We call $(x,p)$ a {\em huge profit realization}.
A simple observation is that if a huge profit realization $(x,p)$ is realized and $x$ is no more than the remaining capacity $\C'$,
we should accept $(x,p)$.
Moreover, we can not accept any $(x,p)$ such that $x>\C'$.
Hence, our search range is restricted in $[0, \C']\times [0,\opt]$. See Figure~\ref{fig:staircase}.
For each $\C'$, this reduces to the previous case.
Since we only need to consider $\log_{1+\e}n$ different $\C'$ values,
the number of different acceptance sets is clearly bounded by a polynomial.
Therefore, by Theorem~\ref{thm:gsk}, we complete the proof of Theorem~\ref{thm:ssp}.

\eat{
Now, the number of different sizes is at most $\log_{1+\e}n$
and the number of different profits is at most $\log_{1+\e} \frac{np_{\max}}{\opt}$
where $p_{\max}$ is the maximum possible profit of any realization.
We first assume $p_{\max}$ is at most $\opt n \over \e$.
In this case, the size of $\calD_i$ is at most $\log_{1+\e}n\log_{1+\e}\frac{n^2}{\e}=O(\log^2 n)$
and the number of all subsets is $2^{O(\log^2n)}=n^{O(\log n)}$, which is slightly larger than a polynomial.
To reduce this number to a polynomial, the simple observation we made before comes in handy.
That is if an optimal policy accept a realization, it must accept any realization with a smaller size and a larger profit.
Think all points in $\B_i$ are arranged as $\log_{1+\e}n\times \log_{1+\e}\frac{n^2}{\e}$ grid points, in the 2d plane.
If we include a point $(x,p)$ in $D$ (i.e., we accept the point), we need to include all points to the upper left of $(x,p)$.
Hence, the points in $D$ form a staircase structure.
The number of such structures are bounded by the number of monotone paths from the origin to the upper right corner
of the grids, which is
$$
{\log_{1+\e}n+\log_{1+\e}\frac{n^2}{\e} \choose \log_{1+\e}n} \leq \frac{(\log_{1+\e}n+\log_{1+\e}\frac{n^2}{\e})^{\log_{1+\e}n}}{(\log_{1+\e}n/e)^{\log_{1+\e}n}}\leq (6e+o(1))^{\log_{1+\e}n}= \poly(n).
$$
We have reduced the size of $\B_i$, thus the size of the \GSK\ instance, to a polynomial.
Therefore, by Theorem~\ref{thm:gsk}, we complete the proof of Theorem~\ref{thm:ssp}.

If $p_{\max}$ is more than $\opt n\over  \e$, we can transform the instance $\calI$ to another instance $\calI'$ with $p_{\max}=\opt n\over \e$
and the profit loss is at most $\e\opt$.
This can be done using a similar trick to the one developed in Section 2 in \cite{bhalgat10}.
We briefly sketch the trick as follows.
Consider a particular item $b_i$. Suppose for some realization $(x,p)\in \calD_i$, $p\geq \frac{\opt n}{\e}$.
We call $(x,p)$ a {\em huge profit realization}.
We decrease $p$ to $\frac{\opt n}{\e}$ but increase the probability of this realization to $\pi_i(x,p)\cdot \frac{p\e}{\opt n}$
(its effective profit does not change). Since we increase the probabilities for the huge profit realizations,
we decrease the probabilities of other realizations by a uniform multiplicative factor $\lambda_i$
so that all probabilities still sum up to $1$.
We know the total probability of the huge size realizations is at most $\frac{\e}{n}$ (otherwise, we can get more profit than $\opt$ from $b_i$).
So, $\lambda_i\geq 1-\frac{\e}{n}$.
Consider decision tree $T$ corresponding the optimal policy for the original instance.
If we delete all subtrees under some huge profit realization, the profit loss is at most $\frac{\e}{n}\opt$
since the probability of reaching any such subtree is at most $\frac{\e}{n}$ and the profit of each subtree is at most $\opt$.
The resulting tree $T'$ corresponding to a policy that terminates immediately after a huge profit realization (if any).
Now, consider $T'$ for the transformed instance $\calI'$.
For any edge $e$ in $T'$ that does not corresponding to a huge profit realization,
the probability of reaching $e$ in $\calI'$ is at least
$(1-\frac{\e}{n})^n=1-O(\e))$
times the probability of reaching $e$ in $\calI$ since the probability of each edge changes by a factor of at most $(1-\e)$.
Therefore, in $\calI'$, the profit we can get from the non-huge profit realizations changes by a factor of at most $1-O(\e)$.
Since the effective profits of the huge profit realizations do not change,
the same holds true for them also.
}
\vspace{0.2cm}
\topic{Fixed order \SSP:}
For the fixed order model, the above reduction also works.
The only change is that in the \GSK\ instance we reduce to,
we are required to examine the items in a fixed order.
Lemma~\ref{lm:policytransform} and Lemma~\ref{lm:SKCRC_blockpolicy} also hold in this case.
So, it suffices to find a nearly optimal block-adaptive policy.
We can also modify the dynamic program in Section~\ref{sec:SKCRC_DP}
to find a block-adaptive policy subject to a particular order, as follows.
Let $\DP(i, \dpc)=1$ to denote whether
we can reach configuration $\dpc$ using items from $\{\B_1,\ldots, \B_i\}$, exact one item from each $\B_i$,
such that the block where the item from $\B_j$ is placed should be no lower (w.r.t. the decision tree)
than the block where the item from $\B_{k}$ is place for any $j<k$.
When we want to compute $\DP(i,\dpc)$,
we can only place items from $\B_i$ in the lowest blocks (w.r.t. the decision tree) with non-zero signatures
(these blocks can be directly determined from $\dpc$).
The number of such placements is clearly bounded by a polynomial since
the number of blocks is a constant.
The rest is the same as in Section~\ref{sec:SKCRC_DP}.

Finally, we note that we can also get a constant competitive algorithm when compared with the offline 
optimum, using simple LP techniques \cite{li2012knapsack}
\footnote{
We can call such result a {\em knapsack prophet inequality}.
}. The algorithm can provide the information of $\opt$, up to a constant factor,
that is needed in the discretization.

\section{Concluding Remarks}

We develop the Poisson approximation technique and successfully apply it in finding approximate
solutions for a variety of stochastic combinatorial optimization problems.
These problems range from fixed set optimization problems
to adaptive online optimization problems,
from problems with a single probabilistic objective function,
to problems with many probabilistic constraints.
Our technique is conceptually simple, easy to apply, and has led to simplifications, generalizations
and/or improvements of several previous results.

Our technique also seems quite flexible and could be potentially combined with other techniques to yield 
new results for other stochastic optimization problems.
For example, we could first apply the technique in \cite{bhalgat10} to discretize the distributions
into $O(\log n)$ equivalent classes. This could further reduce the number of possible signatures,
which might be essential for problems exhibiting more complex combinatorial structures.

In the realm of approximating the distributions of the sums of $n$ random variables,
the Poisson approximation theorem and its relatives \cite{bookbarbour} work most effectively in the regime 
where, roughly speaking,  the sum of the expected values of those random variables stays constant as $n$ increases.
This is quite different from the Gaussian approximations
(e.g., CLT, Chernoff bounds, Berry-Esseen type inequalities) which typically require 
the sum of the expected values increases as $n$.
In fact, the theory of Poisson approximation is an important area in probability theory~\cite{bookbarbour}, but
it has not been explored and utilized in algorithmic applications as extensively.
We believe our technique, and more generally the theory of Poisson approximation, can find wider applications
in stochastic combinatorial optimization and other domains.

\section*{Acknowledgements}
We would like to thank Yinyu Ye and Uri Zwick
for stimulating discussions.

\bibliographystyle{plain}
\bibliography{stochastic}

\appendix

\section{Discretizing small size region}
More formally, let $0\leq d\leq \e^4$ be the value such that
$
\Pr[X_b \leq d \mid X_b \leq \e^4] \cdot \e^4 \geq \E[X_b \mid X_b \leq \e^4]
$
and
$
\Pr[X_b \geq d \mid X_b \leq \e^4] \cdot \e^4 \geq \E[X_b \mid X_b \leq \e^4]
$
(note that such $d$ must exist).
For ease of presentation, we can imagine the size value $d$ being two distinguishable values,
$d_1=d-\frac{1}{\infty}$ and $d_2=d+\frac{1}{\infty}$,
and we let $d_1<d<d_2$.
The probability mass $\Pr[X_b=d]$ also consists of two parts
$\Pr[X_b=d_1]$ and $\Pr[X_b=d_2]$
and the value are so set that
$
\Pr[X_b \leq d_1 \mid X_b \leq \e^4] \cdot \e^4 = \E[X_b \mid X_b \leq \e^4]
$
and
$
\Pr[X_b \geq d_2 \mid X_b \leq \e^4] \cdot \e^4 = \E[X_b \mid X_b \leq \e^4]
$.
Therefore, in this new item, we can use value $d$ as the threshold value.
Note that
the effective profit $p_b(d)$ should also be divided accordingly:
$p_b(d_1)=p_b(d)\Pr[X_b=d_1]/ \Pr[X_b=d]$
and
$p_b(d_2)=p_b(d)\Pr[X_b=d_2]/ \Pr[X_b=d]$.

\section{Missing Proofs}

{\bf Lemma~\ref{lm:boundexp}}
Suppose each item $b \in \B$ has a non-negative random weight $X_b$
taking values from $[0,\alpha\C]$ for some constant $\alpha\geq 1$.
Then,
for any $S \subseteq \B$ and  any $\frac{1}{2} > \e > 0$,
if $\E[\mu(X(S))] \geq \e$, then
$\E\left[X(S)\right] \leq 3\alpha/\e$.

\begin{proof}
Since $X_b$s are independent, $\Var[X(S)] = \sum_{b\in S}\Var[X_b]$.
As $X_b \in [0,\alpha\C]$, we have
$$\Var[X_b]\leq\E[X_b^2]\leq \alpha\C\cdot\E[X_b].$$
So, $\Var[X_S] \leq \sum_{b\in S}\alpha\C\cdot\E[X_b] = \alpha\C\cdot\E[X(S)]$.
Suppose for contradiction that $\E[\mu(X(S))]  \geq \e$ and $\E[X(S)] > 3\alpha/\e$.
Then, we can see that
\begin{eqnarray*}
  \Var[X(S)] &>& \Pr[X(S) < \C] \cdot \left(\E[X(S)]-\C\right)^2 \\
  &\geq& \E[\mu(X(S))] \cdot \left(\E[X(S)]-\C\right)^2 \\
   &\geq& \e\cdot\left(\E[X(S)]-\C\right)^2 \\
   &=& \e\left(\E[X(S)] - 2\C + \C^2/\E[X(S)]\right)\E[X(S)] \\
   &>& \e(3\alpha/\e - 2\C + \e\C^2/3)\E[X(S)] \\
   &>& (3\alpha-2\e\C)\E[X(S)]
   > \alpha\C\cdot\E[X(S)],
\end{eqnarray*}
which contradicts the fact that $\Var[X(S)]\leq\alpha\C\E[X(S)]$.
\end{proof}

\noindent
{\bf Lemma~\ref{lm:fixdiscretize}}
Let $S$ be a set of items such that $\E\left[X(S)\right] \leq 3/\e$.
For any $0 \leq \beta \leq \C$, we have that
\begin{enumerate}
\item $\Pr[X(S) \leq \beta] \leq \Pr[\wt X(S) \leq \beta + \e] + O(\e)$;
\item $\Pr[\wt X(S) \leq \beta] \leq \Pr[X(S) \leq \beta + \e] + O(\e)$.
\end{enumerate}

\begin{proof}
We prove the lemma for each step of discretization.
To avoid using a lot of notations,
when the context is clear,
we always use $X_b$ to denote the size of $b$ before a particular discretization step
and $\wt X_b$ to denote the size after that step.

\vspace{0.3cm}
\noindent
{\bf Step 1:}
Let $\delta_b = \wt X_b - X_b$. By our discretization,
$\E[\delta_b] =\E[\wt X_b]-\E[X_b]= 0$. Moreover,
\begin{eqnarray*}
  \Var[\delta_b] &=& \E[\delta_b^2] - \E^2[\delta_b] = \E[\delta_b^2]\\
   &=& \Pr[X_b \leq \e^4] \cdot \E[(\wt X_b - X_b)^2 \mid X_b \leq \e^4] \\
   &\leq& \E[(\wt X_b)^2 \mid X_b \leq \e^4] + \E[(X_b)^2 \mid X_b \leq \e^4]\\
   &\leq& \e^4(\E[\wt X_b] + \E[X_b])
   \leq 2\e^4\E[X_b]
\end{eqnarray*}

Let $\delta(S) = \sum_{b \in S}\delta_b$. By linearity of expectation, $\E[\delta(S)] = 0$.
As $X_b$s are independent,
$
\Var[\delta(S)] = \sum_{b \in S} \Var[\delta_b] \leq 2\e^4\E[X(S)] \leq 6\e^3.
$
Therefore, the first inequality can be seen as follows:
\begin{eqnarray*}
  \Pr[X(S) \leq \beta] &=& \Pr[X(S) \leq \beta \wedge \delta(S) \leq \e ]
   + \Pr[X(S) \leq \beta \wedge \delta(S) > \e] \\
   &\leq& \Pr[\wt X(S) \leq \beta + \e] + \Pr[\delta(S) > \e] \\
   &\leq& \Pr[\wt X(S) \leq \beta + \e] + \Var[\delta(S)] / \e^2 \\
   &\leq& \Pr[\wt X(S) \leq \beta + \e] + 6\e
\end{eqnarray*}
The proof for the second inequality is essentially the same and omitted here.%

\vspace{0.3cm}
\noindent
{\bf Step 2:}
Noting that for step 2 we have $\wt X_b \leq X_b \leq (1+\e) \wt X_b$, the lemma is obviously true.

\eat{
\vspace{0.3cm}
\noindent
{\bf Step 3:}
The total variation distance between $X_b$ and $\wt X_b$ is
$\Delta(\wt X_b, X_b) < 4\e/n^2$.
As the number of items in $S$ is at most $n$,  we have that
$$
\Delta(\wt X(S), X(S))\leq \sum_{b\in S} \Delta(\wt X_b, X_b) < 4\e/n < \e.
$$
Therefore, we have
$\Pr[X(S) \leq \alpha] \leq \Pr[\wt X(S) \leq \alpha] + \e$ and
$\Pr[\wt X(S) \leq \alpha] \leq \Pr[X(S) \leq \alpha] + \e$
since $\Delta(X(S), \wt X(S))$ is an upper bound of
$|\Pr[X(S) \in S] -\Pr[\wt X(S) \in S]|$ for any $S\subseteq \mathbb{R}$.
}

This completes the proof of the lemma.
\end{proof}

\noindent
{\bf Lemma~\ref{lm:policytransform}} \,\,\,
Let $\pi$ be the joint distribution of size and profit for items in $\B$ and $\wt\pi$ be the discretized version of $\pi$.
Then, the following statements hold:
\begin{enumerate}
  \item For any policy $\sigma$, there exists a canonical policy $\wt\sigma$ such that
  $$
  \P(\wt\sigma, \wt\pi, (1+4\e)\C) = (1-O(\e))\P(\sigma, \pi, \C);
  $$
  \item For any canonical policy $\wt\sigma$,
  $$
  \P(\wt\sigma, \pi, (1+4\e)\C) = (1-O(\e))\P(\wt\sigma, \wt\pi, \C).
  $$
\end{enumerate}

\begin{proof}
For the first result, we first prove that there is a randomized canonical policy $\sigma_r$ such that
$
\P(\sigma_r, \wt\pi, (1+4\e)\C) = (1-O(\e))\P(\sigma, \pi, \C).
$
Thus such a deterministic policy $\wt\sigma$ exists.

In the decision tree $T(\sigma, \pi, \C)$,
each edge $e = (v, u)$ corresponds to an actual size realization of item $v$.
We use $\wt w_e$ to denote the discretized size of $w_e$, i.e., $\wt w_e = D_v(w_e)$.

The randomized policy $\sigma_r$ is derived from $\sigma$ as follows.
$T_{\sigma_r}$ has the same tree structure as $T_\sigma$.
If $\sigma_r$ inserts an item $b$ and observes a discretized size $\size\in \dsize$,
it chooses a random branch in $\sigma$ among those sizes that are mapped to $\size$, i.e., $\{w\mid D_b(w)=\size\}$
according to the probability distribution $\Pr[\text{branch } w \text{ is chosen}]=\pi_b(w)/\wt\pi_b(\size)$,
where $\wt\pi_b$ is the discretized version of $\pi_b$.
We can see that the probability of an edge in $T_{\sigma_r}$ is the same as
that of the corresponding edge in $T_{\sigma}$. The only difference is two edges are labeled with different
lengths ($w_e$ in $T_{\sigma}$ and $\wt w_e$ in $T_{\sigma_r}$).

%

Now, we bound the profit we can collect from $T_{\sigma_r}$ with a knapsack capacity of $(1+4\e)\C$.
Recall $R(v)$ is the realization path from the root to $v$ in $T_{\sigma}$.
Let $\wt W(v) = \sum_{e \in R(v)}\wt{w}_e$.
W.l.o.g., we assume that
$\forall v\in T(\sigma, \pi, \C)$, $W(v) \leq \C$.
By definition, we have that
$\wt p_v(\size) = \sum_{e=(v,u) \mid \wt w_e = \size}p_v(w_e)$.
By our construction, we have that
\begin{eqnarray*}
  \P(\sigma_r, \wt\pi, (1+4\e)\C) &=& \sum_{v, s \mid \wt W(v) + s \leq (1+4\e)\C} \Phi(v) \wt p_v(s) \\
   &=& \sum_{e = (v, u) \mid \wt W(v) + \wt w_e \leq (1+4\e)\C} \Phi(v) p_v(w_e) \\
   &=& \sum_{e = (v, u) \mid \wt W(u) \leq (1+4\e)\C} \Phi(v) p_v(w_e) \\
   &=& \P(\sigma, \pi, \C)
    - \sum_{e = (v, u) \mid \wt W(u) > (1+4\e)\C} \Phi(v) p_v(w_e) \\
   &\geq& \P(\sigma, \pi, \C) \biggl[1 - \sum_{v \in S_1}\Phi(v)\biggr]
\end{eqnarray*}
where
$S_1 = \{v\in T(\sigma, \pi, \C) \mid \wt W(v) \leq (1+4\e)\C \text{ and } \exists e = (v, u), \wt W(u) > (1+4\e)\C\}$.
The last inequality holds due to Property P1.
We upper bound
$\sum_{v \in S_1}\Phi(v)$ in the following lemma.

\vspace{0.2cm}
\noindent{\bf Lemma}
Let $\sigma$ be a policy and let $(\pi, \C)$ be an instance of the stochastic knapsack problem.
Let $S$ be a set of nodes in $T(\sigma, \pi, \C)$
that contains at most one node from each root-leaf path. Then we have that
$$
\sum_{v \in S \mid \left|W(v) - \wt W(v)\right| \geq \e(\C+1)}\Phi(v) = O(\e).
$$

\begin{proof}
We can assume w.l.o.g. that $S$ is the set of leaves in $T_\sigma$.
We prove this lemma for both steps of the discretization in Section~\ref{sec:discretization}.
First, we show after
{\bf Step 1:}
it holds that
\begin{description}
  \item[(a)]
    $\sum_{v \in S \mid \left|W(v) - \wt W(v)\right| \geq \e}\Phi(v) = O(\e).$
\end{description}
For $u\in T_\sigma$, let $b_u$ be the item corresponding to $u$ and define $\delta_u = X_{b_u} - \wt X_{b_u}$.
As we have shown in Lemma~\ref{lm:fixdiscretize},
$\E[\delta_u] = 0$ and $\Var[\delta_u] \leq 2\e^4\E[X_{b_u}]$.

Let $R_\sigma$ be the random path (consisting of both nodes and edges) $\sigma$ would choose.
For $u \in T_\sigma$,
let $R_{u+}$ be the random path $\sigma$ would choose after reaching $u$ (including $u$ and edges incident on $u$).
Define $\delta(R_{u+}) = \sum_{e \in R_{u+}} (w_e - \wt w_e)$.
Since $\sigma$ packs an item $b$ before it realizes to a particular size, we have that:
$
\E[\delta(R_{u+})] = \sum_{u' \in T_u}\Pr[\sigma \text{ chooses } u'\mid \sigma\text{ chooses} u]\E[\delta_{u'}] = 0.
$
Moreover,
\begin{eqnarray*}
  \Var\left[\delta(R_{u+})\right] &=& \E\left[(\delta(R_{u+}))^2\right] - (\E[\delta(R_{u+})])^2
    = \E\left[(\delta(R_{u+}))^2\right] \\
   &=& \sum_{e = (u, u')}\pi_e \cdot \E\left[(w_e-\wt{w}_e+\delta(R_{u'+}))^2\right] \\
   &=& \sum_{e = (u, u')}\pi_e\left((w_e-\wt{w}_e)^2 + 0 + \E\left[(\delta(R_{u'+}))^2\right]\right) \\
   &\leq& \Var[\delta_u] + \max_{e = (u, u')}\Var[\delta(R_{u'+})] \\
   &\leq& 2\e^4\E[X_{b_u}] + \max_{e = (u, u')}\Var[\delta(R_{u'+})]
\end{eqnarray*}

By Property P2, $\sum_{u \in R(v)} \E[X_{b_u}] = O(\C/\e)$ for any root-leaf path $R(v)$.
Then we have $\Var[\delta(R_\sigma)] = O(\e^3)$ by induction.
By Chebychev's inequality, we get that
$$
\sum_{v \in S \mid \left|W(v) - \wt W(v)\right| \geq \e}\Phi(v)
  = \Pr[\left|\delta(R_\sigma)\right| \geq \e]
  \leq \frac{\Var[\delta(R_\sigma)]}{\e^2}
  = O(\e).
$$

\vspace{0.3cm}
\noindent
{\bf Step 2:}
For step 2, $\left|w_e-\wt w_e\right| \leq \e w_e$ and $W(v) \leq \C$ for any $e, v$.
Thus we have that:
\begin{description}
  \item[(b)]
    $\forall v \in S, \left|W(v) - \wt W(v)\right| \leq \e\C.$
\end{description}
From (a) and (b), we can conclude the lemma.
\end{proof}

For any edge $e = (v,u)\in T(\sigma, \pi, \C)$, we have $W(u) \leq \C$ and $\wt w_e - w_e \leq \e^4$. Thus for any $v\in S_1$, we have
\begin{eqnarray*}
  \wt W(v) - W(v) \geq \max_u \left\{\wt W(u) - W(u) - \e^4 \right\}
   > (1+4\e)\C - \C - \e^4 \geq \e(\C+1).
\end{eqnarray*}
Therefore,
$
\sum_{v \in S_1}\Phi(v) = \sum_{v \in S_1 \mid \wt W(v) - W(v) > \e(\C+1)} \Phi(v).
$
Note that $S_1$ contains at most one node from each root-leaf path.
Applying the above lemma again, we have that $\sum_{v\in S_1}\Phi(v) = O(\e)$.
This completes the proof of the first part.

\vspace{0.3cm}

Now, we prove the second part. $\Phi(v)$s are defined with respect to  $T(\wt\sigma, \wt\pi, \C)$.
%
Since a canonical policy makes decisions based on the discretized sizes,
$T(\wt\sigma, \wt\pi, \C)$ has the same tree structure as $T(\wt\sigma, \pi, (1+4\e)\C)$,
except that we can not collect the profit from the later if the knapsack overflows at the end of the policy.
More precisely, we have that
$$
\P(\wt\sigma, \pi, (1+4\e)\C) = \sum_{e = (v, u) \mid W(u) \leq (1+4\e)\C} \Phi(v) p_v(w_e).
$$
where $e\in T(\wt\sigma, \wt\pi,\C)$.
W.l.o.g., we assume that $\wt W(v) \leq \C$ holds for all $v \in T(\wt\sigma, \wt\pi, \C)$.
Thus
$$
\P(\sigma, \pi, (1+4\e)\C) \geq \P(\wt\sigma, \wt\pi, \C)\biggl[1 - \sum_{v\in S_2}\Phi(v)\biggr]
$$
where
$S_2 = \{v\in T(\wt\sigma, \wt\pi, \C) \mid W(v) \leq (1+4\e)\C \text{ and } \exists e = (v, u), W(u) > (1+4\e)\C\}$.
By Lemma~\ref{lm:disctprop}, we have $\sum_{v\in S_2} \Phi(v) = O(\e)$. This completes the proof of the lemma.
\end{proof}

\noindent
{\bf Lemma~\ref{lm:ssp}}
Let $\sigma$ be the optimal policy.
Suppose $\sigma$ chooses $b$ as the next item to consider.
and decides to discard $b$ if $X_b$ realizes to $t$.
Then $\sigma$ should discard $b$ if $X_b$ realizes to a larger size $s \geq t$.

\begin{proof}
We use $\P(\sigma, \B, \C')$ to denote the expected profit the policy $\sigma$
can achieve with item set $\B$ and remaining capacity $\C'$.
Since $\sigma$ discards $b$ when $X_b$ realizes to $t$, we have that
$\P(\sigma, \B\backslash\{b\}, \C') \geq P_b + \P(\sigma, \B\backslash\{b\}, \C'-t).$
We also have $\P(\sigma, \B\backslash\{b\}, \C'-t) \geq \P(\sigma, \B\backslash\{b\}, \C'-s)$ since $\C'-t \geq \C'-s$.
Thus, $\P(\sigma, \B\backslash\{b\}, \C') \geq P_b + \P(\sigma, \B\backslash\{b\}, \C'-s)$.
Therefore, $\sigma$ should discard $b$ when $X_b = s \geq t$,
in order to maximize the expected profit.
\end{proof}


\section{\SKCRC\ Without Relaxing the Capacity}
\label{sec:norelax}
We can slightly modify the algorithm developed in~\cite{bhalgat20112} to 
give a $(2+\e)$-approximation for \SKC. 
We only state the modification.
For each item $b_i$, let $\B_i=\{b_i^t\}$ be the set of items
where $b_i^t$ corresponds to $b_i$ canceled at size $t$.
We say $b_i^t$ is a {\em small profit} item if $P_i\cdot\Pr[X_i < t]$ is less than $\e\opt$,
where $P_i$ is the (fixed) profit of $b_i$ and $X_i$ is the size of $b_i$.
Theorem 4 in~\cite{bhalgat20112} also holds for \SKC.
The only modification is how we bound the profit 
loss $\Delta_P$ in the two knapsack experiment.
Let $\mathcal{E}(b_i^t, C)$ denote the event that $b_i^t$ is placed into the first knapsack 
when the remaining capacity is $C$.
Since the items realizes to a particular size after it is placed, the loss of expected profit is
$$
\Delta_P = \sum_{b_i^t, C}\Pr\bigl[\mathcal{E}(b_i^t, C)\bigr]\cdot\Pr\bigl[C < X_i < t\bigr]\cdot P_i.
$$
Since $\Pr\bigl[C < X_i < t\bigr]= \Pr\bigl[X_i > C\bigr]\cdot\Pr\bigl[X_i < t \mid X_i > C\bigr]\leq \Pr\bigl[X_i > C\bigr]\cdot\Pr\bigl[X_i < t\bigr]$,
and $\Pr\bigl[X_i < t\bigr]\cdot P_i \leq \e\opt$ for small profit items,
we have that
$$
\Delta_P \leq \sum_{b_i^t, C}\Pr\bigl[\mathcal{E}(b_i^t, C)\bigr]\cdot\Pr\bigl[X_i > C\bigr]\cdot \e\opt.
$$
Since $\sum_{b_i^t, C}\Pr\bigl[\mathcal{E}(b_i^t, C)\bigr]\cdot\Pr\bigl[X_i > C\bigr]$ is the overflow probability, 
which is at most 1, we have $\Delta_P \leq \e\opt$.

Moreover, the $(1+\e)$-approximation algorithm (Lemma 8.1 in ~\cite{dean2008approximating})
for finding an optimal policy that packs a constant number of items can be easily extended to \SKC.
Therefore, Theorem~\ref{thm:norelax1} follows.

In \SKCRC, if we assume that, for each item, the difference between the maximum and the minimum possible sizes
is bounded by $\C-\delta$ for any constant $\delta>0$, we can still obtain a factor $2$ approximation algorithm
using the algorithm in~\cite{bhalgat20112}. 
A careful examination of the proof shows that there is no profit loss in the two knapsack experiment.
So all results in \cite{bhalgat20112} continue to hold.

\begin{thm}
\label{thm:skccnorelax}
There is a polynomial time algorithm that finds a 2-approximate
adaptive policy for \SKCRC\ 
if for each item, the difference between the maximum and the minimum possible sizes
is bounded by $\C-\delta$ for any constant $\delta>0$.
\end{thm}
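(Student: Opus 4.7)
The plan is to combine two ingredients already developed in this paper and in~\cite{bhalgat20112}. First, Theorem~\ref{thm:skcc} gives a polynomial-time algorithm that produces a block-adaptive policy $\sigma$ of expected profit $(1-O(\e))\opt$ when the capacity is relaxed to $(1+\e)\C$. Second, the two-knapsack experiment of~\cite{bhalgat20112} converts such a relaxed-capacity policy into a policy on a single knapsack of capacity $\C$, at the cost of a factor of $2$ in the approximation ratio plus a profit-loss term $\Delta_P$ attributable to items that are assigned to a knapsack and then overflow it. The crux of Theorem~\ref{thm:skccnorelax} is to show that, under the bounded-range assumption on the sizes, this loss term $\Delta_P$ vanishes identically, so that all the accounting in~\cite{bhalgat20112} goes through without the extra additive $\e\opt$ slack.

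First I would invoke Theorem~\ref{thm:skcc} to obtain the block-adaptive policy $\sigma$. Then I would feed $\sigma$ into the two-knapsack simulation of~\cite{bhalgat20112}: two knapsacks of capacity $\C$ are maintained, each item in $\sigma$'s decision tree is routed to one of the two knapsacks according to their prescribed rule, and the better of the two realized schedules is returned. A standard two-knapsack accounting (as in~\cite{bhalgat20112}, which the excerpt already reproduces for \SKC) shows that the expected profit of the returned schedule is at least $\tfrac{1}{2}\P(\sigma) - \Delta_P$, where
\[
\Delta_P \;=\; \sum_{b,\,t,\,C} \Pr[\mathcal{E}(b^t,C)]\cdot \Pr[C < X_b < t]\cdot p_b(X_b),
\]
and $\mathcal{E}(b^t,C)$ is the event that the variant $b^t$ of item $b$ (the policy's chosen cancellation time is $t$) is placed in the first knapsack when its remaining capacity is $C$. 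Combining this bound with the fact that $\P(\sigma)\geq(1-O(\e))\opt$ and driving $\e\to 0$ yields the claimed factor-$2$ guarantee, provided $\Delta_P = 0$.

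The main step, therefore, is to show $\Delta_P=0$ under the hypothesis that $\max(\mathrm{supp}\,X_b) - \min(\mathrm{supp}\,X_b)\leq \C-\delta$ for every item $b$. The intended argument is that the event $\{C < X_b < t\}$ is empty for every pair $(C,t)$ that can actually arise in the simulation. Indeed, an item is only placed in the first knapsack (event $\mathcal{E}(b^t,C)$) when the remaining capacity $C$ is at least $\min(\mathrm{supp}\,X_b)$; the cancellation threshold $t$ chosen by $\sigma$ must satisfy $t\leq \C$ (otherwise cancellation is moot inside a knapsack of capacity $\C$); and by the bounded-range assumption, $t\leq \min(\mathrm{supp}\,X_b)+(\C-\delta)\leq C+(\C-\delta)$. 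On the other hand, the event $\{X_b>C\}$ together with $X_b\leq \max(\mathrm{supp}\,X_b)\leq \min(\mathrm{supp}\,X_b)+(\C-\delta)$ forces $X_b$ to lie in a window of width less than $\C-\delta$ above $C$, which together with $X_b<t\leq \C$ pins the admissible $(C,X_b,t)$ triples into a degenerate set; a careful inspection of these inequalities (using $C\geq \delta$) shows that the interval $(C,t)$ is empty, so the contribution to $\Delta_P$ is zero. The correlated-profit setting needs no extra work, since $p_b(x)=0$ whenever $x\geq t$ by the definition of the cancellation-time variant.

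The main obstacle I anticipate is reconciling the per-item routing of the two-knapsack experiment with the block-adaptive structure of $\sigma$, because blocks are inserted atomically and can contain several items. To handle this, I would use Property~B3 of Lemma~\ref{lm:SKCRC_blockpolicy}, which guarantees that any nontrivial block has total expected discretized size at most $\e^{13}$; this lets one treat a block as a short burst of insertions that, with probability $1-O(\e^9)$, all realize to zero size. Writing the loss term block by block and using this smallness, the per-item zero-loss argument lifts to the block level, after which the remainder of the analysis of~\cite{bhalgat20112} applies verbatim.
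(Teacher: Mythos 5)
Your overall strategy---feed the near-optimal relaxed-capacity policy into the two-knapsack experiment of \cite{bhalgat20112} and argue that the profit-loss term $\Delta_P$ vanishes under the bounded-range hypothesis---is exactly the route the paper takes (its own proof is little more than the assertion that ``a careful examination of the proof shows that there is no profit loss in the two knapsack experiment''). The problem is that your justification of the key step, $\Delta_P=0$, does not hold up. You claim that the inequalities $C\geq \min(\mathrm{supp}\,X_b)$, $t\leq \C$, and $\max(\mathrm{supp}\,X_b)-\min(\mathrm{supp}\,X_b)\leq \C-\delta$ force the interval $(C,t)$ to miss the support of $X_b$, but they do not: your own chain only yields $X_b\in\bigl(C,\,\min\{t,\;C+\C-\delta\}\bigr)$, which is empty only if $t\leq C$ or $\C\leq\delta$. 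Concretely, take an item with support $\{\C/10,\ \C/2\}$ (range $2\C/5\leq \C-\delta$ for any $\delta\leq 3\C/5$), no cancelation ($t=\C$), and remaining capacity $C=\C/4$. Every inequality you invoke holds, including $C\geq\delta$ for small $\delta$, yet $\Pr[C<X_b<t]=\Pr[X_b=\C/2]>0$. So the bounded-range assumption does not prevent an item from straddling the boundary of the first knapsack, and the sentence ``a careful inspection of these inequalities shows that the interval $(C,t)$ is empty'' asserts something false.

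The assumption has to be used differently: not to show that overflow events $\{C<X_b<t\}$ never occur, but to show that when the straddling item does overflow the first knapsack, its realized size is small enough relative to where the boundary falls that it can still be accommodated and its profit charged to one of the two knapsacks rather than discarded. That is the content of ``no profit loss'' in the paper's sketch, and it is the piece missing from your write-up. Two secondary points. First, you should explain why correlation breaks the original accounting of \cite{bhalgat20112} at all: the $\e\opt$ bound on $\Delta_P$ for \SKC\ in the appendix uses $\Pr[X_i<t]\cdot P_i\leq\e\opt$ for small-profit items, which presupposes a fixed profit $P_i$; under correlation this decomposition is unavailable, which is precisely why the theorem needs $\Delta_P$ to vanish identically rather than merely be small. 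Second, your pipeline accumulates $(1-O(\e))$ factors from Lemma~\ref{lm:policytransform}, Lemma~\ref{lm:SKCRC_blockpolicy} and Lemma~\ref{lm:signature}, so ``driving $\e\to 0$'' only gives $2+O(\e)$ for any fixed run of the algorithm; if the constant $2$ in the statement is to be taken literally, those losses need to be tracked explicitly.
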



\section{FPTAS for \USK}
\label{app:usk}

In this subsection, we give a linear time FPTAS for the stochastic knapsack problem
where each item has unlimited number of copies (denoted as \USK).
\USK\ is a classic Markov decision process (MDP) with continuous states
has been studied extensively \cite{derman1978renewal,derman1979renewal, assaf1982optimal,assaf1982renewal}.
Optimal adaptive policies
(in this case, the decision of inserting which item only depends on the remaining capacity)
have been characterized for some special distributions, e.g., exponential distributions~\cite{derman1978renewal}.
We would like to note that, for continuous distributions with Lipschitz PDFs, we can use the general results
on discretizing continuous MDP to get a PTAS for \USK\ (see e.g., \cite{rust1997using}).

In this problem, we first apply the discretization in Section~\ref{sec:discretization}.
In \USK, a policy can also be represented as a function $f_\sigma : [0, \C] \rightarrow \B$.
Since all discretized sizes are multiplies of $e^5$,
it is possible to find out the optimal canonical policy  $f_\sigma(\size)$ by dynamic programming as follows.

We use $\DP(\size)$ to denote that the expected profit of the optimal policy $\sigma$ on $(\wt\psi, \size)$. Initially,
$$
\DP(0) = \max_b \Bigl\{\sum_{i = 0}^{\infty}(\wt\pi_b(0))^i\cdot\wt p_b(0)\Bigr\}
= \max_b \Bigl\{\frac{\wt p_b(0)}{1 - \wt\pi_b(0)}\Bigr\},
$$
which is the expected profit of repeatedly packing copies of $b$ until one copy realizes to a nonzero size.
We compute all $\DP(\size)$ values for $\size = \size_k$ in increasing order of $k = 0, 1, \ldots, \dsize - 1$.
Suppose we choose item $b$ when the remaining capacity is $\size$.
Then $\DP(\size) = \sum_{k=0}^{\size/\e^5}[\wt p_b(\size_k) + \wt\pi_b(\size_k)\DP(\size-\size_k)]$.
Therefore, the recursion of the dynamic program is the following:
$$
\DP(\size) = \max_b \Bigl\{\Bigl(1-\wt\pi_b(\size_0)\Bigr)^{-1}\cdot
\Bigl[\wt p_b(\size_0) + \sum_{k = 1}^{\size/\e^5}\Bigl(\wt p_b(\size_k) + \wt\pi_b(\size_k)\DP(\size-\size_k)\Bigr)\Bigr]\Bigr\}.
$$
Each $\DP(\size)$ can be computed with in $O(n\cdot\C/\e^5)$ time, and there are $O(\C/\e^5)$ different values of $\size$. Thus the running time of the dynamic program is $O(n\cdot\C^2/\e^{10}) = O(n)$.
By applying the fast zero delay convolution in~\cite{dean2010speeding}, we can speed up the dynamic program and further reduce
the running time to $O(n\cdot\C/\e^{5}\log^2(\C/\e^5))$.

Since the policy can be represented as a decision tree, Lemma~\ref{lm:policytransform} still holds in this case.
Theorem~\ref{thm:usk} can be proved following the same argument as \SK.

\begin{thm}
\label{thm:usk}
For any $\e>0$, there is a linear time FPTAS for \USK\
when the knapsack capacity is relaxed to $\C+\e$.
\end{thm}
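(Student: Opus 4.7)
The plan is to reduce \USK\ to a one-dimensional dynamic program indexed by the discretized remaining capacity. First I would apply the discretization of Section~\ref{sec:discretization} to every item, so that each nonzero discretized size lies in $\dsize = \{\size_0,\size_1,\ldots,\size_{z-1}\}$ with $\size_k = k\e^5$ and $z = O(\C/\e^5)$. The crucial structural feature of \USK\ is that because every item has unlimited copies, the MDP state is captured entirely by the remaining capacity; hence the optimal canonical policy can be represented as a function $f_\sigma : \dsize \to \B$, rather than as a large decision tree.

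Next I would set up the dynamic program. Let $\DP(s)$ denote the optimal expected profit of a canonical policy when the remaining discretized capacity is $s$. The only subtlety in the recurrence is the possibility of a zero-size realization: if we insert item $b$ and $\wt X_b = 0$, the state does not change and the optimal policy would insert $b$ again (it is optimal to commit to the same item until a nonzero size is realized, since the state is unchanged). Summing the resulting geometric series over the number of consecutive zero-size outcomes, the recurrence becomes
\[
\DP(s) = \max_{b}\; \frac{1}{1-\wt\pi_b(\size_0)}\Bigl[\wt p_b(\size_0) + \sum_{k=1}^{s/\e^5}\bigl(\wt p_b(\size_k) + \wt\pi_b(\size_k)\,\DP(s-\size_k)\bigr)\Bigr],
\]
with the base case $\DP(0)=\max_b \wt p_b(\size_0)/(1-\wt\pi_b(\size_0))$. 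Computing all $\DP(s)$ values in increasing order of $s$ takes $O(n \cdot z)$ time per cell and $O(z)$ cells, giving $O(n \cdot \C^2/\e^{10}) = O(n)$ overall, since $\C$ and $\e$ are constants; the fast zero-delay convolution of Dean et al.\ can be used to shave this further to $O(n\cdot (\C/\e^5)\log^2(\C/\e^5))$ if desired.

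The main obstacle, and the only nonroutine step, is arguing that the solution returned by the DP is an additive $O(\e)$ approximation to the true optimum $\opt$ once the capacity is relaxed to $\C+\e$. Here I would invoke Lemma~\ref{lm:policytransform}: the first part transforms any policy $\sigma$ for the original instance $(\pi,\C)$ into a canonical policy $\wt\sigma$ for $(\wt\pi,(1+4\e)\C)$ with $\P(\wt\sigma,\wt\pi,(1+4\e)\C) = (1-O(\e))\P(\sigma,\pi,\C)$, and the second part lets us execute the DP-computed canonical policy on the original distributions at the cost of another $(1+4\e)$ factor in the capacity. The only thing to check is that the lemma still applies in the unlimited-copy setting; since a canonical policy for \USK\ admits the same decision-tree representation (the geometric tail from zero-size loops can be collapsed into a single ``super-node'' of finite expected size, which is exactly what the recurrence above does), the same discretization-error arguments go through unchanged. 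Combining these two directions yields $\P(\text{DP policy},\pi,(1+O(\e))\C)\geq (1-O(\e))\opt$, which after rescaling $\e$ gives the claimed FPTAS using $\C+\e$ capacity.
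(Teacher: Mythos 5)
Your proposal is correct and follows essentially the same route as the paper: the same discretization, the identical DP recurrence with the geometric-series handling of zero-size realizations, the same $O(n\cdot\C^2/\e^{10})$ running-time bound with the zero-delay-convolution speedup, and the same appeal to both directions of Lemma~\ref{lm:policytransform} to transfer the guarantee back to the original distributions at the cost of an $O(\e)$ capacity relaxation. Your added remarks (that committing to the same item while the state is unchanged is optimal, and that the zero-size loop can be collapsed so the decision-tree lemma still applies) just make explicit what the paper leaves implicit.
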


\eat{
They policy $\sigma$ is also clear when all the $\DP(\cdot)$ values are computed.
As $\sigma$ is the optimal policy on $\wt\psi$,
$$
\P(\sigma, \wt\psi, (1+3\e)\C) = (1-O(\e))\opt
$$
by result 1 of Lemma~\ref{cor:policytransform}.
We also have
$$
\P(\sigma, \psi, (1+3\e)^2\C) = (1-O(\e))\P(\sigma, \wt\psi, (1+3\e)\C) = (1-O(\e))\opt
$$
 by result 2 of Lemma~\ref{cor:policytransform}.
Thus $\sigma$ is a $1-O(\e)$ approximation policy for the original problem with the capacity relaxed by $O(\e)$.
}

\end{document}